\definecolor{mygreen}{RGB}{00,255,180}
\newlength{\hatchspread}
\newlength{\hatchthickness}
\newlength{\hatchshift}
\newcommand{\hatchcolor}{}
\tikzset{hatchspread/.code={\setlength{\hatchspread}{#1}},
	hatchthickness/.code={\setlength{\hatchthickness}{#1}},
	hatchshift/.code={\setlength{\hatchshift}{#1}},%
	hatchcolor/.code={\renewcommand{\hatchcolor}{#1}}}
\tikzset{hatchspread=3pt,
	hatchthickness=0.4pt,
	hatchshift=0pt,%
	hatchcolor=black}
\newcommand{\nr}{\operatorname{nr}}
\newcommand{\BC}{\operatorname{BC}}
\newcommand{\CBC}{\overline{\operatorname{BC}}}
\newcommand{\CB}{\overline{\operatorname{B}}}
\newcommand{\boundary}{\partial}
\newcommand{\connected}{\operatorname{connected}}
\newcommand{\cid}{\operatorname{cid}}
\newcommand{\meet}{\operatorname{meet}}
\newcommand{\poly}{\operatorname{poly}}
\newcommand{\polylog}{\operatorname{polylog}}
\newcommand{\splitnode}{\operatorname{split}}
\newcommand{\contract}{\operatorname{contract}}
\newcommand{\deleteedge}{\operatorname{delete}}
\newcommand{\parent}{\operatorname{parent}}
\newcommand{\children}{\operatorname{children}}
\newcommand{\firstchild}{\operatorname{first-child}}
\newcommand{\nextsibling}{\operatorname{next-sibling}}
\newcommand{\nca}{\operatorname{nca}}
\newcommand{\ca}{\operatorname{ca}}
\newcommand{\firstonpath}{\operatorname{first-on-path}}
\newcommand{\apex}{\operatorname{apex}}
\newcommand{\LOinsert}{\operatorname{insert}}
\newcommand{\LOdelete}{\operatorname{delete}}
\newcommand{\LOorder}{\operatorname{order}}
\newcommand{\LOsucc}{\operatorname{succ}}
\newcommand{\LOpred}{\operatorname{pred}}
\newcommand{\CAaddleaf}{\operatorname{add-leaf}}
\newcommand{\CAdeleteleaf}{\operatorname{delete-leaf}}
\newcommand{\rank}{\operatorname{rank}}
\newcommand{\nil}{\mathbf{nil}}
\newcommand{\false}{\mathbf{false}}
\newcommand{\true}{\mathbf{true}}
\DeclarePairedDelimiter{\abs}{\lvert}{\rvert}
\DeclarePairedDelimiter{\ceil}{\lceil}{\rceil}
\DeclarePairedDelimiter{\floor}{\lfloor}{\rfloor}
\DeclarePairedDelimiter{\set}{\lbrace}{\rbrace}
\newcommand{\cond}{\mathrel{}\middle\vert\mathrel{}}
\DeclarePairedDelimiter{\Paren}{(}{)}
\declaretheorem[name={Theorem},style=plain]{theorem}
\declaretheorem[name={Lemma},sibling=theorem]{lemma}
\declaretheorem[name={Corollary},sibling=theorem]{corollary}
\declaretheorem[name={Observation},sibling=theorem]{observation}
\declaretheorem[name={Definition},sibling=theorem,style=definition]{definition}
\title{Good $r$-divisions Imply Optimal Amortized Decremental Biconnectivity}
\author[1]{Jacob Holm}
\author[2]{Eva Rotenberg}
\affil[1]{University of Copenhagen \hspace{1em}{\small \href{mailto:jaho@di.ku.dk}{jaho@di.ku.dk}}}
\affil[2]{Technical University of Denmark \hspace{1em}{\small \href{mailto:eva@rotenberg.dk}{erot@dtu.dk}}}
\date{}
\begin{document}
\thispagestyle{empty}
\maketitle

\begin{abstract}
    We present a data structure that, given a graph $G$ of $n$ vertices and $m$ edges, and a suitable pair of nested $r$-divisions of $G$, preprocesses $G$ in $O(m+n)$ time and handles any series of edge-deletions in $O(m)$ total time while answering queries to pairwise biconnectivity in worst-case $O(1)$ time. In case the vertices are not biconnected, the data structure can return a cutvertex separating them in worst-case $O(1)$ time.

    As an immediate consequence, this gives optimal amortized decremental  biconnectivity, 2-edge connectivity, and connectivity for large classes of graphs, including planar graphs and other minor free graphs. 
\end{abstract}

\thispagestyle{empty}
\newpage
\setcounter{page}{1} 

\section{Introduction}

\emph{Dynamic graph problems} concern maintaining information about a graph, as it undergoes changes. In this paper, the changes we allow are deletions of edges or vertices by an adaptive adversary. The information we maintain is a representation that reflects biconnectivity of vertices, that is, whether they are connected after the removal of any vertex of the graph.

A static (non-changing) graph may in $O(n+m)$ time be pre-processed to answer biconnectivity queries in worst-case $O(1)$ time. This is done by finding the \emph{blocks}, i.e. the biconnected components. We show, for a large class of graphs including minor free graphs, that in the same asymptotic total time, we can handle any sequence of edge- and vertex deletions, while still answering biconnectivity queries, 2-edge connectivity queries, and connectivity queries, in worst-case $O(1)$ time. 

If a pair of vertices are not biconnected, then there exists a certificate for this %
in form of a cutvertex separating them. A natural question, if a pair of vertices are not biconnected, is thus to ask for such a certificate. There may be many cutvertices separating a pair of vertices, so an even more advanced and desired functionality is the ability to point to the one nearest to a specified one of them. Again, for a large class of graphs, our running time for a \emph{decremental} graph matches the state of the art for \emph{non-changing graphs}, by revealing the nearest cutvertex in $O(1)$ worst-case time, while processing deletions in $O(n+m)$ total time. 

When edges and vertices are both deleted and inserted, there are non-trivial lower bounds~\cite{patrascu06} saying that no data structure for connectivity has both update- and query-time in $o(\log n)$. 
This is in stark contrast to the incremental situation, where only edge-insertions are allowed, in which the $\alpha$-time algorithm for union-find is tight~\cite{Tarjan:75,Fredman:89}. 
When restricted to deletions, however, even for general graphs, there are no known lower bounds beyond the trivial $O(|G|)$. 
The research in this paper is inspired by the fundamental open question of whether decremental (deletion-only) connectivity~\cite{Thorup97}, $2$-edge connectivity, biconnectivity, or even minimum cut for general graphs can be solved in amortized constant time per edge-deletion, or whether non-trivial lower bounds do exist. 

The following table (Figure~\ref{fig:table}) shows how we improve state-of-the-art for planar graphs and minor-free graphs. Here, we present maximum time \emph{per operation}, that is, we do not require $O(1)$ query time. When restricted to constant query time, the best biconnectivity algorithms for non-planar sparse graphs were fully dynamic and had an update time of $\tilde{O}(\sqrt n)$~\cite{Henzinger95}.

\begin{figure}[h]\label{fig:table}
	\begin{center}
\bgroup 
\def\arraystretch{1.5}%
\begin{tabular}{| l | c | c | c | c |}
\hline
& planar & bnd. genus & 
\multicolumn{2}{c|}{minor-free graphs} \\
& \emph{previous} & \emph{previous} & \multicolumn{1}{c}{\emph{previous}} & \textbf{now} \\
\hline
connectivity & $O(1)$~\cite{Lacki15} 
& $O(\log n)$~\cite{Eppstein:2003} &  $\tilde{O}(\log^2 n)$ \cite{Wulff-Nilsen16a}& $\mathbf{O(1)}$ \\
\hline
2-edge-connectivity & $O(1)$~\cite{Holm17} & \multicolumn{2}{r|}{ $\tilde{O}(\log ^2 n)$ \cite{Holm18a}} & $\mathbf{O(1)}$ \\
\hline
biconnectivity & $O(\log n)$~\cite{Holm17} &  \multicolumn{2}{r|}{ $\tilde{O}(\log^3 n )$ \cite{Holm18a}} & $\mathbf{O(1)}$  \\
\hline
\end{tabular}
\egroup 
	\end{center}
	\caption{Our improvements (\textbf{now}) in relation to previous results (\emph{previous}). The table shows amortized time per operation. 
	The table compares with state-of-the-art amortized deterministic algorithms. Allowing randomization, the best decremental connectivity algorithm runs in time $\tilde{O}(\log n)$~\cite{Thorup:2000}.}
\end{figure}

\paragraph{Dynamic graph connectivity} has been studied for decades. Most general is fully dynamic connectivity for general graphs~\cite{Frederickson85, HeTh97, Henzinger:1999, Holm:2001, Thorup:2000, Kapron:2013, HuangHKP17, Wulff-Nilsen16a, kejlbergrasmussen_et_al:LIPIcs:2016:6395, NSW17}, where edges are allowed to be both inserted and deleted. Similarly, fully dynamic two-edge connectivity and biconnectivity have been studied~\cite{Frederickson97,Henzinger95,Eppstein97,Henzinger97,Holm:2001,Thorup:2000,Holm18a} and have algorithms with polylogarithmic update- and query time. 

For special graph classes, such as planar graphs, graphs of bounded genus, and minor-free graphs, there has been a bulk of work on connectivity and higher connectivity, e.g.~\cite{EppItaTam92,Hershberger94,Giammarresi:96,Gustedt98,Eppstein99,Lacki2011,Lacki15,Holm17,Holm18b}. Our paper can be seen as a generalization of and improvement upon \cite{Lacki15}, who showed optimal amortized decremental connectivity for planar graphs, that is, amortized constant update time, and worst-case constant query time.

\paragraph{An $r$-division is,} intuitively, a family of $O(n/r)$ subgraphs  called the \emph{regions}, with $O(r)$ vertices each, such that the regions partition the edges, and each region shares %
$O(\sqrt{r})$ \emph{boundary vertices} with the rest of the graph.
The concept of $r$-divisions was introduced in~\cite{Frederickson:87}
as a tool for finding shortest paths in planar graphs. It naturally
generalizes the notion of a separator: a small set of vertices that
cause the graph to fall apart into \emph{two} regions, each containing
a constant fraction of the original graph~\cite{LiptonTarjan}.

Later, Henzinger et al.~\cite{HENZINGER19973} generalized this to the
concept of a \emph{strict $(r,s)$-division}, which is a family of
$O(n/r)$ subgraphs called the regions, with at most $r$ vertices
each, that partition the edges, and where each region has at most $s$
boundary vertices.  An $r$-division is thus a strict $(r,s)$-division with
$s=O(\sqrt{r})$.

\begin{sloppypar}
There are linear time algorithms for finding $r$-divisions for planar
graphs~\cite{Goodrich,Klein:2013} and
strict $(r,O(r^{1-\varepsilon}))$-divisions (for any sufficiently small
$\varepsilon>0$) for minor-free graphs using techniques
from~\cite{reed:hal-01184376,DBLP:journals/dam/TazariM09,Wulff-Nilsen11}\footnote{This result was first claimed by Henzinger et al.~\cite{HENZINGER19973}, but their solution only worked for planar or bounded degree $h$-minor-free graphs.  For general $h$-minor-free graphs the running time in~\cite{DBLP:journals/dam/TazariM09} is proportional to $2^{O(h^2)}$. This can be improved to $2^{O(h)}$ using techniques from~\cite{Wulff-Nilsen11}.}. See Appendix~\ref{sec:r-div}.
\end{sloppypar}

For the rest of our paper, we will often ignore these distinctions and just use the term $r$-division to mean any strict $(r,O(r^{1-\epsilon}))$-division for some suitable $r,\varepsilon$.

\paragraph{Our results.} We give a data structure for maintaining biconnectivity for a large class of graphs. In order to state our theorem in its fullest generality, we need to define what it means for a pair of $r$-divisions to be \emph{a suitable pair}.

Given a graph $G$ with $n$ vertices, we call a pair $(\mathcal{A},\mathcal{R})$ where $\mathcal{A}$ is a
strict $(r_1,s_1)$-division and $\mathcal{R}$ is a strict
$(r_2,s_2)$-division \emph{a suitable pair of $r$-divisions} if:
\begin{itemize}
\item there exists an algorithm for fully dynamic biconnectivity in general graphs with amortized time $t(n)$ per operation\footnote{e.g. $t(n)=O(\log^5n)$ using~\cite{Holm:2001}, and $t(n)=O(\log^3n\cdot\log^2\log n)$ using~\cite{Holm18a}}, such that:
\item each boundary vertex of $\mathcal{A}$ is also a boundary vertex of $\mathcal{R}$ ($\boundary\mathcal{A}\subseteq\boundary\mathcal{R}$); and
\item for each region $A\in\mathcal{A}$, $\mathcal{R}$ contains a
  partition of $A$ into $O(\frac{r_1}{r_2})$ regions of size at most
  $r_2$, each having at most $s_2$ boundary vertices\footnote{This is
    slightly weaker than requiring $\mathcal{R}$ to contain a strict
    $(r_2,s_2)$-division of $A$}; and
\item $r_1,s_1\in O(\poly(\log n))$ and $\frac{r_1}{s_1}\in \Omega(t(n)\log n)$; and
\item $r_2,s_2\in O(\poly(\log\log n))$ and $\frac{r_2}{s_2}\in \Omega(t(r_1)\log r_1)$.
\end{itemize}

Our data structure answers queries to biconnectivity, i.e, a pair of vertices are biconnected if they are connected and not separated by any bridge or cutvertex. If the vertices $u$ and $v$ are connected but not biconnected, we can output a cutvertex separating them, in fact, we can output that of the possibly many cutvertices that is \emph{nearest} to $u$ -- we call this the \emph{nearest cutvertex} -- or detect the special case where $uv$ is a bridge. 

\begin{theorem}\label{thm:main}
	There exists a data structure that given a graph $G$ with $n$ vertices and $m$ edges, and given a suitable pair of $r$-divisions,  preprocesses $G$ in $O(m+n)$ time and handles any series of edge-deletions in $O(m)$ total time while answering queries to pairwise biconnectivity and queries to nearest cutvertex in $O(1)$ time.
\end{theorem}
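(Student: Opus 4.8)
The plan is to build a three-level hierarchical structure mirroring the nested divisions: the coarse regions of $\mathcal{A}$, refined by the regions of $\mathcal{R}$, refined finally into individual edges. At each internal level I maintain a \emph{skeleton} graph obtained from the level below by replacing every child region with a small \emph{biconnectivity certificate} living on that region's boundary vertices, and I run the assumed fully dynamic biconnectivity algorithm as a black box on this skeleton purely to \emph{detect} how the certificate evolves. The point of contracting is that the skeletons are far smaller than $G$: the top skeleton $\hat G$ (one certificate per $\mathcal{A}$-region) has $O(\frac{n}{r_1}s_1)$ vertices, and the skeleton $\hat A$ inside a single $\mathcal{A}$-region $A$ (one certificate per $\mathcal{R}$-subregion of $A$) has $O(\frac{r_1}{r_2}s_2)$ vertices. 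The $\mathcal{R}$-regions themselves have size $r_2\in O(\poly(\log\log n))$, so the entire decremental behaviour of such a region — its current edge set, its block-cut structure, and its answers to local queries — can be encoded in $o(\log n)$ bits and precomputed into lookup tables of size $n^{o(1)}$, giving $O(1)$ per deletion and per query at the bottom without invoking the black box at all (the reason $r_1$ alone cannot be tabulated, and a second division is needed, is that $\poly(\log n)$-sized regions would need $\omega(\log n)$ bits).

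The heart of the construction is the certificate together with a monotonicity lemma. For a region with boundary $\boundary R$ of size $s$ I would take the block-cut tree of $R$, restrict it to the minimal subtree spanning $\boundary R$, suppress degree-two block nodes, and realize the resulting $O(s)$-node tree as an explicit small graph on $O(s)$ vertices whose own block-cut tree reproduces the biconnectivity relation among the boundary vertices. The crucial claim to prove is that, over the entire decremental sequence, this certificate changes only $O(s)$ times (up to a $\polylog$ factor): because edge deletions only split biconnected components, the block-cut tree refines monotonically, so the reduced tree on the $s$ marked boundary vertices can gain only $O(s)$ branch nodes and lose only $O(s)$ boundary-connections before stabilising. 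Hence a region emits few certificate updates to its parent skeleton regardless of how many deletions fall inside it, and each such update is a constant number of elementary edge operations on the parent.

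The running-time bound then reduces to arithmetic calibrated exactly by the two density conditions. Each $\mathcal{A}$-region emits $O(s_1\log n)$ elementary updates to $\hat G$ over the whole sequence, and each costs $t(\abs{\hat G})\le t(n)$ in the black box (as $\abs{\hat G}\le n$), so the top level costs $O(n\cdot\frac{s_1}{r_1}t(n)\log n)$, which is $O(n)$ precisely because $\frac{r_1}{s_1}\in\Omega(t(n)\log n)$. Symmetrically, inside a single $A$ each $\mathcal{R}$-subregion emits $O(s_2\log r_1)$ updates to $\hat A$ at cost $t(\abs{\hat A})\le t(r_1)$ each, for a per-region total of $O(r_1\cdot\frac{s_2}{r_2}t(r_1)\log r_1)=O(r_1)$ by $\frac{r_2}{s_2}\in\Omega(t(r_1)\log r_1)$; summed over the $O(\frac{n}{r_1})$ regions this is again $O(n)$. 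The bottom level contributes $O(1)$ per deletion by table lookup, and all tables, block-cut trees, and black-box instances can be initialized in $O(m+n)$ time, giving the claimed preprocessing and $O(m)$ total deletion time (graphs admitting such divisions are sparse, so $m=O(n)$).

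To answer queries in worst-case $O(1)$ I cannot route them through the black box, whose query time is only $t(\cdot)$; instead I maintain, alongside each skeleton, an explicit representation of its block-cut tree supporting constant-time ``are $u,v$ biconnected'' and ``first cutvertex on the path from $u$ toward $v$'' (nearest-cutvertex) tests, updated lazily from the certificate changes within the same amortized budget, and a query $(u,v)$ is resolved by locating $u,v$ in the hierarchy and combining at most one bottom-level, one mid-level, and one top-level lookup. I expect the main obstacles to be correctness rather than accounting: first, designing the certificate so that biconnectivity in the skeleton faithfully equals biconnectivity in $G$ even around boundary vertices shared between several regions — where a vertex may be a cutvertex globally while biconnected locally, or vice versa — and so that the nearest-cutvertex witness is recoverable across levels; and second, proving the $O(s)$ monotone-change bound rigorously for the block-cut structure rather than merely for connectivity, since biconnectivity is not an equivalence relation and its restriction to the boundary is a tree, not a partition.
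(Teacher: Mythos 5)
Your architecture coincides with the paper's own: your ``certificate'' (the BC-tree restricted to the Steiner tree of the boundary, with degree-two nodes suppressed) is exactly the paper's compressed BC-forest $\CBC(R,\boundary R)$, your ``skeleton'' is its patchwork graph, and the bottom-level tabulation and the two arithmetic calibrations are identical. The trouble is that the two ``obstacles'' you defer at the end are not finishing touches; they are the substance of the proof, and the sketch you give for one of them fails at the level where it matters. Your $O(s)$-change bound rests on ``edge deletions only split biconnected components, so the block-cut tree refines monotonically.'' That argument is only available at the bottom level, where the updates really are edge deletions of $G$. At the middle and top levels the updates a skeleton receives from its children are \emph{not} edge deletions: they are path deletions, block (vertex) splits, and pseudoblock contractions, and a contraction is not a refinement, so monotonicity of the block structure is simply not an invariant there. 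What the paper actually proves is a closure property: it fixes a repertoire of exactly these three operations, shows that applying any one of them to a region induces only operations of the same three kinds on $\BC$ and $\CBC$ one level up (Lemmas~\ref{lem:explicitBC} and~\ref{lem:explicitCBC}), and bounds the number emitted by a charging argument (each visible block split either decreases the degree of an existing block or creates a pseudoblock, and a pseudoblock is never re-created in the same direction), plus a smaller-half argument for distributing a split vertex's edges --- which is where the $\log$ factor in your ``$O(s_1\log n)$ elementary updates'' actually has to come from. Without this closure-plus-charging lemma the recursion does not compose, and your accounting has nothing to stand on.

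The query side has a second genuine gap. As deletions proceed, more and more vertices become disposable: they lie on no boundary-to-boundary path of their region, so they have no representative in the certificate at all (a single path deletion can make every interior vertex of the path disposable at once). For such a vertex $u$ you cannot resolve a biconnectivity or nearest-cutvertex query against a vertex outside its region by ``one lookup per level,'' because the skeleton does not know $u$. The paper handles this by maintaining, for every vertex, its nearest represented vertex $\nr(\cdot)$ and its representative $\CB(\cdot)$ under path deletions and pseudoblock contractions, using weighted quick-find union-find to get $O(1)$ worst-case lookup within an $O(n\log n)$ total merging budget; the query algorithm (Lemma~\ref{lem:preservebiconII} and Lemma~\ref{lem:nearest}) is a case analysis over exactly these quantities. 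Separately, ``an explicit representation of the block-cut tree supporting constant-time first-cutvertex tests, updated lazily within the same amortized budget'' is itself a nontrivial data structure: answering first-on-path queries in worst-case $O(1)$ on a tree subject to vertex splits and edge contractions is the content of the paper's entire Appendix~\ref{sec:first-on-path} (a characteristic-ancestors structure over heavy paths controlled by a light-depth potential). Neither of these can be waved through as maintenance details.
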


This can immediately be combined with any algorithm for finding suitable $r$-divisions in linear time, to obtain optimal decremental biconnectivity data structures for graphs that are e.g. planar, bounded genus, or minor free.

The data structure is easily extended to maintain information about connectivity, so as to answer queries to pairwise connectivity in $O(1)$ time, and our techniques can easily be used to obtain a decremental data structure for $2$-edge connectivity with the same update- and query times.

\subsection{Techniques}
Since the property of being an $r$-division is not violated as edges are deleted, it is natural to want to use $r$-divisions to get better decremental data structures for graphs. The idea is to have a top-level graph with size only proportional to the number of boundary vertices, and to handle the regions efficiently simply because they are smaller.

With biconnectivity, the first challenge is to design the top-level graph: a vertex may be not biconnected to any boundary vertex in its region, but yet be biconnected with some other vertex in an other region via two separate boundary vertices (see Figure~\ref{fig:sausages}). Even vertices from the same region may be biconnected in $G$ although they are not biconnected, or even connected, within the region.

\begin{figure}[h]
\begin{tikzpicture}[x=0.5cm,y=0.6cm,scale=0.25]
  \begin{scope}[
      every path/.style={
      },
      every node/.style={
        font=\tiny,
        text=white,
        inner sep=1pt,
      },
      every label/.style={
        label distance=2mm,
      },
      vertex set/.style={
        dashed,
      },
      vertex/.style={
        draw,
        circle,
        fill=white,
        minimum size=1.1mm,
        inner sep=0pt,
        outer sep=0pt,
      },
      boundary vertex/.style={vertex,fill=mygreen,draw=black!50!mygreen,minimum size=1.1mm,ultra thin},
      edge/.style={blue,thick},
      undirected edge/.style={edge},
      directed edge/.style={edge,->,>=stealth'},
      smalldot/.style={
      	draw,
      	circle,
      	fill=black,
      	text=black,
	    scale=1,
      	inner sep=0pt,
      	outer sep=0pt,
      },      
    ]

    \node[boundary vertex,label={below:$a$}] (a) at (-.5,-.5) {};
    \node[boundary vertex,label={below:$a'$}] (a') at (1.5,.5) {};
    \node[boundary vertex,label={below:$b$}] (b) at (-.2,-10.2) {};
    \node[boundary vertex,label={below:$b$}] (b') at (.2,-11.8) {};
    \node[vertex,label={below:$c$}] (c) at (4,-2) {};
    \node[vertex,label={below:$c'$}] (c') at (4,-10) {};
    \node[vertex,label={above:$d$}] (d) at (8,-3) {};
    \node[vertex,label={below:$d'$}] (d') at (8,-9) {};
    \node[vertex,label={above right:$e$}] (e) at (12,-4) {};
    \node[vertex,label={below:$e'$}] (e') at (12,-8) {};
    \node[vertex,label={below:$f$}] (f) at (16,-5) {};
    \node[vertex,label={below:$f'$}] (f') at (16,-7) {};
    \node[vertex,label={below:$g$}] (g) at (11,-9.5) {};
    \node[vertex,label={below:$h$}] (h) at (19,-6) {};
    \node[vertex,label={above:$j$}] (j) at (23,-6) {};
    \node[vertex,label={above:$k$}] (k) at (27,-6) {};
    \node[vertex,label={above:$l$}] (l) at (31,-6) {};
    \node[boundary vertex,label={above:$n$}] (n) at (34,-3) {};
    \node[vertex,label={above:$n'$}] (n') at (34,-9) {};
    \node[vertex,label={above:$o'$}] (o') at (38,-10) {};
    \node[vertex,label={above:$p'$}] (p') at (42,-11) {};
    \node[boundary vertex,label={above:$p$}] (p) at (45,-11.5) {};
    \node[vertex,label={above:$q$}] (q) at (13,-12) {};
    \node[vertex,label={above:$r$}] (r) at (16,-13.5) {};
    \node[boundary vertex,label={above:$s$}] (s) at (20,-23.5) {};
    \node[boundary vertex,label={above:$t$}] (t) at (21,-22) {};
    \node[vertex,label={above:$u$}] (u) at (23,-20) {};
    \node[vertex,label={above:$v$}] (v) at (26,-19) {};
    \node[vertex,label={above:$w$}] (w) at (30,-19) {};
    \node[vertex,label={above:$x$}] (x) at (33,-20) {};
    \node[boundary vertex,label={above:$y$}] (y) at (35.1,-21.3) {};
    \node[boundary vertex,label={above:$y$}] (y) at (34,-21.5) {};
	\node[smalldot] (X) at (14.6,-12.5) {};
	\node[text=black] (X') at (16.2,-11.6) {\normalsize $x$};
	\node[smalldot] (Y) at (25.4,-6.2) {};
	\node[text=black] (Y') at (26.5,-8) {\normalsize $y$};
	\node[smalldot] (Z) at (28,-18.7) {};
	\node[text=black] (Z') at (28.2,-17.1) {\normalsize $z$};

    \begin{scope}
      \path[use as bounding box] (-4,1.5) rectangle (30.3,-15.1);
      \draw (c) .. controls (-10,-4) and (4,7) .. (c);
      \draw (c) to[bend left=50] (d);
      \draw (d) to[bend left=50] (c);
      \draw (c') to[bend left=50] (d');
      \draw (d') to[bend left=50] (c');
      \draw (d) to[bend left=50] (e);
      \draw (e) to[bend left=50] (d);
      \draw (d') to[bend left=50] (e');
      \draw (e') to[bend left] (g);
      \draw (g) to[bend left] (d');
      \draw (e) to[bend left=50] (f);
	  \draw (f) to[bend left=50] (e);
      \draw (e') to[bend left=50] (f');
      \draw (f') to[bend left=50] (e');
      \draw (f) to[bend left] (h);
      \draw (h) to[bend left] (f');
      \draw (f') to[bend left] (f);
      \draw (h) to[bend left=50] (j);
      \draw (j) to[bend left=50] (h);
      \draw (k) to[bend left=50] (j);
      \draw (j) to[bend left=50] (k);
      \draw (k) to[bend left=50] (l);
      \draw (l) to[bend left=50] (k);
      \draw (l) to[bend left=50] (n);
      \draw (n) to[bend left=50] (l);
      \draw (l) to[bend left=50] (n');
      \draw (n') to[bend left=50] (l);
      \draw (n) .. controls (40,-4) and (38,3) .. (n);
      \draw (o') to[bend left=50] (n');
      \draw (n') to[bend left=50] (o');
      \draw (o') to[bend left=50] (p');
      \draw (p') to[bend left=50] (o');
      \draw (p') .. controls (47,-9) and (48,-15) .. (p');
      \draw (g) to[bend left=60] (q);
      \draw (q) to[bend left=60] (g);
      \draw (r) to[bend left=50] (q);
      \draw (q) to[bend left=50] (r);
      \draw (q) .. controls (9,-14) and (13,-17) .. (q);
      \draw (r) .. controls (20,-14) and (17,-19) .. (r);
      \draw (t) .. controls (21,-27) and (16,-22) .. (t);
      \draw (t) to[bend left=60] (u);
      \draw (u) to[bend left=60] (t);
      \draw (u) to[bend left=50] (v);
      \draw (v) to[bend left=50] (u);
      \draw (w) to[bend left=45] (v);
      \draw (v) to[bend left=45] (w);
      \draw (w) to[bend left=45] (x);
      \draw (x) to[bend left=45] (w);
      \draw (x) .. controls (40,-21) and (33,-25) .. (x);
      \draw (c') .. controls (-5,-6) and (-2,-18) .. (c');
    \end{scope}
  \end{scope}
\end{tikzpicture}%
        \hfill%
\begin{tikzpicture}[x=0.5cm,y=0.6cm,scale=0.25]
  \begin{scope}[
      every path/.style={
      },
      every node/.style={
        font=\tiny,
        text=white,
        inner sep=1pt,
      },
      every label/.style={
        label distance=2mm,
      },
      vertex set/.style={
        dashed,
      },
      vertex/.style={
        draw,
        circle,
        fill=white,
        minimum size=1.1mm,
        inner sep=0pt,
        outer sep=0pt,
      },
      boundary vertex/.style={vertex,fill=mygreen,draw=black!50!mygreen,minimum size=1.1mm,ultra thin},
      edge/.style={blue,thick},
      undirected edge/.style={edge},
      directed edge/.style={edge,->,>=stealth'},
      smalldot/.style={
      	draw,
      	circle,
      	fill=black,
      	text=black,
	    scale=1,
      	inner sep=0pt,
      	outer sep=0pt,
      },      
    ]

    \node[boundary vertex,label={below:$a$}] (a) at (-.5,-.5) {};
    \node[boundary vertex,label={below:$a'$}] (a') at (1.5,.5) {};
    \node[boundary vertex,label={below:$b$}] (b) at (-.2,-10.2) {};
    \node[boundary vertex,label={below:$b$}] (b') at (.2,-11.8) {};
    \node[vertex,label={below:$c$}] (c) at (4,-2) {};
    \node[vertex,label={below:$c'$}] (c') at (4,-10) {};
    \node[vertex,label={below:$f$}] (f) at (16,-5) {};
    \node[vertex,label={below:$f'$}] (f') at (16,-7) {};
    \node[vertex,label={below:$h$}] (h) at (19,-6) {};
    \node[vertex,label={above:$l$}] (l) at (31,-6) {};
    \node[boundary vertex,label={above:$n$}] (n) at (34,-3) {};
    \node[boundary vertex,label={above:$p$}] (p) at (45,-11.5) {};
    \node[boundary vertex,label={above:$s$}] (s) at (20,-23.5) {};
    \node[boundary vertex,label={above:$t$}] (t) at (21,-22) {};
    \node[vertex,label={above:$x$}] (x) at (33,-20) {};
    \node[boundary vertex,label={above:$y$}] (y) at (35.1,-21.3) {};
    \node[boundary vertex,label={above:$y$}] (y) at (34,-21.5) {};
	\node[smalldot] (X) at (14.6,-12.5) {};
	\node[text=black] (X') at (16.2,-11.6) {\normalsize $x$};
	\node[smalldot] (Y) at (25.4,-6.2) {};
	\node[text=black] (Y') at (26.5,-8) {\normalsize $y$};
	\node[smalldot] (Z) at (28,-18.7) {};
	\node[text=black] (Z') at (28.2,-17.1) {\normalsize $z$};

    \begin{scope}
      \path[use as bounding box] (-4,1.5) rectangle (30.3,-15.1);
      \draw (c) .. controls (-10,-4) and (4,7) .. (c);
      \draw[dashed] (c) to[bend left=20] (f);
      \draw[dashed] (f) to[bend left=20] (c);
      \draw[dashed] (c') to[bend left=20] (f');
      \draw[dashed] (f') to[bend left=20] (c');
      \draw (f) to[bend left] (h);
      \draw (h) to[bend left] (f');
      \draw (f') to[bend left] (f);
      \draw[dashed] (h) to[bend left=20] (l);
      \draw[dashed] (l) to[bend left=20] (h);
      \draw (l) to[bend left=50] (n);
      \draw (n) to[bend left=20] (l);
      \draw[dashed] (l) to[bend right=10] (41.6,-12.5) .. controls (48,-15) and (47,-9.5)  .. (44.5,-9.5) to[bend right=10] (l);
      \draw (t) .. controls (21,-27) and (16,-22) .. (t);
      \draw[dashed] (t) to[bend left=60] (x);
      \draw[dashed] (x) to[bend right=10] (t);
      \draw (x) .. controls (40,-21) and (33,-25) .. (x);
      \draw (c') .. controls (-5,-6) and (-2,-18) .. (c');
    \end{scope}
  \end{scope}
\end{tikzpicture}%
	\caption{{\small Left: A region $R$ with $10$ boundary vertices (green). There is a vertex separating $x$ from the boundary, so $x$ is never biconnected with anything in $G\setminus R$. Vertices $y$ and $z$ however, are not even connected in $R$, but may be biconnected in $G$. Right: The structure may be compressed in the sense depicted: $x$ is not represented at all, while $y$ and $z$ are represented in pseudo-blocks (dashed).} \label{fig:sausages}}
\end{figure}
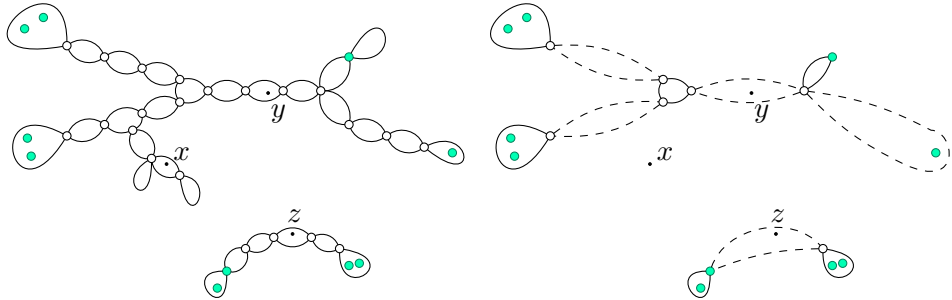

We thus need to store an efficient representation of the biconnectivity of the region as seen from the perspective of the boundary vertices. We call this efficient representation the \emph{compressed BC-forest} (see Section~\ref{sec:patchwork}). It is obtained from the forest of BC-trees (also known as the block-cutpoint trees, see Section~\ref{sec:prelims}) by first marking certain blocks and cutvertices as \emph{critical}, and then, basically, contracting the paths that connect them. The critical blocks and cutvertices are spartanly chosen, such that the total size of all the compressed BC-forests is only proportional to the boundary itself. We stitch the compressed BC-forests together by the boundary vertices they share, and obtain the \emph{patchwork graph} (see Figure~\ref{fig:patchwork}), in which all vertices that are biconnected to anything outside their region are \emph{represented}, and we use the \emph{representatives} of vertices to reveal when they are biconnected by paths that go via boundary vertices.  A construction very similar to our compressed BC-forests appears in~\cite{Galil:1999}, where it is used in a separator tree for a planar graph, but the rules for what to contract are subtly different.

\begin{figure}[h]
\centering
\resizebox{0.8\textwidth}{!}{%
\begin{tikzpicture}[x=1cm,y=1cm,scale=0.5]
  \begin{scope}[
      every path/.style={
      },
      every node/.style={
        font=\tiny,
        text=white,
        inner sep=1pt,
      },
      every label/.style={
        label distance=2mm,
      },
      vertex set/.style={
        dashed,
      },
      vertex/.style={
        draw,
        circle,
        fill=white,
        minimum size=0.8mm,
        inner sep=0pt,
        outer sep=0pt,
      },
      boundary vertex/.style={vertex,fill=mygreen,draw=black!50!green,minimum size=1mm,ultra thin},
      edge/.style={blue,thick},
      undirected edge/.style={edge},
      directed edge/.style={edge,->,>=stealth'},
      smalldot/.style={
      	draw=black!50!mygreen,
      	circle,
      	fill=mygreen,
      	text=black,
      	line width=0.02mm,
	    scale=1,
      	inner sep=0pt,
      	outer sep=0pt,
      	minimum size=.5mm,
      },   
      smallnode/.style={
      	draw=black,
      	rectangle,
      	fill=white,
      	text=black,
      	line width=0.05mm,
      	scale=1.2,
      	inner sep=0pt,
      	outer sep=0pt,
      },   
      smallround/.style={
      	draw=black,
      	circle,
      	fill=white,
      	text=black,
      	line width=0.05mm,
      	scale=1,
      	inner sep=0pt,
      	outer sep=0pt,
      },   
      mydash/.style={
      	dash pattern=on 2pt off 1pt,
	  },   
    ]

    \begin{scope}[scale=0.25]
    \draw[black,fill=gray!15,line width=0.05mm,rounded corners,use as bounding box] (0,0) .. controls (-1,-4) and (.5,-8) .. (0,-11) 
    .. controls (3,-12) and (15, -10) .. (20,-11) .. controls (22,-9) and (19.5,-4) ..  (20,0) .. controls (17,1.5) and (14,3) .. (7,0) -- cycle;  
    \draw[mydash,line width=0.05mm] (-.38,-4) -- (20.03,-4);
    \draw[mydash,line width=0.05mm] (.1,-8) -- (21,-8);
    \draw[mydash,line width=0.05mm] (6.1,0) -- (19.55,0);
    \draw[mydash,line width=0.05mm] (5,0) -- (5,-11.4);
    \draw[mydash,line width=0.05mm] (10.3,1.25) -- (10.3,-10.9);
    \draw[mydash,line width=0.05mm] (15.6,1.75) -- (15.6,-10.65);

    \node[smalldot] (b1) at (5,-2) {};
    \node[smalldot] (b2) at (5,-4) {};
    \node[smalldot] (b3) at (2.8,-4) {};
    \node[smalldot] (b4) at (5,-8) {};
    \node[smalldot] (b5) at (7,-4) {};
    \node[smalldot] (b12) at (8,-4) {};
    \node[smalldot] (b6) at (8,-8) {};
    \node[smalldot] (b7) at (10.3,-8) {};
    \node[smalldot] (b13) at (10.3,-5) {};
    \node[smalldot] (b8) at (10.3,-10) {};
    \node[smalldot] (b9) at (5,-10.5) {};
    \node[smalldot] (b10) at (10.3,-2.5) {};
    \node[smalldot] (b11) at (10.3,-1) {};
    \node[smalldot] (b14) at (1.1,-4) {};
    \node[smalldot] (b15) at (5,-6) {};
    \node[smalldot] (b16) at (3.4,-8) {};
    \node[smalldot] (b17) at (1,-8) {};
    \node[smalldot] (b18) at (12,0) {};
    \node[smalldot] (b19) at (14,0) {};
    \node[smalldot] (b20) at (12.5,-4) {};
    \node[smalldot] (b21) at (15.6,0) {};
    \node[smalldot] (b22) at (15.6,1.5) {};
    \node[smalldot] (b23) at (5,-9.4) {};
    \node[smalldot] (b24) at (14.5,-4) {};
    \node[smalldot] (b25) at (15.6,-5) {};
    \node[smalldot] (b26) at (12.5,-8) {};
    \node[smalldot] (b27) at (14,-8) {};
    \node[smalldot] (b28) at (15.6,-9) {};
    \node[smalldot] (b29) at (15.6,-10) {};
    \node[smalldot] (b30) at (15.6,-2) {};
    \node[smalldot] (b31) at (18,-4) {};
    \node[smalldot] (b32) at (17.2,-8) {};
    \node[smalldot] (b33) at (19.2,-8) {};

    \node[smallnode] (i0) at (2.5,-1.5) {};
	\draw[line width=0.05mm] (b1) -- (i0) node[smallnode,near start]{} node[smallround,near end]{};
	\draw[line width=0.05mm] (b3) -- (i0) node[smallnode,near start]{} node[smallround,near end]{};
	\draw[line width=0.05mm] (b14) -- (i0);%

	\node[smallnode] (i6) at (2.8,-6) {};
	\draw[line width=0.05mm] (b14) -- (b17) node[smallnode,midway]{};
	\draw[line width=0.05mm] (b14) -- (i6) node[smallnode,near start]{} node[smallround,near end]{};
	\draw[line width=0.05mm] (b15) -- (i6);
	\draw[line width=0.05mm] (b16) -- (i6) node[smallnode,near start]{} node[smallround,near end]{};

	\node[smallnode] (i8) at (1.3,-9.5) {};
	\draw[line width=0.05mm] (b23) -- (i8) node[smallnode,near start]{} node[smallround,near end]{};
	\draw[line width=0.05mm] (b4) -- (i8) node[smallnode,near start]{} node[smallround,near end]{};
	\draw[line width=0.05mm] (b17) -- (i8);

    \node[smallnode] (i1) at (6.8,-1.5) {};
	\draw[line width=0.05mm] (b2) -- (i1) node[smallnode,midway]{} node[smallround,near end]{};
	\draw[line width=0.05mm] (b1) -- (i1);
	\draw[line width=0.05mm] (b5) -- (i1) node[smallnode,near start]{} node[smallround,near end]{};
    \node[smallnode] (i2) at (8.2,-3) {};
	\draw[line width=0.05mm] (i1) -- (i2) node[smallround,midway]{};
	\draw[line width=0.05mm] (b10) -- (i2) node[smallnode,near start]{} node[smallround,near end]{};
	\draw[line width=0.05mm] (b11) -- (i2) node[smallnode,near start]{} node[smallround,near end]{};

	\node[smallnode] (i3) at (7,-6.5) {};
	\draw[line width=0.05mm] (b2) -- (i3) node[smallnode,midway]{} node[smallround,near end]{};
	\draw[line width=0.05mm] (b4) -- (i3) node[smallnode,midway]{} node[smallround,near end]{};
	\draw[line width=0.05mm] (b12) -- (i3) node[smallnode,midway]{} node[smallround,near end]{};
	\node[smallnode] (i4) at (8.5,-6.5) {};
	\draw[line width=0.05mm] (b6) -- (i4);
	\draw[line width=0.05mm] (b7) -- (i4) node[smallnode,near start]{} node[smallround,near end]{};
	\draw[line width=0.05mm] (b13) -- (i4) node[smallnode,near start]{} node[smallround,near end]{};

	\node[smallnode] (i9) at (7,-10) {};
	\draw[line width=0.05mm] (b9) -- (i9) node[smallnode,near start]{} node[smallround,near end]{};
	\draw[line width=0.05mm] (b6) -- (i9) node[smallnode,near start]{} node[smallround,near end]{};
	\draw[line width=0.05mm] (b8) -- (i9) node[smallnode,midway]{} node[smallround,near end]{};

	\node[smallnode] (i7) at (13,1.2) {};
	\draw[line width=0.05mm] (b18) -- (i7);
	\draw[line width=0.05mm] (b19) -- (i7);
	\draw[line width=0.05mm] (b21) -- (i7) node[smallnode,midway]{} node[smallround,near end]{};
	\draw[line width=0.05mm] (b22) -- (i7) node[smallnode,near start]{} node[smallround,near end]{};

	\draw[line width=0.05mm] (b10) -- (b18) node[smallnode,midway]{};
	\draw[line width=0.05mm] (b20) -- (b19) node[smallnode,midway]{};

	\node[smallnode] (i10) at (12,-5) {};
	\node[smallnode] (i11) at (14.5,-5) {};
	\node[smallnode] (i12) at (13.2,-6) {};
	\node[smallnode] (i13) at (13.3,-7) {};
	\draw[line width=0.05mm] (i10) -- (i12) node[smallround,midway]{};
	\draw[line width=0.05mm] (i11) -- (i12) node[smallround,midway]{};
	\draw[line width=0.05mm] (i12) -- (i13) node[smallround,midway]{};
	\draw[line width=0.05mm] (b26) -- (i13);
	\draw[line width=0.05mm] (b27) -- (i13);
	\draw[line width=0.05mm] (b13) -- (i10)node[smallnode,near start]{} node[smallround,near end]{};
	\draw[line width=0.05mm] (b20) -- (i10);
	\draw[line width=0.05mm] (b24) -- (i11);
	\draw[line width=0.05mm] (b25) -- (i11);

	\draw[line width=0.05mm] (b8) -- (b26) node[smallnode,midway]{};
	\draw[line width=0.05mm] (b27) -- (b29) node[smallnode,midway]{};

	\node[smallnode] (i14) at (17.3,-1.5) {};
	\draw[line width=0.05mm] (b30) -- (i14) node[smallnode,near start]{} node[smallround,near end]{};
	\draw[line width=0.05mm] (b31) -- (i14) node[smallnode,midway]{} node[smallround,near end]{};
	\draw[line width=0.05mm] (b21) -- (i14) node[smallnode,near start]{} node[smallround,near end]{};

	\draw[line width=0.05mm] (b31) -- (b32) node[smallnode,midway]{};
	\draw[line width=0.05mm] (b32) -- (b25) node[smallnode,midway]{};

	\draw[line width=0.05mm] (b28) -- (b33) node[smallnode,midway]{};

    \end{scope}
  \end{scope}
\end{tikzpicture}%
}
\caption{\small An $r$-division and its corresponding patchwork graph. The graph is bipartite between, on one hand, round boundary vertices and cutvertices, and, on the other hand, square blocks and contracted (pseudo) blocks. \label{fig:patchwork}}
\end{figure}
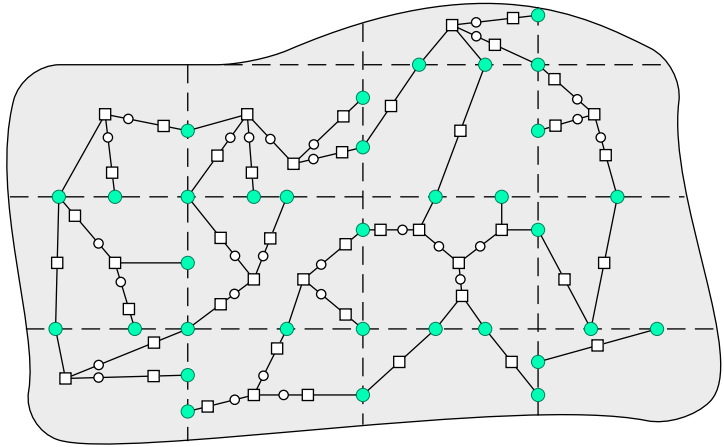

If decremental changes to a region only gave rise to decremental changes to its forest of BC-trees, we would be close to done.
However, and this is the second challenge, the deletion of an edge can cause a block to fall apart into a chain of blocks. Luckily, the damage to the compressed BC-forest is containable: 
only %
$O(n/\polylog n)$ 
vertices can be present in the compressed BC-forest, and the changes can be modeled by only three operations: edge- or path deletions, certain forms of vertex splits, and contractions of paths. These operations, we show, are of a form that can be handled in polylogarithmic time by one of the fully-dynamic biconnectivity data structures (see Section~\ref{sec:patchwork}). 

While using $r$-divisions once would obtain an improvement from polylog to polyloglog, which might, in practice, be useful already, it is tempting to form $r$-divisions of the regions themselves and use recursion in order to obtain an even faster speedup (see Figure~\ref{fig:layers}). This would mean that each region should again contain a patchwork made from the compressed BC-forests of its subregions (and, luckily, these patchwork operations compose beautifully). Thus, via recursion, one can obtain a purely combinatorial data structure with $O(\log^{\ast}n)$ update- and query time. But in fact, with standard RAM-tricks, if the subregions are of only polyloglog size, one can handle any operation in constant time -- simply by using a look-up table. Thus, in the practical RAM-model (i.e. the RAM-model with standard AC$^0$ operations such as addition, subtraction, bitwise and/or/xor), we can make do with only $3$ levels (top, middle and bottom), and obtain $O(1)$ update- and query-time.

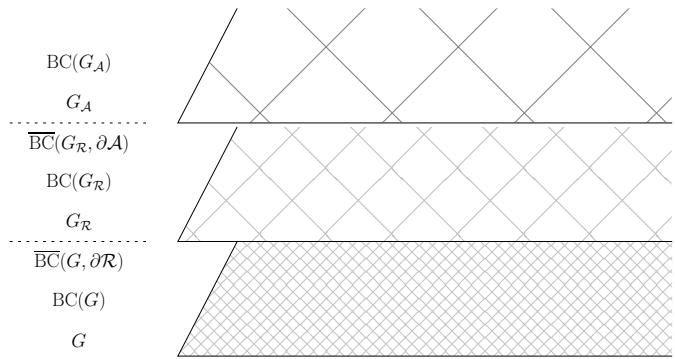
\begin{figure}[h]
	\centering
	\resizebox{0.7\textwidth}{!}{%
\begin{tikzpicture}[x=0.5cm,y=0.5cm,scale=1.5]
  \begin{scope}[
      every path/.style={
      },
      every node/.style={
        font=\huge,
        text=black,
        inner sep=1pt,
      },
      every label/.style={
        label distance=5mm,
      },
      vertex set/.style={
        dashed,
      },
      vertex/.style={
        draw,
        circle,
        fill=white,
        minimum size=2mm,
        inner sep=0pt,
        outer sep=0pt,
      },
      edge/.style={blue,thick},
      undirected edge/.style={edge},
      directed edge/.style={edge,->,>=stealth'},
    ]

	\node (bcG0) at (0,14) {$\BC(G_{\mathcal{A}})$};
    \node (G0) at (0,12) {$G_{\mathcal{A}}$};
    \node (aG0) at (0,13) {};
	\draw[loosely dashed] (-3.5,11) to (3.5,11);
    \node (cbcG1) at (0,10) {$\CBC(G_{\mathcal{R}},\partial\mathcal{A})$};
    \node (bcG1) at (0,8) {$\BC(G_{\mathcal{R}})$};
    \node (G1) at (0,6) {$G_{\mathcal{R}}$};
    \node (aG1) at (0.1,7) {};
	\draw[loosely dashed] (-3.5,5) to (3.5,5);
    \node (cbcG) at (0,4) {$\CBC(G,\partial \mathcal{R})$};
    \node (bcG) at (0,2) {$\BC(G)$};
    \node (G) at (0,0) {$G$};
    \node (aG) at (0.3,1.2) {};

    \draw[draw=white,pattern=custom north west lines,hatchspread=50pt,hatchcolor=gray] let \p1=(aG0) in (8,\y1+1.9cm) -- (5,\y1-1cm) -- (30,\y1-1cm) -- (30,\y1+1.9cm);
    \draw[draw=white,pattern=custom north east lines,hatchspread=50pt,hatchcolor=gray] let \p1=(aG0) in (8,\y1+1.9cm) -- (5,\y1-1cm) -- (30,\y1-1cm) -- (30,\y1+1.9cm);
    
	\draw[draw=white,pattern=custom north west lines,hatchspread=25pt,hatchcolor=gray!50,hatchshift=4pt] let \p1=(aG1) in (8,\y1+1.9cm) -- (5,\y1-1cm) -- (30,\y1-1cm) -- (30,\y1+1.9cm);
    \draw[draw=white,pattern=custom north east lines,hatchspread=25pt,hatchcolor=gray!50,hatchshift=20pt] let \p1=(aG1) in (8,\y1+1.9cm) -- (5,\y1-1cm) -- (30,\y1-1cm) -- (30,\y1+1.9cm);

	\draw[draw=white,pattern=custom north west lines,hatchspread=5pt,hatchcolor=gray!45,dotted] let \p1=(aG) in (8,\y1+1.9cm) -- (5,\y1-1cm) -- (30,\y1-1cm) -- (30,\y1+1.9cm);
	\draw[draw=white,pattern=custom north east lines,hatchspread=5pt,hatchcolor=gray!45,dotted] let \p1=(aG) in (8,\y1+1.9cm) -- (5,\y1-1cm) -- (30,\y1-1cm) -- (30,\y1+1.9cm);
        
    \foreach \i/\g in {%
      0/aG0,1/aG1,2/aG%
    }{%
      \draw let \p1=(\g) in (8,\y1+1.9cm) -- (5,\y1-1cm) -- (30,\y1-1cm);
    }

  \end{scope}
\end{tikzpicture}
}
	\caption{\small We use nested $r$-divisions and obtain a levelled structure. Each level maintains a graph, its BC-tree, and, for the non-top levels, the compressed BC-tree with relation to the boundary.\label{fig:layers}}
\end{figure}

Here, as our third challenge, we face that one does not simply recurse into optimality -- we need to assure ourselves that when a deletion of an edge causes changes in the compressed BC-trees of the subregion, the changes to the patchwork graph on the level above are manageable. Here, we show that our carefully chosen forms of vertex splits and path contractions do indeed only give rise to the same variant of splits and contractions on the parent level. 

Finally, when a pair of vertices $u,v$ are connected but not biconnected, we can in constant time find the nearest cutvertex on any path from $u$ to $v$ -- this is called the \emph{nearest cutvertex problem} (see Figure~\ref{fig:nearest}). We show that the nearest cutvertex can be determined by at most one nearest cutvertex and one biconnected query in the patchwork graph, and at most one nearest cutvertex and one biconnected query in the region. We also show how to augment an explicit representation of the BC-tree subject to certain splits,  contractions, and deletions such that we can still access the nearest cutvertex - a problem that reduces to first-on-path on a dynamic tree subject to certain vertex splits, and certain edge contractions and deletions. We solve this by solving a seemingly harder problem on such trees, namely that of answering an extended form of the nearest common ancestor query, known as the \emph{characteristic ancestor} query.  This solution may be of independent interest.
\begin{figure}[h]
\centering
\begin{tikzpicture}[x=0.5cm,y=0.5cm,scale=0.5]
  \begin{scope}[
      every path/.style={
      },
      every node/.style={
        font=\tiny,
        text=white,
        inner sep=1pt,
      },
      every label/.style={
        label distance=2mm,
      },
      vertex set/.style={
        dashed,
      },
      vertex/.style={
        draw,
        circle,
        fill=white,
        minimum size=0.8mm,
        inner sep=0pt,
        outer sep=0pt,
      },
      boundary vertex/.style={vertex,fill=mygreen,draw=black,minimum size=0.9mm,ultra thin},
      edge/.style={blue,thick},
      undirected edge/.style={edge},
      directed edge/.style={edge,->,>=stealth'},
      smalldot/.style={
      	draw,
      	circle,
      	fill=black,
      	text=black,
	    scale=1,
      	inner sep=0pt,
      	outer sep=0pt,
      },      
    ]

    \node[vertex,label={below:$a$}] (a) at (0,0) {};
    \node[vertex,label={below:$b$}] (b) at (2.5,2.5) {};
    \node[vertex,label={below:$c$}] (c) at (6,6) {};
    \node[vertex,draw=red,minimum size=0.4mm] (x) at (3.5,3.1) {};
    \node[vertex,draw=red,minimum size=0.4mm] (y) at (5.2,4.2) {};
    \begin{scope}
      \path[use as bounding box] (-3.5,-2.5) rectangle (15,5.1);

      \draw (a) to[bend left=40] (b);
      \draw (b) to[bend left=40] (a);
      \draw (b) to[bend left=60] (c);
      \draw (c) to[bend left=60] (b);
      \draw (a) .. controls (0,-5) and (-5,0) .. (a);
      \draw (c) to[bend left] (7,8);
      \draw (c) to[bend right] (8,7);
      \draw[red] (x) to[bend right] (y);

	\node[smalldot] (X) at (-1.5,-1.5) {};
	\node[text=black] (X') at (-1,-1) {\normalsize $x$};
	\node[text=black] (r) at (7.5,7.5) {\rotatebox{45}{\normalsize $\ldots$ \rotatebox{-45}{$r$}}};
	\node[smalldot] (Y) at (4.2,4.2) {};
	\node[text=black] (Y') at (4.9,4.8) {\normalsize $y$};

    \end{scope}
  \end{scope}
\end{tikzpicture}%
\hfil %
\begin{tikzpicture}[x=0.5cm,y=0.5cm,scale=0.5]
  \begin{scope}[
      every path/.style={
      },
      every node/.style={
        font=\tiny,
        text=white,
        inner sep=1pt,
      },
      every label/.style={
        label distance=2mm,
      },
      vertex set/.style={
        dashed,
      },
      vertex/.style={
        draw,
        circle,
        fill=white,
        minimum size=0.8mm,
        inner sep=0pt,
        outer sep=0pt,
      },
      boundary vertex/.style={vertex,fill=mygreen,draw=black,minimum size=0.9mm,ultra thin},
      edge/.style={blue,thick},
      undirected edge/.style={edge},
      directed edge/.style={edge,->,>=stealth'},
      smalldot/.style={
      	draw,
      	circle,
      	fill=black,
      	text=black,
	    scale=1,
      	inner sep=0pt,
      	outer sep=0pt,
      },      
    ]

    \node[vertex,label={below:$a$}] (a) at (0,0) {};
    \node[vertex,label={below:$b$}] (b) at (2.5,2.5) {};
    \node[vertex,label={below:$b''$}] (b2) at (4.5,2) {};
    \node[vertex,label={below:$d$}] (d) at (7,-.5) {};
    \node[vertex,label={below:$b'$}] (b1) at (4,4) {};
    \node[vertex,label={below:$c$}] (c) at (6,6) {};
    \node[vertex,label={below:$c'$}] (c1) at (8,8) {};
    \node[vertex,draw=red,minimum size=0.4mm] (x) at (7.3,7.1) {};
    \node[vertex,draw=red,minimum size=0.4mm] (y) at (8.2,-1.1) {};

    \begin{scope}
      \path[use as bounding box] (-3.5,-1.5) rectangle (15,7.1);
      \draw (a) to[bend left=40] (b);
      \draw (b) to[bend left=40] (a);
      \draw (b) to[bend left=40] (b1);
      \draw (b2) to[bend left=40] (b);
      \draw (b1) to[bend left=40] (b2);
      \draw (b1) to[bend left=50] (c);
      \draw (c) to[bend left=50] (b1);
      \draw (c) to[bend left=50] (c1);
      \draw (c1) to[bend left=50] (c);
      \draw (d) to[bend left=40] (b2);
      \draw (b2) to[bend left=40] (d);
      \draw (a) .. controls (0,-5) and (-5,0) .. (a);
      \draw (d) .. controls (7,-5) and (12,-.5) .. (d);
      \draw (c1) to[bend left] (8.5,9.5);
      \draw (c1) to[bend right] (9.5,8.5);
      \draw[red,dashed] (x) to[bend left] (y);

	\node[smalldot] (X) at (-1.5,-1.5) {};
	\node[text=black] (X') at (-1,-1) {\normalsize $x$};
	\node[text=black] (r) at (9.8,9.8) {\rotatebox{45}{\normalsize $\ldots$ \rotatebox{-45}{$r$}}};
	\node[smalldot] (Y) at (4.7,4.7) {};
	\node[text=black] (Y') at (5.4,5.3) {\normalsize $y$};

    \end{scope}
  \end{scope}
\end{tikzpicture}%
 \caption{An edge-deletion (red) in the graph can lead to a split of a block which changes the nearest cutvertex from $y$ towards $x$.\label{fig:nearest}}
\end{figure}
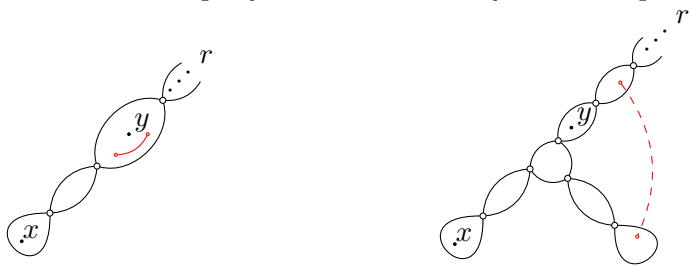

\paragraph{Related techniques} The idea of using recursive separators stems from Eppstein's sparsification technique~\cite{Eppstein97,Eppstein99}, where it secured $O(\sqrt{n})$ update algorithms for a series of problems, and the idea of using two levels of regions of size $O(\polylog n)$ and $O(\poly\log\log n)$, respectively, was introduced in~\cite{Lacki15} where the idea, together with a union-find structure in the dual graphs, was used to obtain amortized $O(1)$ decremental connectivity for planar graphs.

\paragraph{Paper outline}
Section~\ref{sec:prelims} is dedicated to preliminaries and terminology. Then, in Section~\ref{sec:capacitated}, we introduce the notion of \emph{capacitated biconnectivity}, which is a tool for overcoming the third challenge of making the recursion work. Section~\ref{sec:patchwork} is dedicated to an understanding of the patchwork graph in a static setting: how it is defined, how it reflects biconnectivity, and how it behaves when there is not one but two or more nested $r$-divisions of the same graph. In Section~\ref{sec:decremental}, we show how to maintain the patchwork graph decrementally, thus enabling us solve decremental biconnectivity. Finally, in Section~\ref{sec:nearest-cutvertex}, we show how to handle nearest cutvertex queries using our characteristic ancestors structure, which is described in Appendix~\ref{sec:first-on-path}.

\section{Preliminaries}\label{sec:prelims}
Given a graph with vertices $u$ and $v$, we say they are \emph{connected} if there is a path connecting them. A pair of connected vertices are \emph{$2$-edge connected} unless there is an edge whose removal would disconnect them. Such an edge is called a \emph{bridge}. 
A pair of $2$-edge connected vertices $u$ and $v$ are \emph{(locally) biconnected} unless there exists a vertex (other than $u$ and $v$) whose removal would disconnect them. Such a vertex is called a \emph{cutvertex}. 
For an ordered pair $(u,v)$ of connected but not biconnected vertices, 
the \emph{nearest cutvertex} separating them is uniquely defined as the first cutvertex on a path -- any path -- from $u$ to $v$. In the special case where $u$ and $v$ are separated by the bridge $uv$, we say that the nearest cutvertex is $\nil$. 

The \emph{blocks} of a graph are the maximal biconnected subgraphs. Each block is either a bridge or a maximal set of biconnected vertices.
For each connected component of a graph, the \emph{block-cutpoint tree}~\cite[p. 36]{Harary69}, or \emph{BC-tree} for short, reflects the biconnectivity among the vertices.
This tree has all the vertices of the graph and, furthermore, a vertex for each block. Its edges are those that connect each vertex to the block or blocks it belongs to.
If the graph $G$ is not necessarily connected, its \emph{forest of BC-trees} $\BC(G)$ has a BC-tree for each connected component of the graph.
The forest of BC-trees of a graph can be found in linear time~\cite{Tarjan72}.

If each BC-tree in the forest of BC-trees is rooted at an arbitrary block, each non-root block has one unique cutvertex separating it from its parent. Then, a pair of vertices are biconnected if and only if they either have the same non-bridge block as parent, or one is the parent of the non-bridge block that is parent of
the other. %

A dynamic data structure for biconnectivity in general graphs is developed in~\cite{Holm:2001,Thorup:2000,Holm18a}; it maintains an $n$-vertex graph and handles deletions and insertions of edges in $t(n)=O(\log^3 n\cdot \log^2\log n)$ amortized time, and answers queries in $O(\log^2 n\cdot \log^2\log n)$ worst-case time. The data structure is easily modified to give the first cutvertex separating a pair of vertices in $O(\log^2 n\cdot\log^2\log n)$ time, but even without this modification, one can find the first cutvertex via a binary search along a spanning tree in $O(\log n)$ queries in $O(\log^3n\cdot\log^2\log n)$ worst case time. 
Note however that for our purposes, the original~\cite{Holm:2001} data structure with $O(\log^5 n)$ amortized update- and query time is sufficient.
For the rest of this paper, we will just use $t(n)$ to denote the amortized time per operation (queries included) of a fully dynamic biconnectivity structure for general graphs.

For (not necessarily distinct) vertices $v,u,w$ in a tree, we use $v \longleftrightarrow u$ to denote the tree-path connecting $v$ and $u$, and we use $\meet(u,v,w)$ to denote the unique common vertex of all three tree-paths connecting them.

A \emph{strict $(r,s)$-division} is a set of $O(n/r)$ subgraphs $\mathcal{R}=\{R_1,R_2,\ldots \}$ called \emph{regions}, that partition the edges. Each region $R\in \mathcal{R}$ has at most $r$ vertices, and a set $\partial R$ of at most $s$ \emph{boundary vertices}, such that only boundary vertices appear in more than one region. We denote by $\partial \mathcal{R}$ the set of all boundary vertices $\bigcup_{R\in \mathcal{R}} \partial R$. 
Note that with these definitions, $\sum_{R\in \mathcal{R}} \abs{\partial R} \leq O(n/r)\cdot O(s) = O(n\cdot\frac{s}{r})$.

An $r$-division usually means a strict $(r,s)$-division with $s=O(\sqrt{r})$, but we will be using it more broadly to include any strict $(r,s)$-division, where $s=O(r^{1-\varepsilon})$ for some $\varepsilon>0$.

We say that a pair $(\mathcal{A},\mathcal{R})$ consisting of an $r_1$-division and an $r_2$-division are \emph{nested}, if $\partial \mathcal{A} \subseteq \partial \mathcal{R}$, and $\mathcal{R}$ contains an $r_2$-division of each region of $\mathcal{A}$.  With a slight abuse of notation, for any $A\in \mathcal{A}$ we will let $\mathcal{R}\cap A$ denote this $r_2$-division.

\section{Bicapacitated biconnectivity}\label{sec:capacitated}
Consider the forest of BC-trees of a graph. It may be viewed as a \emph{bicapacitated graph}, where non-bridge blocks have capacity $2$ and bridge blocks and vertices have capacity $1$; then, vertices $u$ and $v$ in $G$ are biconnected exactly when there exists a flow of value $2$ from $u$ to $v$ in the forest of BC-trees of $G$. (Here, we disregard the capacity of the source and sink vertices of a flow.) %
We denote by \emph{bicapacitated biconnectivity} the query to the existence of such a flow. 

Recall that we want to be able to use the framework recursively: we want to build and maintain BC-trees for small graphs and stitch them, or rather, compressed versions of them together, thus obtaining a patchwork graph. So, to make ends meet, we need to extend our definitions so that they can handle a bicapacitated input graph corresponding to the BC-trees of an underlying region.

The resulting patchwork graphs are always bipartite, with vertices on one side all having capacity $1$, and vertices on the other side having capacity either $1$ or $2$. We will restrict our definition of bicapacitated graph to mean such graphs.

Now, we can introduce the problem of \emph{fully dynamic bicapacitated biconnectivity}, as that of facilitating bicapacitated biconnectivity queries between vertices in a bicapacitated graph as it undergoes insertions and deletions of edges. 
Note that (fully) dynamic bicapacitated biconnectivity has an easy reduction to (fully) dynamic biconnectivity, as used in the following Lemma and its proof:
\begin{lemma}\label{lem:fullydynamiccapacitated}
	Given a fully dynamic data structure for biconnectivity in general graphs using amortized $t_u(n)$ time per link or cut and (amortized/worst case) $t_q(n)$ per pairwise biconnectivity or nearest cutvertex query, there is a fully dynamic data structure for bicapacitated graphs that uses $O(t_u(2n))$ amortized time per edge insert/delete, and answers pairwise biconnectivity and nearest-cutvertex queries in (amortized/worst case) $O(t_q(2n))$ time.
\end{lemma}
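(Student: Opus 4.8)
The plan is to reduce the bicapacitated problem to the ordinary (uncapacitated) one by a local vertex-splitting gadget, run the given fully dynamic biconnectivity structure on the transformed graph, and translate queries back. Concretely, given the bicapacitated graph $H$ on $n$ vertices---bipartite with one side of capacity-$1$ vertices and the other side of vertices of capacity $1$ or $2$---I would build a graph $H'$ as follows: every capacity-$1$ vertex is kept as a single vertex, and every capacity-$2$ vertex $w$ is replaced by two \emph{twin} copies $w_1,w_2$, each made adjacent to exactly the neighbors of $w$. Thus an edge $cw$ of $H$ becomes either the single edge $cw$ (if $w$ has capacity $1$) or the two edges $cw_1,cw_2$ (if $w$ has capacity $2$). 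Since at most every vertex is doubled, $\abs{V(H')}\le 2n$ and $\abs{E(H')}\le 2\abs{E(H)}$, which is where the argument $2n$ in the time bounds comes from. Maintaining $H'$ under updates to $H$ is immediate: inserting or deleting an edge incident to a capacity-$2$ vertex triggers exactly two link/cut operations on $H'$, and an edge incident only to capacity-$1$ vertices triggers one; hence each update costs $O(t_u(2n))$ amortized.

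The heart of the proof is the correctness of the gadget, i.e. that a value-$2$ flow between $u$ and $v$ in $H$ exists if and only if $u$ and $v$ are biconnected in $H'$. The key structural fact is that the twins $w_1,w_2$ of a capacity-$2$ vertex can never act as a separating cutvertex or a bridge endpoint in $H'$: since $w_1$ and $w_2$ have identical neighborhoods, any path through $w_1$ can be rerouted through $w_2$, so deleting $w_1$ (or the edge $cw_1$, when $w$ has another neighbor) cannot disconnect any pair. Consequently the single-vertex cuts of $H'$ are precisely the capacity-$1$ vertices of $H$ whose removal kills the value-$2$ flow, and a capacity-$2$ vertex is ``transparent'' to a flow of value $2$ exactly as intended---two internally disjoint paths may use the two twins. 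I would formalize this via the cutvertex/Menger characterization of biconnectivity recalled in Section~\ref{sec:prelims}, checking both the $2$-edge-connected part (a capacity-$1$ \emph{bridge} block of $H$ becomes a genuine cutvertex of $H'$, so no value-$2$ flow crosses it) and the no-separating-cutvertex part.

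For queries I would translate endpoints as follows: a capacity-$1$ query vertex is used directly, and a capacity-$2$ query vertex $u$ is represented by either twin $u_1$. Because we disregard the source/sink capacities, and because $u_1$ is adjacent to all neighbors of $u$, any value-$2$ flow out of $u$ can be re-routed to leave $u_1$ along two distinct edges, so biconnectivity of $u_1$ and $v$ in $H'$ coincides with the value-$2$ flow between $u$ and $v$; no auxiliary link/cut is needed, so each query costs only $O(t_q(2n))$. The nearest-cutvertex query transfers verbatim: by the fact above, the first cut on a path in $H'$ is always a capacity-$1$ vertex, i.e. a genuine vertex of $H$, and since twins preserve the neighbor set, it is exactly the nearest cutvertex of $H$; the degenerate answer $\nil$ corresponds to a bridge, which is detected identically.

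The main obstacle is the correctness argument of the second paragraph, and within it the careful handling of the boundary cases---capacity-$2$ vertices of degree one (whose twin-copies become pendant, so that their incident edges are bridges inside the gadget) and the interaction of the $2$-edge-connectivity requirement with the splitting. These do not affect queries between the ``real'' vertices but must be checked so that the biconnectivity relation of $H'$ restricted to the original vertices matches value-$2$ flow in $H$ on the nose.
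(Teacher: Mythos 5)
Your proposal is correct and follows essentially the same route as the paper: the paper's proof replaces each capacity-$2$ vertex by two copies, links every copy of a vertex to every copy of its neighbors (exactly your twin gadget), runs the uncapacitated structure on the resulting $\le 2n$-vertex graph, and notes that each bicapacitated update translates to $O(1)$ links/cuts. The only difference is that the paper's proof is far terser---it states the transformation and asserts correctness---whereas you spell out the rerouting argument and the degenerate cases, which is a welcome elaboration rather than a departure.
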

\begin{proof}
	Replace each vertex of capacity $\geq 2$ with a set of two
        vertices, and link every copy of a vertex to every copy of its
        neighbors. This transformation gives an uncapacitated graph
        with at most $2n$ vertices on which we can use the uncapacitated
        data structure, and which can answer the queries we need in
        the correct time.  Each edge insert/delete in the bicapacitated
        graph is at most a constant number of inserts/deletes in the
        uncapacitated graph, so the asymptotic running time for
        updates is the same.
\end{proof}

\section{The patchwork graph}\label{sec:patchwork}

We are given an $r$-division $\mathcal{R}=\set{R_1,\ldots,R_k}$ of $G$, and we want to define a graph $G_{\mathcal{R}}$ of size $O(\abs{\boundary\mathcal{R}})$ that somehow captures all the biconnectivity relations that cross multiple regions.  We call the resulting $G_{\mathcal{R}}$ a \emph{patchwork graph}, because it is built by stitching together a suitable \emph{patch graph} for each region.

Our patch graph for each region is in turn based on the forest of BC-trees for the region. 
We \emph{compress} the forest of BC-trees of the region similarly to \cite{Galil:1999} as follows:

\begin{definition}
  Given a bicapacitated graph $G=(V,E)$, its forest of BC-trees $F=\BC(G)$, 
  and a subset of vertices $S\subseteq V$, define a node\footnote{Throughout the text we consistently denote vertices of $G$ by \emph{vertices}, and vertices of BC-trees and SPQR-trees as \emph{nodes}.} $x\in T$, where the tree $T$ is a component of $F$, to be
  \begin{itemize}
  \item \emph{$S$-critical} if $x=\meet_T(s_1,s_2,s_3)$ for some $s_1,s_2,s_3\in S$,
  \item \emph{$S$-disposable} if $x\not\in s_1\longleftrightarrow_T s_2$ for all $s_1,s_2\in S$, and
  \item \emph{$S$-contractible} otherwise.
  \end{itemize}
\end{definition}
\begin{definition}
  The \emph{compressed BC-forest} $\CBC(G,S)$ is the forest obtained from
  its forest of BC-trees by deleting all $S$-disposable nodes, and replacing each maximal
  path of $S$-contractible nodes that start and end in distinct
  blocks, with a single so-called \emph{pseudoblock} node with capacity
  $1$.
\end{definition}

\begin{definition}
  Given an $r$-division $\mathcal{R}=\set{R_1,\ldots,R_k}$ of a graph $G$, define the \emph{patchwork graph} $G_{\mathcal{R}}=\bigcup_{R\in\mathcal{R}}\CBC(R,\boundary R)$ to be the bicapacitated graph obtained by taking the (non-disjoint) union of compressed BC-forests $\CBC(R,\boundary R)$  for each region $R\in \mathcal{R}$.
\end{definition}

Any vertex of $G$ corresponds to a BC-vertex in $\BC(R)$ for some $R$. Some of these BC-vertices are either present or represented in $G_{\mathcal{R}}$. We thus want to define the representation of a vertex as the vertex in $G_{\mathcal{R}}$ representing its BC-node, when it exists:
\begin{definition}
  Given a patchwork graph $G_{\mathcal{R}}$ and a vertex $v$ of $G$, we define the representative $\CB(v)$ of $v$ as follows:
  \begin{itemize}
  \item If $v$ is a vertex of $G_{\mathcal{R}}$, then $\CB(v) = v$; else
  \item Let $R\in\mathcal{R}$ be the unique region containing $v$.  If $v$ is incident to a block in $\BC(R)$ that is not $S$-disposable, then $v$ is represented either by that block or the pseudoblock representing it.
  \item Otherwise, $v$ is not represented.
  \end{itemize}
  Overloading notation slightly, say that a vertex of the graph is critical, disposable, or contractible, if the BC-node representing it is. 
\end{definition}

\begin{observation}\label{obs:linearCBC}
There is a linear time algorithm for building the compressed BC-forest of a graph with respect to a given subset of vertices, and for finding the representatives of the vertices.
\end{observation}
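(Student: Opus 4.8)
The plan is to first invoke the linear-time algorithm of~\cite{Tarjan72} to build the forest of BC-trees $F=\BC(G)$, and then to show that classifying every node as $S$-disposable, $S$-critical, or $S$-contractible, performing the contraction that yields $\CBC(G,S)$, and computing all representatives $\CB(v)$, can each be done with a constant number of passes over $F$. Here $S=\boundary R$ and the total size of $F$ is $O(\abs{V}+\abs{E})$, so a constant number of passes is linear. I would begin with a local combinatorial characterization of the three node types that avoids any explicit evaluation of $\meet$. The non-disposable nodes are exactly those lying on some path $s_1\longleftrightarrow_T s_2$ with $s_1,s_2\in S$, i.e.\ the minimal subtree (Steiner tree) spanned by $S$ in each component. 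To find it, root each component of $F$ at an arbitrary $S$-vertex (a component with no vertex of $S$ is entirely disposable and discarded) and compute, in one postorder pass, the number $c(x)$ of $S$-vertices in the subtree of each node $x$. Then $x$ is non-disposable iff $c(x)\ge 1$: if $c(x)\ge 1$, some $S$-vertex $s$ lies below $x$ and $x$ is on $s\longleftrightarrow_T r$ for the root $r\in S$; conversely if $c(x)=0$ no path between two $S$-vertices enters the subtree of $x$.

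Next I would classify the surviving nodes, claiming that a non-disposable node $x$ is $S$-critical iff $x\in S$ or $x$ has at least three non-disposable neighbours (Steiner-degree $\ge 3$), and $S$-contractible otherwise. For the forward direction, any $x\in S$ equals $\meet_T(x,x,x)$, and any branch node of Steiner-degree $\ge 3$ is $\meet_T(s_1,s_2,s_3)$ for $S$-vertices chosen one per incident Steiner branch, since every such branch reaches an $S$-vertex (the Steiner-tree leaves lie in $S$ by minimality). Conversely, a non-disposable $x\notin S$ of Steiner-degree at most $2$ cannot be $\meet_T(s_1,s_2,s_3)$: all $S$-vertices then lie in at most two components of $T\setminus x$, so the three pairwise paths cannot all pass through $x$. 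Both conditions are tested by counting non-disposable neighbours, in one further pass, and the $S$-contractible nodes are exactly the interior (Steiner-degree-$2$, non-$S$) nodes, which assemble into the maximal paths to be contracted.

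I would then build $\CBC(G,S)$ by a single traversal of the Steiner subtree: delete every disposable node, and replace each maximal run of contractible nodes by one capacity-$1$ pseudoblock, recording for every contracted node a pointer to the pseudoblock that absorbed it. Each node and edge is touched $O(1)$ times, so this is linear. For the representatives, a vertex $v$ whose BC-node survives (i.e.\ is critical) has $\CB(v)=v$; otherwise I scan the block-nodes incident to $v$ in $\BC(R)$ and, if one of them is non-disposable, set $\CB(v)$ to that block or, via the stored pointer, to the pseudoblock containing it, declaring $v$ unrepresented if no incident block is non-disposable. Since a contractible or disposable $v$ has all its non-disposable incident blocks lying on a single contractible path (hence mapped to the same pseudoblock), this is well defined, and as the sum of all BC-tree incidences is linear, the whole pass is linear.

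The step I expect to be the main obstacle is the second one: proving that the $\meet$-based definition of $S$-critical coincides with the purely local condition ``$x\in S$ or Steiner-degree $\ge 3$''. Everything else is bookkeeping layered on top of the linear-time BC-forest construction, but it is exactly this equivalence that lets the classification be computed by a single neighbour-counting pass rather than by evaluating $\meet_T$ over triples of boundary vertices.
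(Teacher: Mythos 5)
The paper never actually proves this Observation---it is asserted as immediate---so there is no ``paper proof'' to compare against, and your write-up supplies exactly the content the paper omits. The part you identify as the crux is handled correctly: non-disposable nodes are precisely the Steiner subtree of $S$ (your subtree-count test is right), and among them $x$ is $S$-critical iff $x\in S$ or $x$ has Steiner-degree at least $3$; both directions of your argument ($\meet_T(x,x,x)=x$ for $x\in S$, one $S$-vertex per branch for degree $\geq 3$, pigeonhole for the converse) are sound, and all passes are linear.

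The genuine problem is in the contraction step: you build a different object from the paper's $\CBC(G,S)$. The definition replaces each maximal path of $S$-contractible nodes \emph{that starts and ends in distinct blocks} with a pseudoblock; the maximality there ranges over block-to-block paths (this is also the only reading under which the compressed forest has size $O(\abs{S})$ and stays bipartite). In other words, within each maximal contractible run you may only fuse the segment from its first block-node to its last block-node, and any contractible \emph{vertex}-nodes flanking that segment must survive. You instead contract every maximal contractible run in its entirety. Concretely, take $S=\set{s_1,s_2,s_3}$ and the BC-tree with edges $s_1-B_1$, $s_2-B_1$, and the path $B_1-v-b_1-v_2-b_2-s_3$. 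Here $B_1$ is critical (Steiner-degree $3$) and $v,b_1,v_2,b_2$ are contractible; the paper's forest is $B_1-v-P-s_3$ with $P$ the pseudoblock for $b_1-v_2-b_2$, whereas yours is $B_1-P-s_3$. This has two concrete consequences. First, your $P$ is adjacent to the block $B_1$, violating the bipartite vertex/block structure that the paper bakes into its definition of bicapacitated graphs and uses throughout (pseudoblock neighbours are supposed to be vertices; cf.\ Lemma~\ref{lem:preservebicon} and Figure~\ref{fig:patchwork}). Second, your well-definedness claim for representatives is false in exactly this configuration: the cutvertex $v$ has two non-disposable incident blocks, $B_1$ and $b_1$, and they do \emph{not} map to the same node of the compressed forest ($B_1$ maps to itself, $b_1$ to $P$), so your scan can assign $\CB(v)$ inconsistently---whereas under the paper's definition $v$ is itself a vertex of the compressed forest and $\CB(v)=v$, so the case never arises. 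The repair stays entirely inside your framework: when walking a maximal contractible run, fuse only the portion between its first and last block-nodes (when these are distinct) and leave the flanking contractible vertices intact; then every vertex absent from $\CBC(G,S)$ has all its non-disposable incident blocks mapped to a single node, your representative pass becomes well defined, and the algorithm remains linear.
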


\begin{lemma}\label{lem:preservebicon}
  Distinct vertices $u,v$ are biconnected in $G$ if and only if either
  \begin{enumerate}
  \item\label{it:internallybiconnected} At least one of $u,v$ is not a boundary vertex, and $u,v$ are biconnected in the at most one region $R$ containing both; or
  \item\label{it:pseudobiconnected} $\CB(u)=\CB(v)$ is %
  a pseudo-block whose unique neighbours are biconnected in $G_{\mathcal{R}}$; or
  \item\label{it:externallybiconnected} $\CB(u)$ and $\CB(v)$ are different and are biconnected in $G_{\mathcal{R}}$.
  \end{enumerate}
\end{lemma}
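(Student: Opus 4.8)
The plan is to prove both implications by reducing biconnectivity in $G$ to bicapacitated biconnectivity (the existence of a flow of value $2$) in the patchwork graph, passing through the \emph{uncompressed} glued BC-forests as an intermediate object. Let $H=\bigcup_{R\in\mathcal{R}}\BC(R)$ be the bicapacitated graph obtained from the per-region forests of BC-trees by identifying shared boundary vertices, with non-bridge blocks given capacity $2$ and all other nodes capacity $1$. I would first establish the purely structural fact that $u,v$ are biconnected in $G$ if and only if there is a flow of value $2$ between the BC-nodes of $u$ and $v$ in $H$. The forward direction uses that two internally vertex-disjoint $u$--$v$ paths in $G$ can only cross between regions at boundary vertices, since the regions partition the edges and share only $\boundary\mathcal{R}$; restricting the two paths to a region $R$ therefore yields subpaths whose endpoints are $u$, $v$, or boundary vertices of $R$, and inside $R$ these subpaths realise a value-$2$ flow in $\BC(R)$ between the relevant nodes. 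The pieces glue along the shared boundary into a value-$2$ flow in $H$. The converse decomposes a value-$2$ flow in $H$ into two paths and realises each region-local segment by two vertex-disjoint paths inside the region---which is exactly what capacity $2$ on a non-bridge block and capacity $1$ at a cutvertex encode---yielding two internally disjoint $u$--$v$ paths in $G$.

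The second reduction is that compression $H\mapsto G_{\mathcal{R}}=\bigcup_R\CBC(R,\boundary R)$ preserves value-$2$ flows between nodes that survive or are represented. Deleting an $S$-disposable node $x$ is harmless because, by definition, $x\notin s_1\longleftrightarrow s_2$ for all $s_1,s_2\in\boundary R$; hence $x$ hangs off the boundary-spanned part of $\BC(R)$ and can carry no flow between boundary vertices, so no flow that enters or leaves $R$ can use it. Contracting a maximal path of $S$-contractible nodes into a capacity-$1$ pseudoblock is flow-preserving because such a path runs between two \emph{distinct} blocks and, the block--cutvertex forest being bipartite, necessarily contains a cutvertex of capacity $1$ in its interior; its total throughput is therefore exactly $1$, matching the pseudoblock's capacity. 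After these two reductions, $u,v$ are biconnected in $G$ if and only if there is a value-$2$ flow in $G_{\mathcal{R}}$ between the representatives $\CB(u)$ and $\CB(v)$, interpreted appropriately when a representative is a (pseudo)block.

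It then remains to unfold this flow condition into the three stated cases by analysing where $\CB(u)$ and $\CB(v)$ land. If $u$ or $v$ is interior to a region $R$ and the value-$2$ flow can be routed without leaving $R$, the two disjoint paths live entirely in $R$ and we are in case~\ref{it:internallybiconnected}; conversely internal biconnectivity in $R$ trivially lifts to $G$ since $R$ is a subgraph. If $\CB(u)=\CB(v)$ equals a genuine block, that block has capacity $2$, so the length-two path through it already carries $2$ units (recall source and sink capacities are disregarded), whence $u,v$ are biconnected in $G_{\mathcal{R}}$ and we are in case~\ref{it:externallybiconnected}, or in case~\ref{it:internallybiconnected} when an endpoint is interior. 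The remaining self-overlap is $\CB(u)=\CB(v)=P$ a pseudoblock: here the path through $P$ carries only $1$ unit, so a value-$2$ route must use a second, $P$-avoiding path between the two neighbours of $P$, which exists precisely when those neighbours are biconnected in $G_{\mathcal{R}}$, giving case~\ref{it:pseudobiconnected}; and that second path places the whole contracted path on a cycle in $G$, so every vertex it represents---in particular $u$ and $v$---becomes biconnected in $G$. All other configurations have $\CB(u)\neq\CB(v)$ biconnected in $G_{\mathcal{R}}$, i.e.\ case~\ref{it:externallybiconnected}.

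The main obstacle I expect is the first reduction together with the pull-back in cases~\ref{it:pseudobiconnected} and~\ref{it:externallybiconnected}: one must argue that a value-$2$ flow in the compressed, capacitated patchwork can always be converted into two genuinely internally vertex-disjoint paths in $G$, keeping the two flow units from colliding at a shared capacity-$1$ cutvertex or at a boundary vertex where several regions meet, and correctly re-expanding each pseudoblock into a region-internal disjoint-path pair. Making the capacity bookkeeping match the \emph{no-shared-cutvertex} condition of biconnectivity---rather than merely $2$-edge connectivity---is precisely where the bicapacitated formulation has to be used carefully, and I would expect this to be the crux of the argument.
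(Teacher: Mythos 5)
Your route is genuinely different from the paper's, and the difference matters. The paper never constructs flows or disjoint paths at all: it proves the contrapositive by transferring \emph{separators}. In one direction, any cutvertex $c$ of $G$ separating $u,v$ either survives as a cutvertex of $G_{\mathcal{R}}$ or lies inside a pseudoblock $\CB(c)$ whose two incident edges are then bridges of $G_{\mathcal{R}}$, so conditions 2 and 3 fail; in the other direction, a cutvertex or bridge of $G_{\mathcal{R}}$ separating $\CB(u)$ from $\CB(v)$ lifts back to a cutvertex or bridge of $G$. Transferring a separator needs only the ``easy'' realization argument (a path avoiding a node in one graph yields a path avoiding the corresponding node in the other); it never requires exhibiting two internally disjoint paths.

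This is exactly where your proposal has a genuine gap, and it is the step you yourself flag as the expected crux: converting a value-$2$ flow in the capacitated patchwork (or in your intermediate glued graph $H$) into two internally vertex-disjoint $u$--$v$ paths in $G$. Decomposing the flow gives two paths that may share capacity-$2$ block nodes, and re-expanding a shared block means solving a two-pair linkage inside that block; such a linkage need not exist for the prescribed pairing (four terminals on a cycle in cyclic order $a_1,a_2,b_1,b_2$ admit no disjoint $a_1$--$b_1$ and $a_2$--$b_2$ paths), so one must argue that pairings can be swapped inside each shared block and that the swapped continuations still assemble into two $u$--$v$ paths, across possibly many shared blocks and repeated visits to a region. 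None of that is carried out in the proposal. The clean repair is to avoid constructing paths altogether: invoke vertex-capacitated Menger/max-flow--min-cut as a black box so that ``no value-$2$ flow'' becomes ``some capacity-$1$ node separates,'' and then prove the correspondence between capacity-$1$ separators of $G_{\mathcal{R}}$ and cutvertices/bridges of $G$ --- but at that point the actual content of your proof has collapsed into precisely the paper's argument, with the flow language as a veneer. Two smaller points: your justification that a contracted chain has throughput $1$ (interior cutvertex, by bipartiteness) is correct and matches the paper's intent; but in your case analysis, when $\CB(u)=\CB(v)$ is a genuine block you cannot land in case 3, since case 3 requires $\CB(u)\neq\CB(v)$ --- in that situation both $u$ and $v$ are necessarily non-boundary (boundary vertices represent themselves), so it is case 1 that applies.
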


\begin{proof}
We will show that $u$ and $v$ are \emph{not} biconnected if and only
if all three conditions are false.
Assume $u$ and $v$ are not biconnected. Then they can clearly not be biconnected within some region $R$, so condition~\ref{it:internallybiconnected} is false.  If $\CB(u)=\CB(v)$, then this is a pseudo-block contracted from a chain containing the neighbors of $u$, $v$ in $\BC(R)$, and a cutpoint $c$ that separates them within $R$.  Consider the neighbors $u^\prime$ and $v^\prime$ to this pseudoblock in $G_{\mathcal{R}}$. If they were biconnected in $G_{\mathcal{R}}$ there would be a $u,v$ path in $G\setminus\set{c}$ contradicting our choice of $u,v$. Thus $u^\prime$ and $v^\prime$ are not biconnected in $G_{\mathcal{R}}$ and condition~\ref{it:pseudobiconnected} is false.
Finally, if $\CB(u)$ and $\CB(v)$ are different, then any cutvertex $c$ separating $u$ and $v$ in $G$ will either be a cutvertex in $G_{\mathcal{R}}$, or will be in a pseudoblock $\CB(c)$ with neighbors $u^\prime$ and $v^\prime$. Since $c$ is a cutvertex in $G$, $(u^\prime,\CB(c))$ and $(\CB(c),v^\prime)$ are bridges in $G_{\mathcal{R}}$, and $\CB(u)$ and $\CB(v)$ will be separated by at least one of them and are therefore not biconnected in $G_{\mathcal{R}}$ and condition~\ref{it:externallybiconnected} is false.

If, on the other hand, none of the three conditions are true, then, if $\CB(u)=\CB(v)$ is a pseudoblock whose neighbours in $G_{\mathcal{R}}$ are not biconnected, then any cutvertex separating $u$ from $v$ in their region also separates them in $G$. If $\CB(u)\neq \CB(v)$ are separable by some cutvertex $c$ in $G_{\mathcal{R}}$, then $c$ is also a cutvertex separating $u$ from $v$ in $G$, and hence they are not biconnected. If $\CB(u)\neq \CB(v)$ are the endpoints of a bridge in $G_{\mathcal{R}}$ then one of them must be a pseudoblock containing a cutvertex or a bridge in $G$ separating them.
\end{proof}

Lemma~\ref{lem:preservebicon} above almost enables us to transform a biconnectivity-query in $G$ into a biconnectivity-query in $G_{\mathcal{R}}$ and a biconnectivity inside a region $R$. However, in fact, item~\ref{it:internallybiconnected} is only directly useful when neither of the vertices belong to the boundary; when one is a boundary vertex we do not know which vertex in the region it corresponds to. Fortunately, when the non-boundary vertex is represented, we may query biconnectivity in $G_{\mathcal{R}}$ to obtain the answer. To handle disposable vertices, we introduce the notion of the \emph{nearest represented vertex}:

\begin{definition}
When an $S$-disposable vertex $v$ is connected to at least one boundary vertex $b$, it knows 
its \emph{nearest represented vertex} $\nr(v) $ which is the first non-disposable node in the BC-tree of the region on the path from $b$ to $v$ (note that this node is one unique cutvertex). 
When an $S$-disposable vertex $v$ is not connected to the boundary, it has $\nr(v)=\nil$. 
\end{definition}

Note also that item~\ref{it:externallybiconnected} requires the pseudo-block to know its exactly two neighbours. 

\begin{lemma}\label{lem:preservebiconII}Vertices $u$ and $v$ are biconnected if and only if either
	\begin{itemize}
		\item $u$ and $v$ are non-boundary vertices of the same region and are biconnected in the region, or
		\item $u$ is a non-boundary vertex that is biconnected in its region $R$ with $\nr(u)$ and $\nr(u)=v$, or
		\item $\CB(u) = \CB(v)$ is a pseudo-block and its neighbours are biconnected in $G_{\mathcal{R}}$, or
		\item $\CB(u)\neq \CB(v)$ are biconnected in $G_{\mathcal{R}}$
	\end{itemize}
\end{lemma}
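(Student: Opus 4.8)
The plan is to derive Lemma~\ref{lem:preservebiconII} from Lemma~\ref{lem:preservebicon} by refining the latter's first condition (the case $u=v$ being trivial). Observe first that the last two bullets of Lemma~\ref{lem:preservebiconII} are verbatim conditions~\ref{it:pseudobiconnected} and~\ref{it:externallybiconnected} of Lemma~\ref{lem:preservebicon}. Hence, writing the three conditions of Lemma~\ref{lem:preservebicon} as (I-1),(I-2),(I-3) and the four bullets of Lemma~\ref{lem:preservebiconII} as (II-1),$\ldots$,(II-4), and using (II-3)$=$(I-2) and (II-4)$=$(I-3), it suffices to prove that, \emph{assuming} neither (I-2) nor (I-3) holds, condition (I-1) is equivalent to (II-1)$\lor$(II-2). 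I would prove the two implications separately.

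The reverse direction is immediate: each of (II-1) and (II-2) asserts that $u$ and $v$ are biconnected inside a single region $R$ (in (II-2), with the witness $v=\nr(u)$), and since $R$ is a subgraph of $G$, two vertex-disjoint $u$--$v$ paths in $R$ are also vertex-disjoint in $G$; thus biconnectivity in $R$ implies biconnectivity in $G$, which is exactly (I-1).

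For the forward direction I would split (I-1) according to the boundary status of $u,v$. If neither is a boundary vertex, then $R$ is the \emph{unique} region containing both (only boundary vertices are shared), and ``biconnected in $R$'' is precisely (II-1). Otherwise exactly one of them, say $v$, is a boundary vertex, $u$ is not, and $u,v$ are biconnected in $u$'s region $R$; here I case on whether $u$ is represented. If $u$ is represented, then $\CB(u)$ is the surviving block or pseudoblock incident to $u$ and $\CB(v)$ is the image of $v$ in $\CBC(R,\boundary R)$, and the within-$R$ biconnectivity of $u,v$ transfers to the corresponding relation between $\CB(u)$ and $\CB(v)$ in $G_{\mathcal{R}}$ — precisely what (II-3)/(II-4) test, contradicting the assumption, so this subcase cannot arise under our hypothesis. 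If $u$ is \emph{not} represented, then $u$ is $S$-disposable and incident to no non-disposable block, so the path in $\BC(R)$ from $u$ towards the boundary first meets a non-disposable node at the cutvertex $\nr(u)$; I claim that $u,v$ being biconnected in $R$ forces $\nr(u)=v$, since otherwise $\nr(u)$ would be a cutvertex of $R$ separating $u$ from every boundary vertex, in particular from $v$. This yields (II-2).

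The main obstacle is exactly this mixed case where one vertex is a boundary vertex: because such a vertex is shared among several regions, one cannot directly invoke ``biconnected in the region $R$'' without committing to an occurrence of $v$. The crux is therefore to show that the within-region biconnectivity of a non-boundary $u$ with a boundary $v$ is faithfully recorded by the representation — detected by a query in $G_{\mathcal{R}}$ when $u$ is represented, and otherwise by the single equality $\nr(u)=v$ — and to verify that these two mechanisms are mutually exclusive and jointly exhaustive, i.e. that a non-boundary vertex is represented exactly when its incident block survives compression, and is otherwise reachable from the boundary only through its nearest represented cutvertex $\nr(u)$.
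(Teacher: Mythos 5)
Your skeleton is exactly the paper's: the paper disposes of this lemma in one sentence, by expanding item~1 of Lemma~\ref{lem:preservebicon} according to whether both or only one of $u,v$ is non-boundary, and your reduction (the last two bullets being conditions~2 and~3 of Lemma~\ref{lem:preservebicon} verbatim, plus an equivalence of condition~1 with the first two bullets under the assumption that conditions~2 and~3 fail) is a faithful formalization of that plan. The reverse direction, the both-non-boundary case, and your sub-case B2 ($u$ disposable and unrepresented, where $\nr(u)=v$ is forced) are sound. The gap is sub-case B1: the assertion that when $u$ is represented, biconnectivity of $u$ with the boundary vertex $v$ inside $R$ ``transfers'' to $G_{\mathcal{R}}$, so that condition~2 or~3 of Lemma~\ref{lem:preservebicon} must hold and the sub-case is vacuous under your hypothesis. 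You never prove this (you yourself defer it as ``the crux''), and it fails precisely when the shared block is swallowed by a pseudoblock.

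Concretely, let $\boundary R=\set{v,w_1,w_2}$ and let $R$ consist of four triangles $B=\set{v,u,t}$, $B'=\set{u,c,t'}$, $B''=\set{c,w_1,s}$, $B'''=\set{c,w_2,s'}$, so the BC-tree is $v - B - u - B' - c$ with $B''$, $B'''$ hanging from $c$ toward $w_1,w_2$. The critical nodes are $v,w_1,w_2$ and the meet $c$; the nodes $B,u,B'$ are contractible and form a maximal contractible path with distinct end blocks, hence are contracted into a single pseudoblock $P$ of capacity $1$, and $\CBC(R,\boundary R)$ is the tree $v - P - c - B'' - w_1$, $c - B''' - w_2$. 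Attach only pendant neighbouring regions at $v,w_1,w_2$, so $G_{\mathcal{R}}$ is exactly this tree. Now $u$ and $v$ are biconnected in $R$ (they share the triangle $B$) and $u$ \emph{is} represented, with $\CB(u)=P\neq v=\CB(v)$; yet the edge between $P$ and $v$ is a bridge of $G_{\mathcal{R}}$, so $\CB(u)$ and $\CB(v)$ are not biconnected there, i.e.\ neither condition~2 nor~3 of Lemma~\ref{lem:preservebicon} holds. So the sub-case does arise under your hypothesis, your contradiction never materializes, and no other bullet rescues the pair: $v$ is a boundary vertex (killing the first bullet) and $u$ is contractible, not disposable, so $\nr(u)$ is not even defined (killing the second). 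The difficulty is intrinsic: the pair $(t',v)$ is \emph{not} biconnected in $G$ but has the same representatives $(P,v)$, so no test that looks only at $G_{\mathcal{R}}$ can separate it from $(u,v)$; distinguishing them requires a follow-up query inside the region, exactly as the paper does for nearest cutvertices in Lemma~\ref{lem:nearest}. To be fair, the paper's own one-sentence proof, and its preceding claim that when the non-boundary vertex is represented one may simply query $G_{\mathcal{R}}$, gloss over this same pseudoblock-adjacent-to-boundary case; but as a proof, your B1 step is where the argument breaks.
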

\begin{proof}
	Follows from Lemma~\ref{lem:preservebicon} by expanding item~\ref{it:internallybiconnected} into the two cases of whether both or only one vertex is non-boundary.
\end{proof}

Note that patchwork graphs are well-behaved and respect sub-divisions of $r$-divisions in the following sense:
\begin{lemma}\label{lem:samecbc}
If $S\subseteq \partial \mathcal{R}$, then $\CBC(G,S) = \CBC(G_{\mathcal{R}},S)$
\end{lemma}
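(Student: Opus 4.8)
The plan is to prove the identity in two stages: first reduce to the special case $S=\boundary\mathcal{R}$, and then settle that case by comparing the two compressed forests node by node. For the reduction I would first record a general \emph{composition} property of the compression operator: for any bicapacitated graph $H$ and any two capacity-$1$ vertex sets $S\subseteq T\subseteq V(H)$, one has $\CBC(H,S)=\CBC(\CBC(H,T),S)$. The reason is that the minimal subforest of $\BC(H)$ spanning $S$ is contained in the minimal subforest spanning $T$, so no node lying on an $s_1\longleftrightarrow s_2$ path with $s_1,s_2\in S$ is ever deleted by the first (coarser) compression; what remains is to check that a node's $S$-classification computed inside $\CBC(H,T)$ matches its $S$-classification inside $\BC(H)$, and that the two rounds of pseudo-block contraction compose into one. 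This is a routine but slightly fiddly case analysis on whether a $T$-critical node or a $T$-pseudo-block is $S$-critical, $S$-contractible, or $S$-disposable. Granting this property and applying it with $T=\boundary\mathcal{R}$ to both $G$ and to $G_{\mathcal{R}}$ (whose vertex set contains $\boundary\mathcal{R}$), the lemma reduces to the single identity $\CBC(G,\boundary\mathcal{R})=\CBC(G_{\mathcal{R}},\boundary\mathcal{R})$.

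For this crux I would exhibit a capacity- and $\boundary\mathcal{R}$-preserving isomorphism between the two forests. The guiding observation is that compressing with respect to exactly $\boundary\mathcal{R}$ makes the global compression align perfectly with the per-region compressions out of which $G_{\mathcal{R}}$ is built: every boundary vertex lies in $S=\boundary\mathcal{R}$ and is therefore $S$-critical, so no contractible path of $\BC(G)$ is ever contracted \emph{across} a boundary vertex. Consequently a maximal $\boundary\mathcal{R}$-contractible path of $\BC(G)$ decomposes, region by region, into segments each lying in a single region $R$ and running between two boundary vertices of $R$, i.e. into exactly the non-$\boundary R$-disposable material that survives in $\CBC(R,\boundary R)$; conversely, every node present in some $\CBC(R,\boundary R)$ lies on a $\boundary R$-path and hence on a $\boundary\mathcal{R}$-path of $G$. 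This already matches the vertex, cutvertex, and pseudo-block nodes of the two forests.

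The main obstacle, and the place where Lemma~\ref{lem:preservebicon} does the real work, is reconciling the \emph{blocks}: a block of $G$ may span several regions and therefore appear in $G_{\mathcal{R}}$ not as one block node but as a cycle-like gadget stitched through shared boundary vertices, so that taking $\BC(G_{\mathcal{R}})$ must re-merge exactly this gadget back into a single block node with the correct capacity. Lemma~\ref{lem:preservebicon} is precisely what guarantees this: a set of represented vertices is biconnected in $G_{\mathcal{R}}$ if and only if it is biconnected in $G$, so the maximal biconnected subgraphs of $G_{\mathcal{R}}$ correspond bijectively to the blocks of $G$ that survive the compression, and the stitching at boundary vertices neither creates spurious biconnections nor destroys genuine ones. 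I would finish by checking that capacities are preserved under this correspondence -- a surviving block is a bridge (capacity $1$) in $G$ iff its image is a bridge in $G_{\mathcal{R}}$, while cutvertices, boundary vertices, and pseudo-blocks all keep capacity $1$ -- which, together with the node- and adjacency-matching above, yields the claimed isomorphism and hence $\CBC(G,\boundary\mathcal{R})=\CBC(G_{\mathcal{R}},\boundary\mathcal{R})$.
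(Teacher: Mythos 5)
Your high-level architecture (reduce to $S=\boundary\mathcal{R}$ via a composition property, then exhibit an isomorphism for that special case, using Lemma~\ref{lem:preservebicon} to re-merge blocks that span several regions) is genuinely different from the paper's proof, which for general $S$ directly establishes a three-way correspondence between the $S$-critical, $S$-disposable and $S$-contractible nodes of $\BC(G)$ and of $\BC(G_{\mathcal{R}})$. The difference of route is not the problem; the problem is that your node-matching step rests on a false claim. You assert that ``every node present in some $\CBC(R,\boundary R)$ lies on a $\boundary R$-path and hence on a $\boundary\mathcal{R}$-path of $G$,'' and conclude that this ``already matches the vertex, cutvertex, and pseudo-block nodes of the two forests.'' Counterexample: let $R$ be a star with center $c$ and leaves $b_1,b_2,b_3\in\boundary R$, and let another region $R'$ contain the cycle $b_1b_2b_3$, so that $G$ is the wheel on four vertices ($K_4$). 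Then $c=\meet(b_1,b_2,b_3)$ is $\boundary R$-critical and survives in $\CBC(R,\boundary R)$, hence is a node of $G_{\mathcal{R}}$; but in $\BC(G)$ the whole graph is a single block and $c$ is a disposable leaf, lying on no $\boundary\mathcal{R}$-path at all. The lemma is saved only because after gluing, $c$ also becomes a leaf of $\BC(G_{\mathcal{R}})$, so the \emph{second} round of compression -- the one implicit in $\CBC(G_{\mathcal{R}},S)$, which first recomputes the block forest of the bicapacitated graph $G_{\mathcal{R}}$ and then compresses it -- deletes $c$ as well. Your matching never performs this second round: as written you compare $\CBC(G,\boundary\mathcal{R})$ against the nodes of $G_{\mathcal{R}}$ itself, rather than against the nodes of $\CBC(G_{\mathcal{R}},\boundary\mathcal{R})$. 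Any correct version of your isomorphism must classify which nodes of $G_{\mathcal{R}}$ are critical, contractible, or disposable with respect to $S$ \emph{inside} $\BC(G_{\mathcal{R}})$ -- which is exactly the correspondence the paper proves directly.

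There is a second, smaller gap: the reduction to $S=\boundary\mathcal{R}$ relies on the composition identity $\CBC(H,S)=\CBC(\CBC(H,T),S)$ for $S\subseteq T$, which you declare ``routine but slightly fiddly'' and do not prove. It does appear to hold, but it is not routine relative to the statement being proved: $\CBC(H,T)$ is a bicapacitated forest, so applying $\CBC(\cdot,S)$ to it requires recomputing the block structure of a capacitated graph and analyzing how capacity-$2$ nodes, bridges, and pseudoblocks reappear in $\BC(\CBC(H,T))$ -- an argument of essentially the same nature and difficulty as Lemma~\ref{lem:samecbc} itself. (Compare the paper's observation immediately following the lemma, that $G_{\mathcal{R}_1}=(G_{\mathcal{R}_2})_{\mathcal{R}_1}$ when $\boundary\mathcal{R}_1\subseteq\boundary\mathcal{R}_2$, which is presented as a \emph{consequence} of the same lines of thought, not as an ingredient.) So as it stands, the proposal defers the real work into an unproven sub-lemma, and in the part it does argue, it matches the wrong pair of objects.
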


\begin{proof}
It is enough to show a correspondence between the critical, disposable, and contractible BC-nodes. 

Consider an $S$-critical BC-node $x$ of $G$. It may overlap with several regions. However, in each region, each vertex of $x$ lies on some $r_1 \longleftrightarrow r_2$ path for $r_1,r_2\in\partial\mathcal{R}$, so they are never disposable. But then, since $S\subseteq \partial \mathcal{R}$, $x$ is also $\partial \mathcal{R}$-critical, and thus, present in $G_{\mathcal{R}}$. Clearly, once the block is present in $G_{\mathcal{R}}$, it is also $S$-critical in $G_{\mathcal{R}}$.

If a BC-node of $G$ is $S$-disposable, we only need to observe that its $\partial\mathcal{R}$-contractible and $\partial\mathcal{R}$-critical parts, for each path $r_1 \longleftrightarrow r_2$ they lie on, at most one endpoint is not $S$-disposable. 

Finally, if a BC-node $x$ of $G$ is $S$-contractible, then it lies on some path $s_1\longleftrightarrow s_2$, which $\partial \mathcal{R}$ cuts up into subpaths $r_1\longleftrightarrow r_2\longleftrightarrow r_3 \longleftrightarrow \ldots$ in (not necessarily different) regions $R_1, R_2, R_3, \ldots$. But then, all parts of $x$ are preserved as either $\partial\mathcal{R}$-critical or $\partial\mathcal{R}$-contractible $\CBC(R_i)$-vertices, and thus, survive in $\CBC(G_{\mathcal{R}},S)$. On the other hand, if a vertex in $R_i$ does not belong in $x$, then it does not lie on any of the paths $r_j\longleftrightarrow r_{j+1}$, and can thus not be represented by a vertex or a pseudo-block on that path.
\end{proof}

The same lines of thought can be used to make the following observation about how nested $r$-divisions behave with respect to patchwork graphs:
\begin{observation}
If $\boundary\mathcal{R}_1\subseteq \boundary\mathcal{R}_2$, then $G_{\mathcal{R}_1} = (G_{\mathcal{R}_2})_{\mathcal{R}_1}$. 
\end{observation}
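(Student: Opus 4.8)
The plan is to reduce the claim to Lemma~\ref{lem:samecbc} applied region by region. Write $\mathcal{R}_1=\set{A_1,\ldots,A_\ell}$ for the coarser division, and recall that (as throughout) $\mathcal{R}_2$ refines $\mathcal{R}_1$, so that each $A\in\mathcal{R}_1$ is partitioned by $\mathcal{R}_2$ into an $r_2$-division which we denote $\mathcal{R}_2\cap A$. The strategy is to unfold both sides of the claimed equality as unions of compressed BC-forests indexed by $A\in\mathcal{R}_1$ and match them term by term.

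First I would unfold the right-hand side. By definition $(G_{\mathcal{R}_2})_{\mathcal{R}_1}=\bigcup_{A\in\mathcal{R}_1}\CBC\big((G_{\mathcal{R}_2})\restriction A,\ \boundary A\big)$, where $(G_{\mathcal{R}_2})\restriction A$ denotes the part of the patchwork graph $G_{\mathcal{R}_2}$ belonging to the region $A$ of $\mathcal{R}_1$. The key structural observation is that this restriction is exactly the patchwork graph $A_{\mathcal{R}_2\cap A}$ of $A$ with respect to its own $r_2$-division: since $G_{\mathcal{R}_2}=\bigcup_{R\in\mathcal{R}_2}\CBC(R,\boundary R)$ is assembled from the compressed BC-forests of the small regions, and every small region $R\in\mathcal{R}_2$ lies inside a unique $A\in\mathcal{R}_1$, the patches coming from the small regions inside $A$ stitch together to form precisely $A_{\mathcal{R}_2\cap A}$. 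Two small regions belonging to different members of $\mathcal{R}_1$ can only share a vertex lying on $\boundary A\subseteq\boundary\mathcal{R}_1$, so the grouping of patches by $\mathcal{R}_1$ is well defined and taking the union over $A$ recovers all of $G_{\mathcal{R}_2}$.

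Next I would invoke Lemma~\ref{lem:samecbc} for each fixed $A$, taking the base graph to be $A$, the $r$-division to be $\mathcal{R}_2\cap A$, and $S=\boundary A$. The hypothesis of that lemma requires $\boundary A\subseteq\boundary(\mathcal{R}_2\cap A)$; this holds because $\boundary A\subseteq\boundary\mathcal{R}_1\subseteq\boundary\mathcal{R}_2$, and a boundary vertex of $A$ that lies in $\boundary\mathcal{R}_2$ is, when viewed inside $A$, a boundary vertex of the sub-division $\mathcal{R}_2\cap A$. The lemma then gives $\CBC(A_{\mathcal{R}_2\cap A},\boundary A)=\CBC(A,\boundary A)$. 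Taking unions over all $A\in\mathcal{R}_1$ yields $(G_{\mathcal{R}_2})_{\mathcal{R}_1}=\bigcup_{A\in\mathcal{R}_1}\CBC(A,\boundary A)=G_{\mathcal{R}_1}$, which is the claim.

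The main obstacle is the structural step identifying $(G_{\mathcal{R}_2})\restriction A$ with $A_{\mathcal{R}_2\cap A}$: one must verify that the union defining $G_{\mathcal{R}_2}$ respects the coarser partition $\mathcal{R}_1$, i.e.\ that the only vertices shared between patches of distinct $\mathcal{R}_1$-regions are boundary vertices of $\mathcal{R}_1$, so that restricting to $A$ and then compressing commutes with first compressing inside $A$. Everything else is a direct application of Lemma~\ref{lem:samecbc} together with the bookkeeping on which boundary vertices are inherited under refinement.
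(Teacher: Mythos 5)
Your proof is correct, but it takes a more modular route than the paper. The paper disposes of this observation in one line, saying it follows from ``the same lines of thought'' as Lemma~\ref{lem:samecbc}, i.e.\ it implicitly re-runs the direct correspondence between $S$-critical, $S$-disposable, and $S$-contractible nodes, this time comparing $G$ with $G_{\mathcal{R}_2}$. You instead use Lemma~\ref{lem:samecbc} as a black box, applied once per region $A\in\mathcal{R}_1$ with base graph $A$, division $\mathcal{R}_2\cap A$, and $S=\boundary A$, after first establishing the structural identification that the part of $G_{\mathcal{R}_2}$ lying inside $A$ is exactly the patchwork graph $A_{\mathcal{R}_2\cap A}$; taking unions over $A$ then finishes the proof. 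The price of your route is precisely that identification step plus the boundary bookkeeping ($\boundary A\subseteq\boundary\mathcal{R}_1\subseteq\boundary\mathcal{R}_2$, and patches from distinct $\mathcal{R}_1$-regions only meeting in $\boundary\mathcal{R}_1$), both of which you correctly flag and justify. What it buys is reuse rather than repetition: no node-classification argument has to be redone, and your per-region formulation $\CBC(A_{\mathcal{R}_2\cap A},\boundary A)=\CBC(A,\boundary A)$ is exactly the form in which the paper later invokes Lemma~\ref{lem:samecbc} in Section~\ref{sec:decremental} (where $R^\prime=R_{\mathcal{R}^\prime}$ and the data structure maintains the per-region patchwork graphs $A_{\mathcal{R}\cap A}$), so your decomposition also makes explicit why the observation is the statement the algorithm actually needs.
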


\section{Decremental Biconnectivity in Patchwork Graphs}\label{sec:decremental}
Given the forest of BC-trees for (the patchwork graph associated with) each region of $G$ in an $r$-division $\mathcal{R}$, we want to explicitly maintain $G_{\mathcal{R}}$ and $\BC(G_{\mathcal{R}})$.

Let $R^\prime$ be a bicapacitated graph associated with region $R$, and suppose that $\CBC(R^\prime,\boundary R)=\CBC(R,\boundary R)$.   We will arrange things so either $R^\prime=R$ (with all vertices having capacity $1$), or $R^\prime=R_{\mathcal{R}^\prime}$ for some $r$-division $\mathcal{R}^\prime$ of $R$ with $\boundary R\subseteq\boundary\mathcal{R}^\prime$, so the equality follows from Lemma~\ref{lem:samecbc}.

We will maintain a fully dynamic biconnectivity structure for $R^\prime$ with amortized time $t(n)\in O(\poly(\log n))$ per operation, e.g. using~\cite{Holm:2001}\footnote{A faster algorithm here would just make more pairs of $r$-divisions suitable.}.
We use this structure to explicitly maintain $\BC(R^\prime)$ under the
following operations:
\begin{description}
\item[path deletion] --- given a path between two vertices of capacity $1$, whose internal vertices all have degree $2$, deletes all edges and internal vertices on the path.
\item[block split] --- given a vertex $u$ of capacity $2$, and an adjacent vertex $v$ of capacity $1$, split $u$ into two vertices $u_1,u_2$ of capacity $2$ connected by a path with $2$ edges via $v$, with $u_1,u_2$ partitioning the remaining neighbors of $u$.
\item[pseudoblock contraction] --- given a path of $3$ vertices, all having degree $2$ and the middle having capacity $1$, contract the path to a single vertex with capacity $1$.
\end{description}
The point is that if one of these operations is applied to $R^\prime$, then the change to $\BC(R^\prime)$ and $\CBC(R^\prime,\boundary R)$ can also be described by a sequence of these operations.\footnote{By \emph{explicit maintenance} is meant that each rooted BC-tree is maintained such that finding the parent of a vertex or block takes constant time.}

\begin{lemma}\label{lem:explicitBC}
  There is a data structure that explicitly maintains $\BC(R^\prime)$ 
  that can be initialized, and support any sequence of $O(\abs{R^\prime})$ path deletions, block splits, and pseudoblock contractions, in $O(\abs{R^\prime}t(n^\prime)\log n^\prime)$ total time, where $n^\prime$ is the number of vertices in $R^\prime$.
\end{lemma}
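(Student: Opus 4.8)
The plan is to run two coupled structures on $R^\prime$ and keep them synchronized. The first is the fully dynamic biconnectivity structure of Lemma~\ref{lem:fullydynamiccapacitated} applied to the bicapacitated graph $R^\prime$, supporting each edge update and each biconnectivity- or nearest-cutvertex query in amortized $O(t(n^\prime))$ time; this is the oracle I consult to \emph{discover} how $\BC(R^\prime)$ changes. The second is an explicit rooted representation of $\BC(R^\prime)$ in which every node stores a parent pointer -- so that parent lookups are worst-case $O(1)$, as required by explicit maintenance -- and keeps its children in a balanced search tree, so that a child set can be split or merged in $O(\log n^\prime)$ time.

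First I would record how each operation acts on $R^\prime$ as a small number of edge updates. A pseudoblock contraction is a purely local rewiring of three nodes; a path deletion removes one edge per edge of the prescribed degree-$2$ path; and a block split of $u$ at $v$ is realized by detaching the neighbour set assigned to $u_2$ from $u$ and reattaching it, through $v$, to a fresh capacity-$2$ node. Two accounting facts keep the total down. Because the process is destructive, the total number of internal nodes ever removed by path deletions is at most the number of nodes ever created, which is $O(\abs{R^\prime})$ since the initial forest has $O(\abs{R^\prime})$ nodes and each block split creates only $O(1)$. And by always letting the \emph{smaller} side play the role of $u_2$, each incidence is moved $O(\log n^\prime)$ times over the whole sequence. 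Hence all edge updates fed to the fully dynamic structure cost $O(\abs{R^\prime}\,t(n^\prime)\log n^\prime)$ in total.

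The core of the proof is the claim announced just before the lemma: applying one of the three operations to $R^\prime$ changes $\BC(R^\prime)$ by a sequence of the \emph{same} three operations, of length $O(1)$. For path deletion and pseudoblock contraction this is immediate, since the affected part of $\BC(R^\prime)$ is itself a chain of bridges and degree-$2$ nodes mirroring the edited part of $R^\prime$. The delicate case is the block split: deleting an edge inside a biconnected block could in principle fracture it into a long chain of blocks separated by several new cutvertices. I would argue that our restricted split -- peeling off a prescribed neighbour set through the single prescribed cutvertex $v$ -- introduces exactly one new cutvertex, so the effect on $\BC(R^\prime)$ is the single node-operation replacing the block node $u$ by the path $u_1\longleftrightarrow v\longleftrightarrow u_2$. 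To realise this on the explicit tree I split $u$'s balanced child tree along the partition in $O(\log n^\prime)$ time, and when the split point is not handed to me directly I recover it with a nearest-cutvertex query, which by the remark in Section~\ref{sec:prelims} costs $O(\log n^\prime)$ biconnectivity queries, i.e.\ $O(t(n^\prime)\log n^\prime)$.

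Summing up, the sequence has $O(\abs{R^\prime})$ operations, each inducing $O(1)$ node-operations on the explicit $\BC(R^\prime)$, and each such node-operation costs $O(t(n^\prime)\log n^\prime)$ -- dominated by the nearest-cutvertex search and the balanced child-tree surgery -- on top of the edge-update cost already charged; this yields the claimed $O(\abs{R^\prime}\,t(n^\prime)\log n^\prime)$. The step I expect to be the genuine obstacle is the block-split analysis: proving that the restricted split produces exactly one new cutvertex, and that it is the prescribed $v$, so that one input operation really yields $O(1)$ rather than an unbounded chain of BC-operations. This is exactly the closure property that lets the same three operations reappear one level up, so that the recursion described in Section~\ref{sec:decremental} composes.
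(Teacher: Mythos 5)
Your overall architecture matches the paper's: simulate the three operations on $R^\prime$ as edge updates in the fully dynamic bicapacitated structure of Lemma~\ref{lem:fullydynamiccapacitated}, use the smaller-half (``non-largest set'') accounting so each edge is moved $O(\log n^\prime)$ times, and use nearest-cutvertex queries in that structure to discover how $\BC(R^\prime)$ must change. However, your self-declared core claim --- that each input operation induces only $O(1)$ node-operations on $\BC(R^\prime)$, and in particular that ``for path deletion \dots this is immediate, since the affected part of $\BC(R^\prime)$ is itself a chain of bridges and degree-$2$ nodes'' --- is false, and this is precisely where the real difficulty of the lemma lives. The definition of path deletion only requires the \emph{internal} vertices to have degree $2$ in $R^\prime$; the path itself may lie inside a single large block. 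For example, let $R^\prime$ consist of a path $u{-}p_1{-}\cdots{-}p_k{-}v$ together with a chain of cycles $C_1,\ldots,C_j$ joined at cutvertices, with $u\in C_1$ and $v\in C_j$: the whole graph is one block, yet deleting the path leaves the chain of cycles, fracturing that one block node of $\BC(R^\prime)$ into $j=\Theta(\abs{R^\prime})$ blocks, i.e.\ an unbounded sequence of induced block splits from a single operation. This is exactly the ``second challenge'' the paper flags in its techniques section (``the deletion of an edge can cause a block to fall apart into a chain of blocks''). You also have the easy and hard cases reversed: a block split applied to $R^\prime$ is the benign case (it induces at most one block split in $\BC(R^\prime)$, or nothing), while path deletion is the delicate one.

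Because of this, your charging scheme ($O(\abs{R^\prime})$ operations $\times$ $O(1)$ induced BC-operations $\times$ $O(t(n^\prime)\log n^\prime)$ each) does not cover the dominant cost. The paper's proof instead handles a path deletion by a \emph{parallel search from both endpoints} of the deleted path, guided by nearest-cutvertex queries, to peel off the newly created blocks one at a time while only paying time proportional to the smaller side of each induced split; the total then folds into the same $O(\abs{R^\prime}\log n^\prime)$ bound on update operations via the smaller-half amortization, rather than into a per-operation constant. To repair your proof you would need (a) to admit the many-splits case for path deletion, (b) an algorithm to locate those splits whose cost is proportional to the smaller fragments (your balanced-BST surgery alone does not provide this), and (c) an amortized bound on the total number of induced block splits over the whole sequence, none of which appear in the proposal.
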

\begin{proof}
  Use the data structure from Lemma~\ref{lem:fullydynamiccapacitated} as a subroutine. Start by inserting all the edges. Each pseudoblock contraction can be simulated using a constant number of edge insertions or deletions. The total number of edges participating in path deletions is upper bounded by $O(n^\prime)$.
  Each block split either takes only a constant number of edge insertions or deletions, or makes a non-trivial partition of the adjacent edges. In the latter case, we still do a constant number of edge insertions and deletions, followed by one edge move (deletion and insertion) for each edge that ends up in a non-largest set in the partition.  Each edge is moved in this way 
  $O(\log n^\prime)$ times, so the total number of update operations done on the fully dynamic structure is %
  $O(\abs{R^\prime}\log n^\prime)$.  Once an update to $R^\prime$ has been simulated in the fully dynamic structure, we can use queries in that structure to find any new cutvertices that we need to update $\BC(R^\prime)$.
If the update in $R^\prime$ was a path deletion, then the corresponding update to $\BC(R^\prime)$ is either a path deletion, or a sequence of block splits. Each of these block splits can be found using the cutvertices given by the fully dynamic structure: Do a parallel search from both endpoints, and use the nearest cutvertex-query from Lemma~\ref{lem:fullydynamiccapacitated} to guide the search and to know when a whole block has been found. 
If the update in $R^\prime$ was a block split, this will either do nothing in $\BC(R^\prime)$ or cause a single block split.  If the update in $R^\prime$ is a pseudoblock contraction, the corresponding update to $\BC(R^\prime)$ is at most one edge deletion (because the leaf corresponding to the cutvertex disappears), at most one pseudoblock contraction (corresponding to the same pseudoblock contraction), or nothing happens (because the pseudoblocks were disposable).
\end{proof}

The point is that we will be using this with $\abs{R^\prime}=O(n/(t(n)\log n))$, where $n$ is the number of vertices in $R$, which means the total time used on $R$ is $O(n)\subseteq O(\abs{R})$.

\begin{lemma} The data structure above (Lemma~\ref{lem:explicitBC}) can be extended to handle also the explicit changes to $\CBC(R^\prime,\boundary R)$%
.  Any sequence of $O(\abs{R^\prime})$ updates to $R^\prime$ cause %
$O(\abs{\boundary R})$ updates in $\CBC(R^\prime, \boundary R)$.\label{lem:explicitCBC}
\end{lemma}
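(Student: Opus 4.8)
The plan is to run the maintenance of $\CBC(R',\boundary R)$ as a slave process attached to the data structure of Lemma~\ref{lem:explicitBC}: every time that structure performs one of its three primitive operations on $\BC(R')$, we translate it into the corresponding change on $\CBC(R',\boundary R)$. The whole argument then splits into a \emph{feasibility} half (each $\BC(R')$ operation induces only $O(1)$ operations of the same three kinds on $\CBC$, so the maintenance composes and can be recursed upon) and a \emph{counting} half (only $O(\abs{\boundary R})$ of these are nontrivial). The structural fact underpinning both is that $\CBC(R',\boundary R)$ is exactly the compressed Steiner forest of $\BC(R')$ spanning the terminals $\boundary R$: its leaves and capacity carriers are boundary vertices, its branching (degree $\geq 3$) nodes are precisely the $\boundary R$-critical blocks, and between consecutive critical or leaf nodes lies a single pseudoblock. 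Since a Steiner forest on $k$ terminals has $O(k)$ branching nodes and $O(k)$ maximal degree-$2$ \emph{arcs}, at every moment $\CBC(R',\boundary R)$ has $O(\abs{\boundary R})$ nodes.

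For feasibility I would walk through the three operations Lemma~\ref{lem:explicitBC} applies to $\BC(R')$ and check each against the critical/contractible/disposable classification, maintained dynamically via the representatives $\CB$ and $\nr$ of Observation~\ref{obs:linearCBC} together with the cutvertex and ancestor queries already available. Any operation taking place entirely among $\boundary R$-disposable nodes leaves $\CBC$ untouched. On the backbone: a block split either keeps the affected block a single critical (resp.\ contractible) node, in which case it is absorbed into an existing pseudoblock and $\CBC$ is structurally unchanged, or it separates the incident arcs into two groups of size $\geq 2$, which is exactly a block split in $\CBC$; a pseudoblock contraction maps to a pseudoblock contraction (or at most one incident-edge deletion) in $\CBC$; and a path deletion maps to a path deletion in $\CBC$, possibly followed by a single pseudoblock contraction when it drops a critical node to degree $2$ and merges its two incident arcs. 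Because the induced operations are again of the three permitted kinds, the same machinery maintains $\CBC$, and the construction composes across levels.

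The heart of the proof is the count, which rests on decremental monotonicity: since edges are only deleted, biconnectivity only weakens, so blocks only split and the partition of $\boundary R$ induced by the pairwise meets of boundary vertices only \emph{refines}. I would charge the three kinds of $\CBC$-updates separately. A backbone \emph{path deletion} has internal vertices of degree $2$, hence lies within a single arc and disconnects the backbone into two parts; as every arc leads to a boundary vertex, both parts contain terminals, so each such deletion strictly increases the number of connected classes of $\boundary R$ and there are at most $\abs{\boundary R}-1$ of them. For \emph{block splits}, note that every productive one creates one new critical node and hence one new group of boundary vertices; since groups never merge in the decremental setting, the family of all groups that ever appear is laminar on the $\abs{\boundary R}$ terminals and thus has $O(\abs{\boundary R})$ members, bounding the number of block splits by $O(\abs{\boundary R})$. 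Finally, letting $\Phi$ be the number of nodes of $\CBC$, a block split increases $\Phi$ by $\Theta(1)$ while pseudoblock contractions (and node-removing path deletions) decrease it by $\Theta(1)$; since $\Phi\geq 0$ and $\Phi$ starts at $O(\abs{\boundary R})$, the number of decreasing operations is at most the initial value plus the number of increasing ones, again $O(\abs{\boundary R})$. Summing the three charges gives the claimed $O(\abs{\boundary R})$.

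The step I expect to be the main obstacle is the block-split count: making precise that the meet-induced refinement of $\boundary R$ is genuinely monotone under block splits, that each split contributes only $O(1)$ new laminar sets, and that this forces $O(\abs{\boundary R})$ total refinements rather than $O(\abs{\boundary R})$ \emph{per snapshot}. Adjacent care is needed to confirm that a split absorbed into an existing pseudoblock truly leaves $\CBC$ unchanged --- so that it is correctly excluded from the count --- and that nodes which turn disposable are pruned from $\CBC$ without extra amortized cost, lest the harmless $O(\abs{R'})$ background splits leak into the bound.
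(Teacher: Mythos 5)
There is a genuine gap, and it sits exactly at the step you flagged as the heart of the proof: the count of block splits. Your feasibility dichotomy for block splits is incomplete. A split of a critical block $u$ via a cutvertex $v$ can shed \emph{exactly one} arc: $u$ becomes $u_1$ (keeping all but one remaining neighbour, still critical) and $u_2$ (getting exactly one), so $u_2$ is a contractible degree-$2$ block. If the direction it was shed into contains no existing pseudoblock --- e.g.\ the neighbouring cutvertex connects directly to another critical node or to a boundary vertex --- then a \emph{new} pseudoblock must be created. This is a genuine structural update to $\CBC(R',\boundary R)$: it is neither ``absorbed into an existing pseudoblock, structurally unchanged'' nor ``a block split separating the arcs into two groups of size $\geq 2$''. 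None of your three charges sees it: it creates no new critical node, so the laminar-family charge misses it; it disconnects nothing, so the path-deletion charge misses it; and it \emph{increases} your node-count potential $\Phi$, so the third charge cannot absorb it --- that charge presupposes the number of $\Phi$-increasing operations is already bounded. Such splits can happen $\Theta(d(u))$ times for a single block, so bounding them is where the real work lies. This is precisely what the paper's charging scheme does (stated tersely in this lemma and spelled out in the proof of Lemma~\ref{lem:patchwork}): after filtering out the trivial splits by the identity-update optimization, every remaining block split either reduces the degree of some block or adds a pseudoblock in a direction that lacks one, so each initial node can be charged $O(d(v))$ splits, giving $\sum_v O(d(v)) = O(\abs{\CBC(R',\boundary R)}) = O(\abs{\boundary R})$ in total; the number of edges, and hence of path deletions and contractions, follows.

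Separately, even for the splits that do create a new critical node, the laminar claim is not justified as stated: the natural ``groups'' created by successive splits can cross. For a block with arms to terminals $a,\dots,i$, a first split can create the critical node with group $\set{f,g,h,i}$, and a later split of that node along its $f$-arm creates a critical node whose group is $\set{a,b,c,d,e,i}$, which crosses the first. What is true is that each such split introduces a new tree edge and hence a new \emph{bipartition} of the component's terminals, and pairwise-compatible bipartitions of a $k$-set number $O(k)$; alternatively the same degree-based potential handles this case too. Your path-deletion count (each deletion refines the connectivity partition of $\boundary R$, since every leaf of $\CBC$ is a boundary vertex) is correct, and your bound on contractions in terms of splits plus the initial size is fine; but both are downstream of the split count, so the missing pseudoblock-creation case leaves the lemma unproved as written.
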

\begin{proof}
	For each change to $\BC(R^\prime)$, we can update $\CBC(R^\prime,\boundary R)$ accordingly. This essentially consists of replaying the same change as in $\BC(R^\prime)$, followed by at most two pseudoblock contractions; at most one in each end of the path it possibly unfolds to.  Note however, that some operations will end up having no effect on the structure of $\CBC(R^\prime,\boundary R)$. For example a trivial block split followed by a pseudoblock contraction will change only which cutvertex separates the pseudoblock from the block.  In this case, rather than doing a split and a contract, we simply update the identity of the cutvertex.  With this optimization, the total number of block splits is upper bounded by $O(\abs{\boundary R})$, and so is the number of edges and hence the number of possible path deletions and pseudoblock contractions.
\end{proof}

It immediately follows that we are able to efficiently maintain the patchwork graph, by combining the lemma above with the definition of the patchwork graph,  $G_{\mathcal{R}}=\bigcup_{R\in\mathcal{R}}\CBC(R,\boundary R).$
\begin{lemma}\label{lem:patchwork}
  Given a graph $G$, and a strict $(r,s)$-division $\mathcal{R}$ of $G$, if we can explicitly maintain $\CBC(R,\boundary R)$ for each $R\in\mathcal{R}$ in amortized constant time per update after $O(\abs{R})$ preprocessing, then we can explicitly maintain $G_{\mathcal{R}}$ in amortized constant time per update after $O(\abs{G})$ preprocessing.  Furthermore, any sequence of $O(\abs{G})$ updates in $G$ cause %
  $O(\abs{G_\mathcal{R}})$ updates in $G_\mathcal{R}$.
\end{lemma}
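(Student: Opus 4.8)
The plan is to build and store $G_{\mathcal{R}}$ explicitly as the stitched union $\bigcup_{R\in\mathcal{R}}\CBC(R,\boundary R)$, and to show that every update to $G$ is local to a single region, so that maintaining the patchwork graph reduces to running the assumed per-region data structures and propagating their (localized) output changes into the shared structure. For preprocessing I would run the per-region data structure on each $R\in\mathcal{R}$, at total cost $\sum_R O(\abs{R})=O(n)$ since the $r$-division gives $\sum_R\abs{R}=O(n)$, and then assemble $G_{\mathcal{R}}$ by identifying the copies of each boundary vertex across the regions sharing it. Here I use that boundary vertices are never $\boundary R$-disposable (each $b\in\boundary R$ lies on the trivial path $b\longleftrightarrow b$), hence always present and serving as the stitching points; building the union costs $O(\abs{G_{\mathcal{R}}})\subseteq O(\abs{G})$, so the preprocessing bound holds.

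To handle an update I would first observe that every operation---edge/path deletion, block split, or pseudoblock contraction---is confined to one region $R$, because the regions partition the edges. I pass the update to $R$'s data structure, which by hypothesis maintains $\CBC(R,\boundary R)$ in amortized constant time and reports the resulting changes as a short sequence of the three canonical operations. Since the internal structure of $\CBC(R,\boundary R)$ depends only on $R$ and the shared boundary vertices persist, each reported change is a constant-size edit to $G_{\mathcal{R}}$ incident to that patch, leaving every other region's contribution untouched; I apply each such edit in constant time.

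For the \emph{furthermore} clause, each update to $G$ hits exactly one region, and by Lemma~\ref{lem:explicitCBC} the updates landing in $R$ produce $O(\abs{\boundary R})$ changes in $\CBC(R,\boundary R)$, hence $O(\abs{\boundary R})$ updates in $G_{\mathcal{R}}$; summing gives $\sum_R O(\abs{\boundary R}) = O(\abs{G_{\mathcal{R}}})$, using that each patch has size proportional to its boundary. For the running time, the per-region work over the entire sequence is $\sum_R O(\abs{R}+k_R)$, where $k_R$ is the number of updates hitting $R$; since $\sum_R\abs{R}=O(n)$ and $\sum_R k_R=O(\abs{G})$, this is $O(\abs{G})$, and applying the $O(\abs{G_{\mathcal{R}}})\subseteq O(\abs{G})$ edits to $G_{\mathcal{R}}$ adds only $O(\abs{G})$ more, yielding amortized constant time per update after $O(\abs{G})$ preprocessing.

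The step I expect to be the main obstacle is the locality/non-interference argument together with verifying that the induced updates are again legal inputs one level up: I must be certain that rewriting the compressed BC-forest of $R$ never forces a structural change in another region's patch, and that a change expressed inside $\CBC(R,\boundary R)$ is realized in $G_{\mathcal{R}}$ as a constant-size edit of the \emph{same} canonical type (path deletion, block split, or pseudoblock contraction). The stability of boundary vertices and Lemma~\ref{lem:samecbc} are what make this go through, but carefully tracing how each of the three patch-level operations translates into an operation on $G_{\mathcal{R}}$ is the delicate part.
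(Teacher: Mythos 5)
Your first part (preprocessing in $\sum_{R\in\mathcal{R}}O(\abs{R})=O(\abs{G})$ time, and routing each update to the unique region containing it) is essentially the paper's argument, which dismisses this as immediate from $\sum_{R\in\mathcal{R}}\abs{R}\in O(n/r)\cdot O(r)+m=O(n+m)$ (note your ``$\sum_R\abs{R}=O(n)$'' should be $O(n+m)$, since $\abs{R}$ counts edges too). The genuine gap is in the \emph{furthermore} clause. You bound the induced updates by summing Lemma~\ref{lem:explicitCBC} over regions, getting $\sum_{R}O(\abs{\boundary R})$, and then assert this is $O(\abs{G_{\mathcal{R}}})$ ``using that each patch has size proportional to its boundary.'' That inference goes the wrong way: patches of size $\Theta(\abs{\boundary R})$ give $\abs{G_{\mathcal{R}}}\leq\sum_R O(\abs{\boundary R})$, not the reverse, because $G_{\mathcal{R}}$ is a \emph{non-disjoint} union in which a boundary vertex shared by many regions is identified to a single vertex. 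Concretely, let $G$ be a star with center $b$ and $n-1$ leaves, with the edges partitioned into $n/r$ regions of $r$ edges each; then $\boundary R=\set{b}$ for every region, each $\CBC(R,\boundary R)$ is the isolated vertex $b$, so $\abs{G_{\mathcal{R}}}=1$ while $\sum_R\abs{\boundary R}=n/r$. Your chain of inequalities therefore cannot establish the claimed $O(\abs{G_{\mathcal{R}}})$ bound (in this example the true number of induced updates is $0$, but your bound is $n/r$).

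The paper avoids this by running the charging argument directly on $G_{\mathcal{R}}$ rather than summing per-region bounds: each block split in $G_{\mathcal{R}}$ either reduces the degree of some block or adds a pseudoblock, and since a new pseudoblock is never created in a direction where one already exists, an initial block vertex $v$ of $G_{\mathcal{R}}$ can cause at most $O(d(v))$ splits; hence at most $\sum_{v\in G_{\mathcal{R}}}O(d(v))=O(\abs{G_{\mathcal{R}}})$ splits in total, which also bounds the number of edges ever present and therefore the number of path deletions and pseudoblock contractions. This is the same potential argument as in Lemma~\ref{lem:explicitCBC}, but applied globally to the patchwork graph, and it is intrinsic to $G_{\mathcal{R}}$ -- no region-by-region summation is needed. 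It is worth noting that your weaker bound $\sum_R O(\abs{\boundary R})$ would actually suffice for how the lemma is invoked in Theorem~\ref{thm:partI}, where the paper itself immediately relaxes $\abs{A_{\mathcal{R}\cap A}}$ to $\sum_{R\in\mathcal{R}\cap A}\abs{\boundary R}$; but it does not prove the lemma as stated. Your closing worry (that region-local operations translate into legal operations of the same canonical type one level up) is legitimate but is the content of Lemma~\ref{lem:explicitCBC} and the surrounding discussion, not the missing piece here.
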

\begin{proof}
  Let $G$ have $n$ vertices and $m$ edges.  The first part follows
  trivially from $\sum_{R\in\mathcal{R}}\abs{R}\in
  O(n/r)O(r)+m=O(n+m)$.  Each block split in $G_{\mathcal{R}}$ either
  reduces the degree of some block, or adds a pseudoblock.  Since we
  do not add another pseudoblock when there already is one in a given
  direction, the maximum total number of splits an initial block
  vertex $v$ can cause is $O(d(v))$.  Thus the maximum number of
  splits is $\sum_{v\in G_{\mathcal{R}}}O(d(v))=O(\abs{G_{\mathcal{R}}})$,
  and so is the maximum number of edges and hence the number of
  possible path deletions and pseudoblock contractions.
\end{proof}

In order to use Lemma~\ref{lem:preservebiconII} to answer biconnected queries, we need to store some auxiliary information: for each pseudoblock, store its neighbours, and for each disposable vertex, store its nearest represented vertex. Thus, these need to be updated as the graph undergoes dynamic updates.

\begin{description}
	\item[path deletion] When a path from $x$ to $y$ is deleted, all vertices represented by internal nodes on the path become disposable. For each such vertex $v$, its nearest represented vertex becomes either $x$ or $y$. Furthermore, each vertex $u$ who had $v$ as its nearest represented vertex, now changes its nearest represented vertex to $\nr(u)=\nr(v)$. 
	In other words, the set of vertices having $x$ (or $y$) as a representative, is now the union of: vertices on the path, vertices represented by blocks or pseudo-blocks on the path, and the sets that these vertices used to represent.
	Note that these sets of vertices that have the same representative can be maintained via union find in $O(n \log n)$ total merge-time and $O(1)$ worst-case find-time, using the weighted quick-find algorithm (usually attributed to McIlroy and Morris, see~\cite{Aho:1974:DAC:578775}).
	As for the endpoints $x$ and $y$, they may change status from being represented by themselves to being represented by a block or pseudoblock.
	
	\item[block split] Note that a block is never the neighbour of a pseudoblock, nor is it the nearest represented \emph{vertex}, so block splits do not give cause to changes in neighbours and representatives.
	
	\item[pseudoblock contraction] does not give rise to changes in the nearest represented vertex - the vertices that were previously represented by a node that is involved in the contraction, are still represented, but now they are represented by the resulting pseudoblock. The set of vertices represented by the resulting pseudoblock is the union of vertices represented by nodes along the contracted path, again, this is done via union-find. Finally, the resulting pseudoblock is updated to remember its two neighbours.
\end{description}

We are now ready to prove:%
\begin{theorem}[first part of Theorem~\ref{thm:main}]\label{thm:partI}
	There exists a data structure that given a graph $G$ with $n$ vertices and $m$ edges, and given a suitable pair of $r$-divisions,  preprocesses $G$ in $O(m+n)$ time and handles any series of edge-deletions in $O(m)$ total time while answering queries to pairwise biconnectivity in $O(1)$ time.
\end{theorem}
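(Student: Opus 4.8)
The plan is to assemble the three-level tower sketched in Figure~\ref{fig:layers} from the nested pair $(\mathcal{A},\mathcal{R})$, and to run each level's maintenance at a cost that the two ratio conditions of a suitable pair force to be linear. At the \emph{bottom} level I treat each $R\in\mathcal{R}$ as a capacity-$1$ graph and maintain $\CBC(R,\partial R)$; the regions here have size $r_2=O(\poly(\log\log n))$. At the \emph{middle} level, for each $A\in\mathcal{A}$ I set $A':=A_{\mathcal{R}\cap A}=\bigcup_{R\in\mathcal{R}\cap A}\CBC(R,\partial R)$ and maintain $\BC(A')$ and $\CBC(A',\partial A)$; these working graphs have size $O(r_1 s_2/r_2)\le r_1=O(\poly(\log n))$. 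At the \emph{top} level I maintain $\BC(G_{\mathcal{A}})$. The two identities that glue the levels are Lemma~\ref{lem:samecbc}, which gives $\CBC(A',\partial A)=\CBC(A,\partial A)$ (using $\partial A\subseteq\partial\mathcal{R}$), and the observation that $\partial\mathcal{A}\subseteq\partial\mathcal{R}$ implies $G_{\mathcal{A}}=(G_{\mathcal{R}})_{\mathcal{A}}$; together they say that the union of the bottom-level patches is exactly $G_{\mathcal{R}}$, the union of the middle-level patches is exactly $G_{\mathcal{A}}$, and the compressed forest computed at one level agrees with that of the graph one level down. A single structure built by stacking these three patchwork constructions therefore maintains $G_{\mathcal{A}}$, $G_{\mathcal{R}}$, and all their BC-forests simultaneously.

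For the dynamic part I feed each edge deletion into the bottom level and propagate upward. By Observation~\ref{obs:linearCBC} the initial forests and representatives are built in $O(m+n)$ time. Each deletion, and each induced change, is one of the three primitives --- path deletion, block split, pseudoblock contraction --- and the key closure property (Lemmas~\ref{lem:explicitBC} and~\ref{lem:explicitCBC}) is that applying such a primitive at one level induces only $O(1)$ primitives of the same three types in the $\CBC$ it maintains, hence only the same kinds of primitives at the next level up; Lemma~\ref{lem:patchwork} bounds the total number of induced primitives at a level by the size of the patchwork it produces. At the bottom level a region together with an operation fits in $o(\log n)$ bits, so every maintenance step and every in-region query is resolved in $O(1)$ worst-case time by a precomputed look-up table, which also folds in the $\nr$ and pseudoblock-neighbour updates. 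The middle and top levels instead invoke Lemma~\ref{lem:explicitBC} on top of the fully dynamic biconnectivity structure of Lemma~\ref{lem:fullydynamiccapacitated}; alongside each $\BC$-forest I keep the auxiliary data required by Lemma~\ref{lem:preservebiconII}, namely the two neighbours of every pseudoblock and, for every disposable vertex, its nearest represented vertex $\nr(\cdot)$, the latter by weighted quick-find union-find with $O(1)$ finds, whose total merge cost is dominated at each level by that level's biconnectivity-maintenance cost.

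The running-time accounting is where the two suitable-pair conditions enter, each tight for exactly one level. At the top, $|G_{\mathcal{A}}|=O(|\partial\mathcal{A}|)=O(n\,s_1/r_1)$ and there are $O(|G_{\mathcal{A}}|)$ induced updates, so Lemma~\ref{lem:explicitBC} costs $O\!\big(n\tfrac{s_1}{r_1}t(n)\log n\big)$, which is $O(n)$ precisely because $r_1/s_1\in\Omega(t(n)\log n)$. At the middle, each $A'$ has $n'=O(r_1 s_2/r_2)\le r_1$ vertices and there are $O(n/r_1)$ regions, so summing Lemma~\ref{lem:explicitBC}'s $O(n't(n')\log n')$ (and using $t(n')\le t(r_1)$, $\log n'\le\log r_1$) gives $O\!\big(n\tfrac{s_2}{r_2}t(r_1)\log r_1\big)$, which is $O(n)$ precisely because $r_2/s_2\in\Omega(t(r_1)\log r_1)$. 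At the bottom the $O(m+n)$ operations each cost $O(1)$ by the look-up table, for $O(m+n)$ total, and preprocessing is linear at every level. A biconnectivity query is answered by one pass of Lemma~\ref{lem:preservebiconII}: it reduces to a biconnectivity test inside one region $R$ (answered in $O(1)$ from the explicit $\BC(R)$ by comparing parent pointers) together with a test between representatives in $G_{\mathcal{R}}$; the latter is resolved by one more application of Lemma~\ref{lem:preservebiconII}, bottoming out in a test inside one middle region and a test in $\BC(G_{\mathcal{A}})$, each again $O(1)$. Since the recursion has constant depth, queries take $O(1)$ time.

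The step I expect to be the main obstacle is not any single estimate but verifying that the whole tower is \emph{consistent and self-propagating}: that the patch maintained at each level really is the $\CBC$ of the graph one level below (so that Lemma~\ref{lem:samecbc} and the nesting observation apply verbatim), and that the three update primitives are genuinely closed under passing from a graph to its $\CBC$ and then to the patchwork above. This closure is the substance of Lemmas~\ref{lem:explicitBC}, \ref{lem:explicitCBC}, and~\ref{lem:patchwork}, so in the theorem itself the remaining work is bookkeeping: checking that the induced-update counts times the per-operation costs multiply out to $O(m+n)$ at each level --- which reduces exactly to the two ratio inequalities built into the definition of a suitable pair --- and confirming that the $\nr$ and pseudoblock-neighbour information demanded by Lemma~\ref{lem:preservebiconII} survives each primitive, as laid out in the update rules preceding the theorem.
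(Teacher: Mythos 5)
Your proposal is correct and follows essentially the same route as the paper's own proof: the same three-level tower (lookup tables on the $\poly(\log\log n)$-size regions, Lemmas~\ref{lem:explicitBC}/\ref{lem:explicitCBC} with the fully dynamic structure on the middle patches $A_{\mathcal{R}\cap A}$ and on the top patchwork), glued by Lemma~\ref{lem:samecbc}, with the two ratio conditions of a suitable pair absorbing exactly the $t(\cdot)\log(\cdot)$ overhead at the top and middle levels, and queries dispatched through Lemma~\ref{lem:preservebiconII}. The only stylistic difference is that you spell out the per-level summation and the constant-depth query recursion more explicitly than the paper does, which is harmless.
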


\begin{proof}%
Given a fine $r$-division $\mathcal{R}$ of $G$, build the forest of BC-trees and compressed BC-forest for each region, and build the patchwork graph $G_{\mathcal{R}}$. Given the coarse division $\mathcal{A}$, and given the patchwork graph for $\mathcal{R}$, build the forest of BC-trees and the compressed BC-forest for each region of the patchwork graph, and build the patchwork graph $G_{\mathcal{A}}$. Finally, build the forest of BC-trees for $G_{\mathcal{A}}$. 
The construction time is linear, due to \cite{Tarjan72} and  Observation~\ref{obs:linearCBC}.

Deletions are handled bottom up: updating the regions of $\mathcal{R}$, then the regions of $G_{\mathcal{R}}$ induced by $\mathcal{A}$, and then $G_{\mathcal{A}}$. The total time for deletions is linear, due to Lemmata~\ref{lem:explicitBC}, \ref{lem:explicitCBC}, and~\ref{lem:patchwork}.

In detail:  Since $r_2=O(\poly(\log\log n))$, we can afford to precompute and store a table of all simple graphs on $r_2$ vertices with $s_2$ boundary vertices, and how their BC-trees and compressed BC-trees change under any possible edge deletion.  Using such a table, the region $R$ in $\mathcal{R}$ containing the deleted edge can be updated in constant time.

The updates to $\CBC(R,\boundary R)$ may cause some updates to the
patchwork graph $A_{\mathcal{R}\cap A}$ for the region
$A\in\mathcal{A}$ containing the deleted edge.  By
Lemma~\ref{lem:patchwork}, we can find these in amortized constant
time per edge deletion in $A$, and there are at most
$\abs{A_{\mathcal{R}\cap A}}$ of them.  Since
\begin{align*}
  \abs{A_{\mathcal{R}\cap
      A}}\leq\sum_{R\in\mathcal{R}\cap A}\abs{\boundary R}\leq s_2\cdot
  O\Paren*{\frac{r_1}{r_2}} \in O\Paren*{r_1\frac{s_2}{r_2}}
  \quad\text{and}\quad
  \frac{r_2}{s_2}=\Omega(t(r_1)\log r_1)
\end{align*}
we have
$\abs{A_{\mathcal{R}\cap A}}\in O(\frac{r_1}{t(r_1)\log r_1})$. By
Lemma~\ref{lem:explicitBC} and~\ref{lem:explicitCBC} we can therefore
explicitly maintain $\BC(A_{\mathcal{R}\cap A})$ and
$\CBC(A_{\mathcal{R}\cap A},\boundary A)$ in amortized constant time
per edge deletion in $A$.

The updates to $\CBC(A_{\mathcal{R}\cap A},\boundary A)$ again trigger
some number of updates to $G_{\mathcal{R}}$.  By
Lemma~\ref{lem:patchwork}, we can find these in amortized constant
time per edge deletion in $G$, and there are at most
$\abs{G_{\mathcal{R}}}$ of them.  Since
\begin{align*}
  \abs{G_{\mathcal{R}}}\leq
  \sum_{A\in \mathcal{A}}\abs{\boundary A}\leq s_1\cdot
  O\Paren*{\frac{n}{r_1}}\in O\Paren*{n\frac{s_1}{r_1}}
  \quad\text{and}\quad
  \frac{r_1}{s_1}=\Omega(t(n)\log n)
\end{align*}
we have
$\abs{G_{\mathcal{R}}}\in O(\frac{n}{t(n)\log n})$.  By
Lemma~\ref{lem:explicitBC} we can therefore explicitly maintain
$\BC(G_{\mathcal{R}})$ in amortized constant time per edge deletion in
$G$.

To handle biconnected-queries, perform the $O(1)$ queries indicated by Lemma~\ref{lem:preservebiconII}.
\end{proof}

\section{Nearest cutvertex in $O(1)$ worst-case time}\label{sec:nearest-cutvertex}
We have now shown how we handle queries to biconnectivity in a decremental graph subject to deletions. To answer nearest cutvertex queries, we need more structure. We need to augment our explicit representation of the dynamic BC-tree subject to block-splits so that it answers nearest cutvertex queries (subsection~\ref{sub:navigating}), and we need to show that we need only a constant number of queries in the patchwork graph together with a constant number of queries in regions, to answer nearest-cutvertex in the graph. 
\subsection{Navigating a dynamic BC-tree}\label{sub:navigating}
If the vertices $u$ and $v$ are connected but not biconnected, and we have a BC-tree over the component containing them, the nearest cutvertex to $u$ will be the second internal node on the unique BC-tree-path from $u$ to $v$. So, in order to answer nearest cutvertex queries, it is enough to answer first-on-path queries on a tree (since second-on-path can be found using two first-on-path queries).

\begin{lemma}\label{lem:repBC}
  There is a data structure for representing a dynamic BC-forest that
  can be initialized on a forest with $n$ nodes and
  support any sequence of $O(n)$ path-deletions, block-splits, and
  pseudoblock-contractions, in $O(n\log n)$ total time, while
  answering $\connected$ and $\firstonpath$ queries in worst case constant time.
\end{lemma}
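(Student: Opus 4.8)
The plan is to reduce both queries to a single primitive on the underlying tree --- the \emph{characteristic ancestor} query $\ca(u,v)$ supplied by the data structure of Appendix~\ref{sec:first-on-path} --- and to realise the three BC-forest updates as cheap sequences of that structure's primitive operations. Recall that a path deletion, a block split, and a pseudoblock contraction are nothing but, respectively, the deletion of a path of degree-$2$ internal nodes, a vertex split, and the contraction of a short path on the underlying tree; so I would maintain the BC-forest as an abstract rooted dynamic tree under exactly these three tree operations and layer the queries on top.

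First I would show that $\connected$ and $\firstonpath$ both follow in worst-case constant time from one $\ca$ query, given explicit parent pointers. Root each tree of the forest arbitrarily and maintain a parent pointer at every node. A query $\ca(u,v)$ returns, in worst-case $O(1)$ time, either the fact that $u,v$ lie in different trees --- answering $\connected(u,v)$ negatively --- or the triple $(w,a_u,a_v)$ where $w=\nca(u,v)$ and $a_u,a_v$ are the children of $w$ on the tree-paths toward $u$ and $v$ (with $a_x$ undefined when $x=w$). Then $\firstonpath(u,v)$ is read off by a two-line case analysis: if $u\ne w$ the path leaves $u$ going upward, so the answer is $\parent(u)$; otherwise $w=u$ is an ancestor of $v$ and the answer is the downward child $a_v$. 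Composing two $\firstonpath$ queries gives the node two steps from $u$ toward $v$, which is exactly the nearest cutvertex as observed before the lemma, so the nearest-cutvertex functionality also costs $O(1)$.

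Next I would maintain the explicit rooted representation under the three updates while keeping the characteristic-ancestor structure in sync. Pseudoblock contractions and path deletions each touch only $O(1)$ nodes per unit of path, and every deleted internal node has degree $2$; the total number of edges ever deleted is $O(n)$, so these contribute $O(n)$ primitive operations overall. A path deletion can split a tree in two, but this is handled in $O(1)$: the component not containing the old root is re-rooted at the surviving capacity-$1$ endpoint of the deleted path, toward which all of that component's parent pointers already point, so only that one node's parent pointer must be cleared. The one costly update is the block split, which partitions the incident edges of a high-degree block between $u_1$ and $u_2$: I would keep the larger side in place and re-attach only the smaller side to the new node, updating the moved children's parent pointers and issuing the corresponding primitive operations to the Appendix structure. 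By the standard smaller-half argument each original edge is moved $O(\log n)$ times over the whole sequence, giving $O(n\log n)$ total update time, with all $\connected$ and $\firstonpath$ queries still answered in worst-case $O(1)$.

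The hard part, and the reason it is isolated in Appendix~\ref{sec:first-on-path}, is the characteristic-ancestor structure itself: supporting a dynamic forest under the primitive operations (reduced to $\CAaddleaf$/$\CAdeleteleaf$) while answering $\ca$ --- and hence connectivity --- in \emph{worst-case} rather than amortized constant time. Everything above is a routine reduction once that structure is available; the delicate point is obtaining worst-case $O(1)$ queries in the presence of splits and contractions, which I would achieve by the techniques of that appendix. Assembling the reduction with the smaller-half accounting then yields the claimed $O(n\log n)$ total update time and worst-case constant query time.
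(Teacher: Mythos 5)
Your reduction of $\firstonpath$ to $\ca$ plus parent pointers, and your smaller-half accounting for block splits, do match the machinery of Appendix~\ref{sec:first-on-path}. But there is a genuine gap in how you answer $\connected$: you assume the characteristic-ancestor structure can report, in worst-case $O(1)$ time, that $u$ and $v$ lie in different trees. The structure of Lemma~\ref{lem:tree-split-contract-ca-slow} (and Theorem~\ref{thm:tree-split-contract-ca}) supports only $\splitnode$ and $\contract$; it has no $\deleteedge$ operation and no notion of multiple components---it is initialized on a single star and remains a single tree under its supported operations. Your proposed fix for path deletions (clearing one parent pointer to re-root the severed component) repairs the parent pointers but not the internal state of the $\ca$ structure: after a real deletion the subtree sizes $s(\cdot)$, the heavy/light classification along entire root paths, and the light tree would all be stale, and the potential argument of Lemma~\ref{lem:few-light-changes-split-contract} only pays for changes caused by splits and contracts, not deletions. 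So you cannot extract connectivity information from this structure without redoing a nontrivial part of its analysis.

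The paper sidesteps exactly this difficulty with a two-structure trick that your proposal is missing. It runs, in parallel, (i) the $\ca$/$\firstonpath$ structure, which simply \emph{ignores} all edge deletions---this is sound because $\firstonpath(u,v)$ is only a valid query when $u$ and $v$ are connected, and for connected pairs the unique tree path is unaffected by deletions elsewhere, so answers computed in the stale (denser) forest remain correct; and (ii) the separate connectivity structure of Lemma~\ref{lem:tree-connectivity}, which does support $\deleteedge$ and answers $\connected$ in worst-case $O(1)$ via component identifiers, re-labeling the smaller side of each cut at a total cost of $O(B\log B)$. Each BC-forest operation is then simulated in both structures (path deletion $=$ contractions plus one edge deletion; block split $=$ two node splits plus a contraction; pseudoblock contraction $=$ two contractions), with the coloring (vertices black, blocks white, pseudoblocks by history) chosen so that the operations respect each structure's restrictions---a detail your proposal also omits. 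If you add this separation---deletions ignored by the $\ca$ structure, connectivity delegated to a component-ID structure---the rest of your argument goes through.
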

\begin{proof}
  We use the data structures from
  Lemma~\ref{lem:tree-split-contract-ca-slow} and
  Lemma~\ref{lem:tree-connectivity} in the appendix as a base.  First,
  observe that we can combine these into a single structure,
  supporting both $\splitnode$, $\contract$, and $\deleteedge$
  operations and both $\firstonpath$ and $\connected$ queries.  This
  is because $\firstonpath(u,v)$ is only valid when $u$ and $v$ are
  connected, and the results of valid queries are therefore not
  affected by edge deletions. So we can maintain the two structures in
  parallel, and simply ignore deletions in the $\firstonpath$
  structure, and let each structure answer the query it is designed
  for.

  Second, observe that:
  \begin{itemize}
  \item Each path-deletion can be simulated using contractions and an
    edge deletion.
  \item Each block-split can be implemented as two node splits and an
    edge contraction.
  \item Each pseudoblock-contraction can be implemented as two edge
    contractions.
  \end{itemize}
  And note that if we color each vertex black and each block white
  (with pseudoblocks being either black or white depending on their
  history), then these operations respect the color requirements for
  our data structures.

  Since we do only $O(n)$ operations, and we start with $n$ black
  nodes, the total time for all updates is $O(n\log n)$.
\end{proof}

It follows as a corollary that we can answer nearest cutvertex queries given an explicit representation of the forest of BC-trees:

\begin{corollary}\label{cor:dynBC}
	Given a dynamic BC-tree over a connected $n$-vertex graph, we can answer biconnected and nearest cutvertex queries in $O(1)$ time, spending an additional $O(n\log n)$ time on any sequence of updates.
\end{corollary}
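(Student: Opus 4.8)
The plan is to obtain both answers directly from the dynamic BC-forest representation of Lemma~\ref{lem:repBC}. That lemma already gives us, within the claimed $O(n\log n)$ total update budget, a structure that answers $\connected$ and $\firstonpath$ queries in worst-case $O(1)$ time under the path-deletions, block-splits, and pseudoblock-contractions that model the evolving BC-tree. So the entire task reduces to expressing a \emph{biconnected} query and a \emph{nearest cutvertex} query in terms of a constant number of these two primitives. Every query $(u,v)$ therefore begins with a $\connected(u,v)$ test; if $u$ and $v$ lie in different components we report this and stop, and otherwise the tree-path $u\longleftrightarrow v$ in the BC-forest is well defined and alternates between vertex nodes and (pseudo)block nodes, with $u$ and $v$ as its endpoints.

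For the biconnected query I would use the BC-tree characterization recalled in Section~\ref{sec:prelims}: distinct connected vertices $u,v$ are biconnected precisely when their path has the form $u - B - v$ for a single non-bridge block $B$. Computing $w=\firstonpath(u,v)$ returns the block incident to $u$ on the path, and the vertices are biconnected exactly when $\firstonpath(w,v)=v$ and $w$ is a non-bridge block (with $u=v$ handled separately). This is $O(1)$ calls to the primitives.

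For the nearest cutvertex query I would invoke the observation opening this subsection: when $u,v$ are connected but not biconnected, the cutvertex nearest to $u$ is the second internal node on $u\longleftrightarrow v$, i.e.\ the first cutvertex met on leaving $u$. Since the first internal node is $\firstonpath(u,v)$, the second is recovered by one additional first-on-path query, as $\firstonpath(\firstonpath(u,v),v)$ --- this is the promised computation of second-on-path from two first-on-path queries --- and returning that node answers the query in $O(1)$.

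The only delicate part is the case analysis surrounding these primitives rather than the primitives themselves: we must separate the degenerate situations --- $u=v$, the biconnected case where no separating cutvertex exists, and the bridge case where the block $B$ on the path is a single bridge and the nearest cutvertex is declared $\nil$ --- from the generic case in which the second-on-path node is a genuine cutvertex. Each of these is resolved by inspecting the type (bridge versus non-bridge block) and the position of the node returned by a first-on-path query, so every query uses only a constant number of $\connected$ and $\firstonpath$ calls and inherits their worst-case $O(1)$ time, while the update cost is exactly the $O(n\log n)$ of Lemma~\ref{lem:repBC}.
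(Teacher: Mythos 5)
Your proposal is correct and takes essentially the same route as the paper's own proof: both reduce each query to $O(1)$ calls on the structure of Lemma~\ref{lem:repBC}, computing the second-on-path node as $\firstonpath(\firstonpath(u,v),v)$ and declaring $u,v$ biconnected exactly when this node equals $v$, returning it as the nearest cutvertex otherwise. If anything, you are slightly more careful than the paper's two-line argument, which presupposes the pair is connected and distinct and glosses over the bridge-block case (where the answer is $\nil$) that your case analysis via $\connected$ and the bridge/non-bridge check makes explicit.
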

\begin{proof}
	Given a pair of connected and different vertices $u,v$, let $w$ be the second-on-path vertex found by querying $\firstonpath(\firstonpath(u,v),v)$. If $w=v$, the vertices are biconnected. Otherwise, $w$ is the nearest cutvertex separating $u$ from $v$.
\end{proof} 

\subsection{The patchwork graph}
In the following, recall that each disposable vertex $v$ knows its nearest represented vertex $\nr(v)$, and each pseudoblock knows its two neighbours.

\begin{lemma}\label{lem:nearest}
	If $u,v$ are connected and not biconnected, then the nearest cutvertex separating $u$ from $v$ can be determined by at most one nearest cutvertex-query and at most one biconnected query in $G_{\mathcal{R}}$ followed by at most one nearest cutvertex-query and at most one biconnected query within a region.
\end{lemma}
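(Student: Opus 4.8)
The plan is to localise the nearest cutvertex to a single region, use one \emph{coarse} set of queries in $G_{\mathcal{R}}$ to identify that region and the direction towards $v$, and then finish with one \emph{fine} set of queries inside the identified region. The guiding structural fact comes from Lemma~\ref{lem:preservebicon}: as we walk along the BC-tree path from $u$ towards $v$, the first cutvertex of $G$ is either realised as a genuine cutvertex-node of $\BC(G_{\mathcal{R}})$ (a round node, hence an actual cutvertex of $G$), or it is hidden inside exactly one object that the patchwork collapses, namely either the pseudoblock $\CB(u)$ itself, a pseudoblock appearing later on the patchwork path, or (when $u$ is disposable) the part of $u$'s own region that was deleted. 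Since the nearest cutvertex is a single vertex in a single region, the whole task is to make the stored auxiliary data — the representatives $\CB(\cdot)$, the nearest represented vertices $\nr(\cdot)$, and the two neighbours recorded at each pseudoblock — together with the coarse query tell us which region to descend into and with which orientation, so that only one descent is ever needed.

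First I would normalise the two endpoints. If $u$ is \emph{disposable}, then by definition every path leaving $u$ first meets the cutvertex $\nr(u)$, so the nearest cutvertex is either strictly inside $u$'s pocket or equal to $\nr(u)$; this branch is settled entirely within the region $R$ containing $u$, consuming exactly the region budget. Concretely, one biconnectivity query in $R$ decides whether $u$ is biconnected with $\nr(u)$: if it is not, one nearest-cutvertex query in $R$ returns the answer directly; if it is, then (after the same region query rules out the sub-case that $v$ lies in the same pocket) the answer is simply the stored vertex $\nr(u)$, and we are done without ever touching $G_{\mathcal{R}}$ to find a cutvertex. Symmetrically, if $v$ is disposable I would replace the target by $\hat v:=\nr(v)$, since any cutvertex nearest to $u$ must lie on $u$'s side of $\nr(v)$ and so separating $u$ from $v$ is the same as separating $u$ from $\nr(v)$ for this purpose. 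After this step I may assume $u$ is represented, so $\CB(u)$ is a block or pseudoblock incident to $u$ with no cutvertex of $G$ strictly between $u$ and $\CB(u)$.

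Second, with $u$ represented I would run the coarse phase in $G_{\mathcal{R}}$, relying on Corollary~\ref{cor:dynBC} applied to $\BC(G_{\mathcal{R}})$. If $\CB(u)$ is itself a pseudoblock, the first cutvertex is hidden inside $\CB(u)$'s region, and I descend there immediately, using the stored neighbours of $\CB(u)$ (and, if $\CB(u)=\CB(v)$, the non-biconnectivity of its neighbours guaranteed by Lemma~\ref{lem:preservebiconII}) to orient a single nearest-cutvertex query towards $v$. Otherwise $\CB(u)$ is a genuine block, and a single nearest-cutvertex query between $\CB(u)$ and $\CB(v)$ (preceded if necessary by one biconnectivity query to confirm non-biconnectivity via Lemma~\ref{lem:preservebiconII}) returns the first articulation node $c'$ on the patchwork path. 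Because blocks and cutvertices strictly alternate, the only block before $c'$ is $\CB(u)$ itself, so there is no earlier hidden cutvertex: if $c'$ is a round node it is a genuine cutvertex of $G$ and hence the answer, and if $c'$ is a pseudoblock the true nearest cutvertex is hidden inside $c'$'s region, into which I descend once, again orienting the fine query by $c'$'s two stored neighbours. In every case at most one region is entered and at most one nearest-cutvertex plus one biconnectivity query is used at each of the two levels, matching the claimed budget.

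The main obstacle is the verification, rather than the mechanics: I must show that the three possible hiding places for the first cutvertex are mutually exclusive, so that exactly one region descent is ever required, and that the vertex returned by the fine query really is the nearest cutvertex \emph{in $G$}, not merely within the region. The former reduces to the observation that whether the first cutvertex is hidden in $\CB(u)$, visible as $c'$, or hidden in a later pseudoblock $c'$ is completely decided by the type (round vs.\ pseudoblock) of $\CB(u)$ and of the single node $c'$ returned by the coarse query. The latter is the delicate part and is where I expect to lean hardest on Lemma~\ref{lem:preservebicon}: I need that the correspondence between separating cutvertices of $G$ and separating cutvertices/pseudoblocks of $G_{\mathcal{R}}$ is order-preserving along the $u$-to-$v$ path, so that ``nearest to $u$'' is respected by the reduction and the locally-nearest cutvertex produced by the region query coincides with the globally-nearest one. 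Handling the disposable endpoints cleanly — ensuring that the pocket case genuinely terminates within $R$ and never competes with a later patchwork descent — is the remaining bookkeeping that ties the query count to the stated bound.
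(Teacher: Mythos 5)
Your overall architecture (normalise disposable endpoints, one coarse phase in $G_{\mathcal{R}}$, one fine phase inside a region) matches the paper's, but the dispatch rule in your coarse phase contains a genuine error. You claim that ``if $\CB(u)$ is itself a pseudoblock, the first cutvertex is hidden inside $\CB(u)$'s region, and I descend there immediately.'' This is false, and the descend-first order of operations built on it fails. Counterexample: let $u$ be attached to a block in the middle of the chain contracted into the pseudoblock $P=\CB(u)$, whose two neighbours are $x$ and $y$; suppose $x$ and $y$ are joined, by two internally disjoint paths lying outside $P$'s pocket, to a cutvertex $c^*$ in another region, and $v$ lies beyond $c^*$. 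Then $u$ and $v$ are connected and not biconnected, but \emph{no} vertex of $P$'s region separates them --- any chain cutvertex of $P$, and also $x$ or $y$, can be bypassed by exiting the pocket on the other side --- and the unique separating cutvertex is $c^*$. Your procedure descends into $P$'s region, orients a fine query towards one of $x,y$ (both of which ``lead towards $v$'', so the orientation is not even well defined), and returns a chain cutvertex that does not separate $u$ from $v$ in $G$. The paper avoids exactly this trap by inverting your order: it performs the nearest-cutvertex query between $\CB(u)$ and $\CB(v)$ in $G_{\mathcal{R}}$ \emph{first}, and descends into a region only when the returned separator is a neighbour of a pseudoblock that is $\CB(u)$ or adjacent to $\CB(u)$; in the counterexample the coarse query returns $c^*$ and no descent occurs. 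So whether the answer is hidden inside a region cannot be decided from the type of $\CB(u)$ (nor from the types of $\CB(u)$ and $c'$ jointly, as your ``mutual exclusivity'' paragraph asserts); it is decided by where the coarse query's answer lands relative to $\CB(u)$.

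A second, smaller gap concerns disposable $v$: you justify replacing $v$ by $\nr(v)$ with ``separating $u$ from $v$ is the same as separating $u$ from $\nr(v)$.'' It is not: $\nr(v)$ itself separates $u$ from $v$ but not $u$ from $\nr(v)$, so when $\CB(u)$ is biconnected with $\CB(\nr(v))$ your reduced instance has no separating cutvertex at all, while the correct output is $\nr(v)$. You do budget a biconnectivity query ``to confirm non-biconnectivity,'' but you never state the conclusion to draw when it instead confirms biconnectivity; the paper's proof makes this explicit (if $\CB(\nr(v))$ is biconnected with $\CB(u)$, then $\nr(v)$ is the answer). Finally, your dichotomy ``$\CB(u)$ is a pseudoblock / $\CB(u)$ is a genuine block'' omits the case where $\CB(u)=u$ is itself a vertex of $G_{\mathcal{R}}$, which is precisely the case in which the coarse query can return a separator at or just beyond a pseudoblock adjacent to $u$ and a descent into that pseudoblock's region is required.
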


\begin{proof}
	If $u$ and $v$ are not both non-boundary vertices, and they are connected within the region $R$ containing both, then the nearest cutvertex within $R$ is the nearest cutvertex in $G$. 
			
	Otherwise, if $u$ is disposable, then it knows its nearest represented vertex $\nr(u)$. If $u$ and $\nr(u)$ are biconnected, then $\nr(u)$ is the nearest cutvertex separating $u$ from $v$ in $G$. Otherwise, the nearest cutvertex separating $u$ from $\nr(u)$ in their region $R$ is also the nearest cutvertex separating $u$ from $v$ in $G$.
	
	If $u$ is represented but $v$ is not represented, then $v$ knows its closest represented vertex $\nr(v)$ within its region. If $\CB(\nr(v))$ is biconnected with $\CB(u)$, then $\nr(v)$ is the answer, otherwise, $\nr(v)$ is used in place of $v$ in the following. 

	For the remaining cases, $u$ and $v$ are both represented, and their representatives are different. If the nearest cutvertex query between $\CB(u)$ and $\CB(v)$ in $G_{\mathcal{R}}$ returns a neighbour $b$ of the pseudo-block that either is $\CB(u)$ or is a neighbour of $\CB(u)$, then querying nearest cutvertex between $u$ and $b$ in the region of the pseudo-block will return the nearest cutvertex between $u$ and $v$ in $G$. Note here, that a pseudo-block is only present in one region, and even if $u$ is a boundary vertex that appears in several regions, the pseudo-block knows the identity of both its endpoints within the region.
			
	Finally, in all other cases, the nearest cutvertex separating $\CB(u)$ and $\CB(v)$ in $G_{\mathcal{R}}$ is the nearest cutvertex separating $u$ and $v$ in $G$.	
\end{proof}

\begin{theorem}[Second part of Theorem~\ref{thm:main}]\label{thm:partII}
	The data structure in Theorem~\ref{thm:partI} can be augmented to 
	support queries to nearest cutvertex in $O(1)$ worst-case time, while handling any series of edge-deletions in $O(n+m)$ total time.
\end{theorem}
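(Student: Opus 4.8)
The plan is to retain the entire data structure of Theorem~\ref{thm:partI}, which already explicitly maintains the BC-forest at each level of the nested construction and answers biconnectivity in $O(1)$, and to \emph{augment} it as follows. The maintenance procedure of Lemmata~\ref{lem:explicitBC} and~\ref{lem:explicitCBC} produces, at each level, a sequence of path-deletions, block-splits, and pseudoblock-contractions applied to that level's BC-forest; I would feed exactly this sequence, in parallel, to the representation of Corollary~\ref{cor:dynBC} (built on Lemma~\ref{lem:repBC}), so that each maintained BC-forest additionally answers $\firstonpath$ and hence nearest-cutvertex queries in worst-case $O(1)$. At the bottom level, where each region has only $r_2\in O(\poly(\log\log n))$ vertices, I instead extend the precomputed lookup tables of Theorem~\ref{thm:partI} to also report the nearest cutvertex for every pair of vertices in every graph on $r_2$ vertices under every edge deletion; this costs $o(n)$ and answers bottom-level nearest-cutvertex queries in $O(1)$.

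Next I would bound the extra cost. Corollary~\ref{cor:dynBC} spends only $O(N\log N)$ time on any sequence of $O(N)$ updates to an $N$-node BC-forest --- note that this carries a $\log$ factor but, crucially, \emph{no} $t(\cdot)$ factor, so it is strictly cheaper than the maintenance of Lemma~\ref{lem:explicitBC} already performed. At the top level $\abs{G_{\mathcal{A}}}\in O(\abs{G_{\mathcal{R}}})\subseteq O(n/(t(n)\log n))$, and the number of updates to $\BC(G_{\mathcal{A}})$ is $O(\abs{G_{\mathcal{A}}})$, so the augmentation costs $O(\abs{G_{\mathcal{R}}}\log n)=O(n/t(n))=O(n)$. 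At the middle level, for each $A\in\mathcal{A}$ we have $N=\abs{A_{\mathcal{R}\cap A}}\in O(r_1/(t(r_1)\log r_1))$, so the cost is $O(N\log r_1)=O(r_1/t(r_1))=O(r_1)$, which sums over the $O(n/r_1)$ regions of $\mathcal{A}$ to $O(n)$. Adding the $O(m)$ bottom-level table lookups, the total augmentation cost is $O(n+m)$, so the whole update sequence still runs in $O(n+m)$ time.

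To answer a nearest-cutvertex query in $G$, I would apply Lemma~\ref{lem:nearest} recursively through the three levels. Applying it with the fine division $\mathcal{R}$ reduces the query to $O(1)$ biconnectivity- and nearest-cutvertex-queries in $G_{\mathcal{R}}$ together with $O(1)$ such queries inside a single region $R\in\mathcal{R}$, the latter answered by table lookup. The query in $G_{\mathcal{R}}$ is in turn resolved by applying Lemma~\ref{lem:nearest} with the coarse division $\mathcal{A}$ (using $G_{\mathcal{A}}=(G_{\mathcal{R}})_{\mathcal{A}}$), reducing it to $O(1)$ queries in $G_{\mathcal{A}}$, answered directly by the augmented $\BC(G_{\mathcal{A}})$, and $O(1)$ queries inside a single middle region $A_{\mathcal{R}\cap A}$, answered by its augmented BC-forest. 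The biconnectivity subqueries are supplied by Theorem~\ref{thm:partI} and the nearest-cutvertex subqueries by the augmentation, each in worst-case $O(1)$; since there are only $O(1)$ levels and $O(1)$ queries per level, a nearest-cutvertex query costs worst-case $O(1)$.

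The main obstacle, and the point needing the most care, is checking that the $\log N$ overhead of the augmentation is absorbed at every level rather than accumulating. This is exactly what the suitable-pair conditions $\frac{r_1}{s_1}\in\Omega(t(n)\log n)$ and $\frac{r_2}{s_2}\in\Omega(t(r_1)\log r_1)$ buy us: each level shrinks by a factor of $t\cdot\log$, so the $N\log N$ augmentation cost collapses to $O(n)$ per level. A secondary point is verifying that the recursion of Lemma~\ref{lem:nearest} bottoms out correctly --- that the ``within a region'' query at the fine level really is a table lookup on a size-$r_2$ region, while the ``within a region'' query at the coarse level is a direct nearest-cutvertex query on the explicitly maintained $\BC(A_{\mathcal{R}\cap A})$ --- which follows from the observation that patchwork graphs respect nested $r$-divisions.
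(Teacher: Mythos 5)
Your proposal is correct and takes essentially the same approach as the paper's own proof: augment the explicitly maintained top- and middle-level BC-forests with the split/contract first-on-path structure of Lemma~\ref{lem:repBC}/Corollary~\ref{cor:dynBC}, extend the bottom-level $\poly(\log\log n)$-size tables to store nearest-cutvertex answers, and answer a query by applying Lemma~\ref{lem:nearest} once per level, with the $O(N\log N)$ augmentation cost absorbed by the suitable-pair conditions exactly as you argue. The one slip (inherited from a notational typo in the paper itself) is the intermediate claim $\abs{G_{\mathcal{R}}}\subseteq O(n/(t(n)\log n))$: that bound actually holds for $\abs{G_{\mathcal{A}}}$, which is the quantity the top-level structure's size and update count really depend on, whereas $\abs{G_{\mathcal{R}}}$ is only $O(n/(t(r_1)\log r_1))$; since your cost analysis only needs the bound on $\abs{G_{\mathcal{A}}}$, the conclusion stands unchanged.
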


\begin{proof}%
For the patchwork graph of $G$ and for the patchwork graphs of each region, maintain their dynamic BC-forest as indicated in Lemma~\ref{lem:repBC}. For the regions of the fine $r$-division, that is, those of $\polylog\log$-size, maintain an explicit table over the answer to nearest cutvertex queries.

Due to Corollary~\ref{cor:dynBC}, the maintenance of explicit forests of BC-trees over the patchwork graph of $G$ is done in $O(n'\log n')$ total time for $n'=O(n/\log n)$, thus, $O(n)$ total time, while handling intermixed nearest cutvertex-queries in $O(1)$ worst-case time. Same goes for the explicit maintenance of BC-forests of the patchwork graphs in the regions of the coarse $r$-division.

Finally, to handle nearest-cutvertex$(u,v)$-queries, perform the $O(1)$ queries indicated by Lemma~\ref{lem:nearest}: each of the $O(1)$ queries in the regions of the coarse $r$-division give rise to %
$O(1)$ look-ups in the regions of the fine $r$-division. Thus, the total query-time is constant.
\end{proof}

\section{Conclusion and implications.}
We have given a somewhat technical theorem stating that if a graph has suitable $r$-divisions, there is an efficient data structure for decremental biconnectivity. 
In this section, we show that for graphs with suitable $r$-divisions, $2$-edge connectivity reduces to biconnectivity. Thus, using our data structure as a blackbox, one obtains efficient data structures for decremental $2$-edge connectivity and bridge-finding.

For bounded genus graphs, we promised not only that they admit suitable $r$-divisions, but that such $r$-divisions can be computed in linear time in the size of the graph (regarding the size of the excluded minor as a constant). In this section, we state this as a theorem, although we defer its proof to an appendix. 
Furthermore, we address the natural problem of vertex deletions. 

\paragraph{Implications for connectivity and $2$-edge connectivity}
Our Theorem~\ref{thm:main} for decremental biconnectivity has immediate consequences for the questions of connectivity and $2$-edge connectivity, following from simple or relatively simple reductions.
\begin{corollary}\label{cor:connect}
There exists a data structure that, given a graph $G$ with $n$ vertices and $m$ edges, and given a suitable pair of $r$-divisions, preprocesses $G$ in $O(m+n)$ time and handles any series of edge-deletions in $O(m)$ total time while answering connectivity queries in $O(1)$ time.
\end{corollary}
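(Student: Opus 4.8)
The plan is to observe that connectivity is the coarsest of the relations in play, and is already encoded, for free, in the forests of BC-trees and compressed BC-forests that the data structure of Theorem~\ref{thm:partI} (and its augmentation in Theorem~\ref{thm:partII}) maintains at every level. No separate data structure is needed: I would only establish a connectivity analogue of Lemma~\ref{lem:preservebiconII} and then read the answer off the structures already present.

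First I would prove the following connectivity counterpart of Lemma~\ref{lem:preservebicon}: for represented vertices, $u$ and $v$ are connected in $G$ if and only if $\CB(u)$ and $\CB(v)$ lie in the same tree of $\BC(G_{\mathcal{R}})$. The forward direction is immediate, since every edge of $G$ maps into a single patch of $G_{\mathcal{R}}$ and boundary crossings happen at shared boundary vertices, which are never disposable (a boundary vertex $b\in\partial R$ lies on the trivial path $b\longleftrightarrow b'$ for any other $b'\in\partial R$, hence is present or represented). For the converse, the only operations used to build $\CBC(R,\partial R)$ from $\BC(R)$ are the deletion of $S$-disposable nodes and the contraction of maximal $S$-contractible paths; neither changes connectivity among the surviving nodes, because a disposable node by definition lies off every boundary-to-boundary path, and contraction preserves connectivity. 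Thus connectivity among represented nodes is identical in $\BC(R)$, in $\CBC(R,\partial R)$, and in the stitched graph $G_{\mathcal{R}}$.

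Next I would dispose of the non-represented case exactly as in Lemma~\ref{lem:preservebiconII}, using the nearest-represented-vertex pointers $\nr(\cdot)$ that the structure already maintains: if $u$ is disposable with $\nr(u)\neq\nil$, then $u$ is connected in $G$ to precisely those vertices to which $\nr(u)$ is connected, together with the vertices reachable from $u$ inside its own region; if $\nr(u)=\nil$, then $u$ reaches nothing outside its region, and the query is answered by a single within-region connectivity test. This reduces any connectivity query in $G$ to $O(1)$ connectivity queries inside a region together with $O(1)$ connectivity queries in the patchwork graph, mirroring Lemma~\ref{lem:nearest}.

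Finally I would argue the complexity. The within-region tests are answered in $O(1)$ by the bottom-level lookup tables already built in the proof of Theorem~\ref{thm:partI}, and the patchwork tests are answered in $O(1)$ by the dynamic BC-forest of Lemma~\ref{lem:repBC}, which already supports $\connected$ queries in worst-case constant time and is maintained in the proof of Theorem~\ref{thm:partII}. Hence no asymptotic maintenance is added on top of Theorem~\ref{thm:partI}: preprocessing stays $O(m+n)$ and total update time stays $O(m)$, while connectivity queries take $O(1)$. The main obstacle I anticipate is the careful bookkeeping for disposable vertices and the $\nr=\nil$ boundary case --- making sure that the within-region component of a disposable vertex cut off from $\partial R$ is treated correctly and is never spuriously merged with the rest of $G$ through its (nonexistent) representative; everything else is a direct, and in fact easier, re-run of the biconnectivity arguments. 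As a sanity check on correctness, one may note the clean conceptual reduction that $u,v$ are connected in $G$ exactly when they are biconnected in $G$ augmented with a single apex vertex adjacent to all of $V$, since such a graph has a star-shaped BC-tree with one block per connected component of $G$.
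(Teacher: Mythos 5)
Your proposal is correct, but it takes a genuinely different route from the paper. The paper's proof is a two-line black-box reduction: construct $G'$ by adding a single apex vertex $x$ adjacent to every vertex of $G$, add $x$ to the boundary sets of both nested divisions (which keeps the pair suitable, since each $r_i$ and $s_i$ grows by at most one), observe that $u,v$ are connected in $G$ if and only if they are biconnected in $G'$, and invoke Theorem~\ref{thm:main} on $G'$. This is exactly the reduction you relegate to a ``sanity check'' in your last paragraph --- the paper makes it the entire proof. Your route instead opens the box: you prove a connectivity analogue of Lemma~\ref{lem:preservebiconII} (connectivity of represented vertices in $G$ equals connectivity of their representatives in $G_{\mathcal{R}}$, with non-represented vertices handled through $\nr(\cdot)$), and you answer queries from structures the paper already maintains --- the bottom-level lookup tables and the $\connected$ queries of Lemma~\ref{lem:repBC} as maintained in Theorem~\ref{thm:partII}. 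The supporting claims you sketch are sound: boundary vertices are never disposable (indeed $b=\meet_T(b,b,b)$ makes them $S$-critical, so they survive compression), a disposable node never lies on a tree path between two non-disposable nodes (so deleting disposable nodes cannot disconnect survivors), contraction preserves connectivity, and the $\nr(u)=\nil$ case correctly confines $u$'s component to its own region. What the paper's approach buys is brevity and robustness: nothing about the internals is re-proved, and correctness is inherited wholesale from Theorem~\ref{thm:main}. What your approach buys is that no auxiliary graph is built and no second instance of the whole data structure is run --- connectivity is read off the existing hierarchy --- at the cost of a new structural lemma, a case analysis that must be carried through both levels of the recursion (i.e., also for the bicapacitated patchwork graphs, where capacities are fortunately irrelevant to connectivity), and a small extension of the bottom-level tables to tabulate connectivity answers.
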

\begin{proof}
Construct a graph $G'$ by adding to $G$ a dummy vertex $x$ connected to every other vertex. Add $x$ to the boundary sets of both of the nested $r$-divisions for $G$.  A pair of vertices are connected in $G$ if and only if they are biconnected in $G'$, the $r$-divisions are still a suitable pair, and thus, Theorem~\ref{thm:main} yields the result. 
\end{proof}
 
\begin{corollary}\label{cor:2edge}
There exists a data structure that, given a graph $G$ with $n$ vertices and $m$ edges, and given a suitable pair of $r$-divisions, preprocesses $G$ in $O(m+n)$ time and handles any series of edge-deletions in $O(m)$ total time while answering queries to pairwise $2$-edge connectivity and queries to nearest separating bridge in $O(1)$ time.
\end{corollary}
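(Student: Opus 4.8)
The plan is to reduce decremental $2$-edge connectivity to decremental biconnectivity by a purely local graph transformation, in the same spirit as the universal-vertex trick used for Corollary~\ref{cor:connect}, and then to invoke Theorem~\ref{thm:main} as a black box. Given $G=(V,E)$, I would build $G'$ as follows: replace each vertex $v\in V$ by two copies $v_a,v_b$ joined by an edge $v_av_b$, and replace each edge $e=\{u,v\}\in E$ by a fresh \emph{midpoint} vertex $m_e$ together with the four edges $u_am_e$, $u_bm_e$, $v_am_e$, $v_bm_e$. Each original vertex $v$ is represented by $\rho(v)=v_a$. The crucial point is that this transformation is deletion-friendly: deleting $e$ from $G$ corresponds exactly to deleting $m_e$ and its (at most four) incident edges from $G'$, so a decremental sequence on $G$ maps to a decremental sequence on $G'$ with no insertions, and $|V(G')|+|E(G')|\in O(n+m)$.

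Next I would establish the key biconditional: $u,v$ are $2$-edge connected in $G$ if and only if $\rho(u),\rho(v)$ are biconnected in $G'$, and moreover a bridge $e$ separates $u$ from $v$ in $G$ exactly when the cutvertex $m_e$ separates $\rho(u)$ from $\rho(v)$ in $G'$. The argument splits into three observations. First, the doubled pair $\{v_a,v_b\}$, together with the requirement that every incident midpoint attaches to \emph{both} copies, makes each original vertex non-separating: removing $v_a$ leaves $v_b$ connecting all of $v$'s incident midpoints, so no original-vertex copy can be a cutvertex separating two represented vertices. Second, removing a single midpoint $m_e$ is exactly as harmful as removing the edge $e$ from $G$; hence $m_e$ is a cutvertex separating $\rho(u)$ from $\rho(v)$ iff $e$ lies on every $u$--$v$ path, i.e.\ iff $e$ is a bridge separating $u$ and $v$. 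Third, since $\rho(u)$ and $\rho(v)$ are never adjacent in $G'$, the only obstructions to their biconnectivity are such midpoint cutvertices, so biconnectivity of $\rho(u),\rho(v)$ coincides with $2$-edge connectivity of $u,v$, and the \emph{nearest} separating bridge is recovered in $O(1)$ time from the nearest separating cutvertex $m_e$ returned by the structure of Theorem~\ref{thm:main}.

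It remains to verify that a suitable pair of $r$-divisions for $G$ yields one for $G'$. I would construct it directly: since the $r$-division partitions the edges of $G$, each edge $e$ lies in a unique region, into which I place $m_e$ and its incident gadget edges; a vertex $v$ contributes the pair $v_a,v_b$ (and the edge $v_av_b$) to the regions containing $v$, and $v_a,v_b$ become boundary vertices of $G'$ exactly when $v\in\boundary\mathcal{R}$ (respectively $\boundary\mathcal{A}$). This at most doubles region sizes and boundary sizes and leaves the family sizes unchanged, so a strict $(r,s)$-division becomes a strict $(O(r),O(s))$-division; the inclusion $\boundary\mathcal{A}\subseteq\boundary\mathcal{R}$, the nesting, and the ratio constraints such as $r_1/s_1\in\Omega(t(n)\log n)$ are all preserved up to constant factors. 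Note that suitability is phrased entirely in terms of division parameters and the existence of a fully dynamic biconnectivity algorithm for general graphs, so I do not need $G'$ itself to lie in any minor-closed class.

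Finally I would assemble the pieces: run the data structure of Theorem~\ref{thm:main} on $G'$ with this pair of $r$-divisions, translate each $2$-edge connectivity query on $(u,v)$ into a biconnectivity query on $(\rho(u),\rho(v))$ and each nearest-separating-bridge query into a nearest-cutvertex query, both answered in $O(1)$ time after $O(n+m)$ total update time. The main obstacle I anticipate is the design of the gadget itself: a naive reduction (plain edge subdivision, or replacing each vertex by a cycle of its incident edges) either fails to neutralize the original cutvertices or, worse, forces an edge \emph{insertion} to re-close a cycle whenever an edge is deleted, which is incompatible with a decremental structure. The doubled-vertex, degree-four-midpoint gadget is chosen precisely so that neutralizing cutvertices and exposing bridges are achieved simultaneously while every deletion in $G$ remains a deletion in $G'$; verifying this biconditional carefully, and checking that the induced $r$-divisions meet every suitability condition, is where the real work lies.
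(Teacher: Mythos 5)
Your proof is correct, but it takes a genuinely different route from the paper. The paper's proof uses the Tazari--M\"uller-Hannemann vertex-splitting: each vertex $v$ is replaced by a cycle of $d(v)$ degree-$3$ vertices, so that biconnectivity and $2$-edge connectivity coincide in the resulting graph; the delicate part there is precisely the boundary, which is handled by bundling each boundary vertex's incident edges by region and choosing a circular order so that each bundle contributes only two boundary vertices. Your twin-copies-plus-midpoint gadget achieves the reduction by a different mechanism (copies neutralize cutvertices, midpoints turn bridges into cutvertices), and its main payoff is that the boundary argument becomes trivial: a boundary vertex of $G$ contributes exactly the two copies $v_a,v_b$ to the boundary of each region containing it, so no bundling is needed, and as a bonus the awkward special case where $uv$ is itself a bridge disappears because representatives are never adjacent in $G'$. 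Two minor remarks. First, your stated motivation for rejecting the cycle construction is a strawman: deleting an edge of $G$ in the paper's reduction deletes only the corresponding non-cycle edge of $G'$ and leaves the cycle intact (one cycle vertex simply drops to degree $2$), so that reduction is also purely decremental; its real cost is the bundling argument, which you avoid. Second, your $G'$ carries $n$ twin edges in addition to the $O(m)$ gadget edges, so the total update bound you inherit from Theorem~\ref{thm:main} is $O(n+m)$ rather than $O(m)$; this is immaterial given the $O(n+m)$ preprocessing, but the paper's construction, whose size is $O(m)$, matches the stated bound exactly. Both proofs invoke Theorem~\ref{thm:main} as a black box and both assert, at the same level of detail, that the transformed pair of $r$-divisions remains suitable (region sizes grow from $\abs{V(R)}$ to $\Theta(\abs{V(R)}+\abs{E(R)})$ in either construction, which is harmless in the intended minor-free setting and affects the polylogarithmic parameter bounds only by constant powers).
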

\begin{proof}
The reduction relies on an idea presented in Tazari and M\"uller-Hannemann~\cite{DBLP:journals/dam/TazariM09}. The basic idea is to perform a vertex-splitting in $G$, splitting a vertex $v$ of degree $d(v)$ into $d(v)$ vertices of degree $3$ arranged on a cycle. Unfortunately, since this splitting is also performed on boundary vertices of the $r$-division, the naive solution may increase the size of the boundary in the $r$-division. Thus, inspired by \cite{DBLP:journals/dam/TazariM09}, our algorithm makes the following preemptive measure: For each boundary vertex of an $r$-division, sort its incident edges into bundles according to the region of the other end vertex, and let all edges to other boundary vertices be bundled separately. Then, make a circular ordering of the edges, and ensure that all edges of a bundle are consecutive. Now, if the vertex splitting is performed according to this circular ordering, only two vertices for each bundle need to lie in the boundary. Thus, the total number of boundary vertices counted with multiplicity\footnote{That is, counting the distinct pairs of region and incident boundary vertex.} increases by at most a factor $2$. 

In the graph $G'$ resulting from the vertex splitting above, every edge still corresponds to a unique edge, but vertices are represented by cycles. An edge in $G$ is a bridge if and only if the corresponding edge in $G'$ is. The graph has max degree 3, and thus, biconnectivity and two-edge connectivity are equivalent, and any articulation point will be incident to a bridge. 
So, we can determine $2$-edge connectivity in $G$ by querying $2$-vertex connectivity in $G'$, and the first articulation point in $G'$ separating a pair of vertices will be incident to the first separating bridge between them. 
Since the size of $G'$ is still linear in the number of edges in $G$, and since the suitable pair of $r$-divisions of $G$ yields a suitable pair of $r$-divisions of $G'$, Theorem~\ref{thm:main} implies that we can determine $2$-edge connectivity and find a separating edge in worst-case constant time, with a total update time of $O(m)$.
\end{proof}

\paragraph{Implications for minor-free graphs}
We have stated our main theorem in terms of graphs with nested $r$-divisions with certain properties. Minor free graphs are a class of graphs that have these $r$-divisions, and in fact, we show that such $r$-divisions can be found in linear time, as stated in the following Theorem whose proof is deferred to Appendix A:

\begin{theorem}	
	\label{thm:minorfree-linear}
	Given a graph $G$ with $n$ vertices that does not have a $K_\ell$-minor, and any
	$t(n)\in O(\poly(\log n))$ we can compute a suitable pair of
	$r$-divisions in $O(n)$ time.	
\end{theorem}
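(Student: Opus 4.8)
The plan is to reduce the construction to the linear-time algorithm, cited in the introduction \cite{Goodrich,Klein:2013,reed:hal-01184376,DBLP:journals/dam/TazariM09,Wulff-Nilsen11}, that computes a strict $(r,O(r^{1-\varepsilon}))$-division of a $K_\ell$-minor-free graph for some fixed $\varepsilon=\varepsilon(\ell)>0$. Granting this black box, the work divides into (i) choosing the parameters $r_1,s_1,r_2,s_2$ so that the four size constraints of a suitable pair hold, and (ii) producing two divisions that are additionally \emph{nested} in the sense of Section~\ref{sec:prelims}. The first bullet of the suitability definition---existence of a general dynamic biconnectivity algorithm running in time $t(n)$---is independent of the combinatorial construction: it holds whenever $t(n)$ upper bounds the amortized time of \cite{Holm:2001} or \cite{Holm18a}, and below I use $t(n)$ only to calibrate the sizes.

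First I would fix the parameters. Since the division bound gives $s=O(r^{1-\varepsilon})$, any region of size $r$ satisfies $r/s=\Omega(r^{\varepsilon})$. As $t(n)\in O(\poly(\log n))$, the quantity $t(n)\log n$ is $O(\poly(\log n))$, so choosing $r_1$ to be a sufficiently large fixed power of $t(n)\log n$ keeps $r_1$ and $s_1=O(r_1^{1-\varepsilon})$ within $O(\poly(\log n))$ while forcing $r_1/s_1=\Omega(r_1^{\varepsilon})=\Omega(t(n)\log n)$. Because $r_1\in O(\poly(\log n))$ we then have $t(r_1)\in O(\poly(\log\log n))$ and $\log r_1\in O(\log\log n)$, hence $t(r_1)\log r_1\in O(\poly(\log\log n))$; choosing $r_2$ a sufficiently large fixed power of $t(r_1)\log r_1$ symmetrically gives $r_2,s_2\in O(\poly(\log\log n))$ and $r_2/s_2=\Omega(t(r_1)\log r_1)$. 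This settles the two size bullets.

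To obtain nesting I would not run the black box twice independently, but read two granularities off a single recursive separator decomposition of $G$, which is what the cited linear-time algorithms build internally. Declaring a piece finished once its size first drops below $r_1$ yields $\mathcal{A}$, and continuing the same recursion inside each such piece until sizes drop below $r_2$ yields $\mathcal{R}$. A vertex lies in $\boundary\mathcal{A}$ exactly when it is used by a separator above the $r_1$-threshold and in $\boundary\mathcal{R}$ when it is used above the (lower) $r_2$-threshold, so $\boundary\mathcal{A}\subseteq\boundary\mathcal{R}$ automatically; and since further recursion below the $r_1$-threshold stays inside a single region $A$, the pieces of $\mathcal{R}$ contained in $A$ form an $r_2$-division of $A$ with $O(r_1/r_2)$ regions, giving the partition bullet. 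The per-region boundary bounds $s_1$ and $s_2$ follow from the standard geometric summation of separator sizes up the recursion tree, exactly as in the single-level analysis I am invoking.

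Finally, $K_\ell$-minor-free graphs have $m=O(n)$, so the recursive decomposition runs in $O(n)$ time and reading off both levels together with their boundaries is one further linear pass. The hard part will be the nesting step: the cited results are phrased as producing a single division, so I must verify that their separator recursions can be stopped at two prescribed levels in the required laminar way, or else fall back on computing $\mathcal{A}$ first and then an $r_2$-division of each $A$ while \emph{forcing} $\partial A$ onto the boundary. In that fallback the delicate point is bounding the boundary of each of the $\Theta(\abs{A}/r_2)$ sub-regions by $O(r_2^{1-\varepsilon})$ after inserting the at most $s_1$ prescribed portals of $\partial A$; this is plausible because the average number $O(s_1 r_2/r_1)$ of prescribed portals per sub-region is far below $s_2$ whenever $r_2\ll r_1$, but turning this average into a worst-case per-region bound is exactly the careful bookkeeping the deferred appendix must carry out.
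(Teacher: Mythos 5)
Your parameter calibration (step (i)) is sound and matches the paper's, which concretely takes $r_1\in\Theta(t^3(n)\log^4 n)$ and $r_2\in\Theta(t^3(r_1)\log^4 r_1)$ against a division with $s=O(r^{2/3}\log^{1/3}n)$. But both of the points you defer are where the actual content of the paper's proof lies, so the proposal has genuine gaps. First, the ``linear-time black box'' you grant yourself does not exist off the shelf: the introduction's claim of linear-time strict $(r,O(r^{1-\varepsilon}))$-divisions for minor-free graphs is a forward reference to this very appendix (``See Appendix~\ref{sec:r-div}''), and what the literature actually provides (Tazari and M\"uller-Hannemann's Divide procedure, Lemma~\ref{lem:mh}) runs in $O(n\log n)$ time. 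The missing idea that makes it linear is the contraction trick of Lemma~\ref{lem:minorfree-rdiv-S}: first use Reed--Wood (Lemma~\ref{lem:rw}) to compute, in linear time, a connected partition into $O(n/\log n)$ pieces of size $O(\log n)$ each; contract the pieces, run the $O(n'\log n')$-time Divide on the contracted graph with $n'=O(n/\log n)$ vertices (which is now $O(n)$ total), and expand back. Invoking the black box without this step is circular, and replacing it by the honest citation only yields $O(n\log n)$.

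Second, on nesting: your primary route (one separator recursion read at two thresholds) has an obstacle beyond the laminarity issue you flag---in the linear-time construction the recursion runs on a \emph{contracted} graph whose atoms have size $\Theta(\log n)\gg r_2=O(\poly(\log\log n))$, so the same recursion simply cannot be continued down to the fine scale; the fine level needs a fresh contraction at granularity $\Theta(\log r_1)$ inside each $A$. The paper therefore takes exactly your fallback, and the ``careful bookkeeping'' you defer is not new bookkeeping at all: the cited Divide$(G,S,r,\ell)$ already accepts a prescribed vertex set $S$ and guarantees a \emph{worst-case} bound of $c_1r^{2/3}$ boundary vertices per region with $S$ counting as boundary (Frederickson's procedure keeps splitting regions whose boundary is too large). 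The price of forcing $S=\boundary A$ onto the boundary shows up instead in the number of regions, $c_2(\abs{S}/r^{2/3}+n/r)$, and the one calculation the proof actually needs is that with $\abs{S}\leq s_1$ this is still $O(s_1(\log r_1/r_2)^{2/3}+r_1/r_2)=O(r_1/r_2)$ sub-regions per $A$, which the parameter choice makes comfortable. So your fallback is the right skeleton, but the two steps you leave open---linear time via contraction, and the worst-case per-region boundary bound under prescribed portals---are precisely the theorem's substance, and the second is resolved by the structure of the cited procedure rather than by an averaging argument.
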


Combined with Theorem~\ref{thm:main}, we thus obtain decremental biconnectivity with optimal worst-case query-time and amortized optimal update time.

\begin{corollary}
There exists a data structure that given a minor-free graph $G$ with $n$ vertices, preprocesses $G$ in $O(n)$ time and handles any series of edge- and vertex deletions in $O(n)$ total time while answering queries to pairwise connectivity,  $2$-edge connectivity, biconnectivity, nearest separating bridge in $O(1)$, and nearest separating cutvertex, in $O(1)$ time.	
\end{corollary}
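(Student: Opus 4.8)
The plan is to assemble the pieces already established and to supply the two genuinely new ingredients: collapsing the $O(m+n)$ bounds to $O(n)$, and reducing vertex deletions to edge deletions. Theorem~\ref{thm:minorfree-linear} supplies a suitable pair of $r$-divisions in linear time; Theorem~\ref{thm:main} supplies biconnectivity and nearest cutvertex; and Corollaries~\ref{cor:connect} and~\ref{cor:2edge} supply connectivity, $2$-edge connectivity, and nearest separating bridge. Everything else is bookkeeping.

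First I would record that a $K_\ell$-minor-free graph is sparse: by the Kostochka--Thomason bound it has $m\in O(\ell\sqrt{\log\ell}\cdot n)$ edges, so treating $\ell$ as a constant gives $m\in O(n)$. Hence every bound of the form $O(m+n)$ or $O(m)$ appearing in Theorem~\ref{thm:main} and its corollaries already equals $O(n)$, matching the claimed preprocessing and total-update bounds. Next I would build, in $O(n)$ total time, three data-structure instances that together cover all query types: Theorem~\ref{thm:main} applied to $G$ with the $r$-divisions from Theorem~\ref{thm:minorfree-linear} (for biconnectivity and nearest cutvertex); the instance from the proof of Corollary~\ref{cor:connect} on the universal-dummy-vertex augmentation $G'$ (for connectivity); and the instance from Corollary~\ref{cor:2edge} on the degree-$3$ vertex-split graph (for $2$-edge connectivity and nearest separating bridge). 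Each transformed graph has size $O(n)$ and inherits a suitable pair of $r$-divisions from $G$ exactly as shown in those corollaries, so running all three in parallel costs $O(n)$ preprocessing and $O(n)$ total update time, and each query is answered in $O(1)$ by consulting the appropriate instance.

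Finally I would handle vertex deletions by simulation: deleting a vertex $v$ amounts to deleting all edges currently incident to $v$, leaving an isolated vertex that is simply never queried again. This preserves the defining property of an $r$-division, since an $r$-division remains one under edge deletion and a formerly boundary vertex that loses all its edges may remain a (now trivial) boundary vertex; the same simulation is applied inside the augmented graphs of Corollaries~\ref{cor:connect} and~\ref{cor:2edge}. Because each original edge is deleted at most once over the entire sequence, the total number of simulated edge deletions is $O(m)\subseteq O(n)$, so the amortized $O(1)$ per-edge-deletion guarantee of the underlying structures yields an $O(n)$ total bound, intermixed freely with queries.

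I expect the only real content to lie in the bookkeeping of the parallel instances, namely in confirming that the universal-vertex augmentation and the circular-order vertex splitting are each computable in $O(n)$ time and transport the suitable pair of $r$-divisions of $G$ to a suitable pair for the transformed graph without blowing the boundary-size budgets. Both of these are precisely what Corollaries~\ref{cor:connect} and~\ref{cor:2edge} already establish, so no new difficulty arises beyond invoking the sparsity bound to certify that all transformed graphs have size $O(n)$; the corollary then follows by combining the three instances with the vertex-deletion simulation.
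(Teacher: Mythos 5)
Your proposal is correct and follows essentially the same route as the paper's own proof: sparsity of minor-free graphs collapses $O(m+n)$ to $O(n)$, Theorem~\ref{thm:minorfree-linear} supplies the suitable pair of $r$-divisions, Theorem~\ref{thm:main} together with Corollaries~\ref{cor:connect} and~\ref{cor:2edge} covers all query types, and vertex deletions are simulated by deleting incident edges. Your version merely spells out the bookkeeping (parallel instances, boundary preservation, the at-most-once deletion count) that the paper leaves implicit.
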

\begin{proof}
Recall that if $G$ excludes a fixed minor, the number of edges in $G$ is $O(n)$. Theorem~\ref{thm:minorfree-linear} shows that $G$ admits a suitable pair of $r$-divisions that can be found in linear time. 
Thus, Theorem~\ref{thm:main}, Corollary~\ref{cor:connect}, and Corollary~\ref{cor:2edge} implies that we can preprocess $G$ to obtain data structures for connectivity, $2$-edge connectivity, and biconnectivity. 
Finally, notice that vertex-deletions can be simulated by edge-deletions: To delete a vertex, simply delete all its incident edges.
\end{proof}

Since the total number of edges in a minor-free graph is $O(n)$, the data structure above has the optimal amortized update time for edge deletions and vertex deletions, both. The question of whether our data structure generally admits vertex deletions in $O(n)$ total time remains open.

\appendix

\section{Fast r-divisions of minor free graphs}\label{sec:r-div}
This section is dedicated to the proof of Theorem~\ref{thm:minorfree-linear}.
We rely on the following $2$ known results:

\begin{lemma}[{Reed and Wood~\cite[Lemma 2]{reed:hal-01184376}}]\label{lem:rw}
There is an algorithm with running time $O(2^{2\ell}n+m)$ that, given $k,\ell\in \mathbb{Z}^+$, and an $n$-vertex $m$-edge graph $G$, outputs a connected $H$-partition $\{H_v | v \in V(G)\}$ such that either: 
\begin{itemize}
\item $H$ has a $K_\ell$ model, or
\item $|H| < 2^{(\ell^2+\ell-1)}k^{-1}n = O(k^{-1}n)$ and, for all
  $v$, $|H_v| \leq 2k$
\end{itemize}
\end{lemma}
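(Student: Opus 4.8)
The plan is to build the partition by a greedy sequence of edge contractions and to read off the quotient graph $H$ as a minor of $G$; the whole difficulty is to argue that either this quotient stays small or it becomes dense enough to exhibit a $K_\ell$ model. Concretely, I would start from the trivial partition into singletons (so that the initial quotient is $G$ itself) and repeatedly contract an edge of the current quotient joining two bags whose combined size is at most $2k$. Each such contraction keeps every bag connected and of size at most $2k$, and keeps the quotient $H$ a minor of $G$, so any $K_\ell$ model found in $H$ lifts to a $K_\ell$ minor of $G$. I would run this until no further contraction is possible.

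Next I would analyze the terminal configuration. Call a bag \emph{heavy} if it has more than $k$ vertices and \emph{light} otherwise. Since the contraction rule is exhausted, every edge of $H$ joins two bags of total size exceeding $2k$; in particular no two light bags are adjacent, so the light bags form an independent set and each is adjacent only to heavy bags. The heavy bags are easy to bound: each has more than $k$ vertices, so there are fewer than $n/k$ of them. The total count $|H|$ is therefore controlled once the number of light bags is controlled.

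Bounding the light bags is where minor-freeness must be used, and I expect this to be the main obstacle. The quotient $H$ is $K_\ell$-minor-free in the second case, and an elementary (non-optimal) density bound for $K_\ell$-minor-free graphs --- every such graph on $N$ vertices has at most $c(\ell)N$ edges, equivalently is $c(\ell)$-degenerate --- bounds the number of bags by $c(\ell)\cdot(\text{number of heavy bags})$ after charging each light bag to its incident heavy bags; the crude constant $c(\ell)=2^{\Theta(\ell^2)}$ is exactly what produces the stated factor $2^{\ell^2+\ell-1}$. The delicate point is handling light bags with very few heavy neighbours, which do not by themselves create density: one must strengthen the contraction rule (for instance, also absorbing a light bag that has a single neighbour, or charging such bags against a spanning-tree / BFS structure) so that a surplus of them either is impossible or forces the dense case. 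If after all contractions $H$ still has at least $2^{\ell^2+\ell-1}n/k$ bags, the density bound is violated, so $H$ must contain a $K_\ell$ model, and we output the first alternative.

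Finally, for the running time I would maintain the bags in a union--find structure together with bucketed adjacency and degree lists on the quotient, so that each contraction and each search for a mergeable pair costs amortized constant time and the whole contraction phase runs in $O(n+m)$ time. The $2^{2\ell}$ factor comes from the local test used to certify the $K_\ell$ model (or to detect that a bounded-size neighbourhood already realizes the required degeneracy threshold), which inspects subsets of size $O(\ell)$ around a bag; this is the only place where an exponential-in-$\ell$ cost is incurred, and it is paid at most once per vertex.
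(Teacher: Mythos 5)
First, a note on the comparison you asked for: the paper contains \emph{no proof} of this statement. Lemma~\ref{lem:rw} is imported verbatim from Reed and Wood (their Lemma~2) and is used purely as a black box in Appendix~\ref{sec:r-div} (in Lemma~\ref{lem:minorfree-rdiv-S} and hence Theorem~\ref{thm:minorfree-linear}). So there is no ``paper's own proof'' to measure your argument against; the only question is whether your free-standing argument is valid. It is not: the counting step at its core fails, and the failure cannot be repaired by the patches you sketch.

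The gap is exactly the point you flag and then wave away: light bags with one (or two) heavy neighbours. A vertex of degree~$1$ in the quotient is invisible to every density or degeneracy bound, so ``charging each light bag to its incident heavy bags'' via the sparsity of $K_\ell$-minor-free graphs gives nothing, and there can be $\Omega(n)$ such bags. Concretely, take $G=K_{1,n-1}$, a star. Under \emph{any} execution of your greedy, the bag containing the center grows to size at most $2k$, and every leaf outside that bag remains a singleton forever, because a connected subgraph of a star avoiding the center is a single leaf. The terminal partition therefore has at least $n-2k+1$ bags, and its quotient $H$ is again a star, which has no $K_3$ model. Hence for $\ell\geq 3$ and $k>2^{\ell^2+\ell-1}$ neither alternative is established --- and $k\gg 2^{\ell^2+\ell-1}$ is precisely the regime this paper needs, since Lemma~\ref{lem:minorfree-rdiv-S} takes $k\in\Theta(\log n)$ with $\ell$ fixed. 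Your first proposed repair, absorbing a light bag with a unique neighbour into that neighbour, destroys the invariant $|H_v|\leq 2k$ (on the star it produces a single bag of size $n$); your second, charging against a spanning-tree or BFS structure, has nothing to charge to, because the surplus of singleton bags is genuinely present in every connected partition of the star. Strengthening the termination rule so that surviving light bags have degree at least $2$ also fails: on $K_{2,n-2}$ the greedy ends with at most two heavy bags and $n-O(k)$ singleton bags adjacent to both of them, and the quotient (triangles sharing an edge, a ``book'') is $K_4$-minor-free, so the same contradiction appears for $\ell\geq 4$. Indeed, these examples show that under the natural reading of the statement (parts inducing connected subgraphs of $G$, with $H$ the quotient and hence a minor of $G$), no one-shot greedy-plus-sparsity argument can possibly succeed, and that any correct treatment must engage with Reed and Wood's precise definitions and their considerably more involved argument rather than an elementary degeneracy count. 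Finally, even setting the counting aside, your running-time claims are unsupported: you never specify how the algorithm finds or certifies a $K_\ell$ model in the dense case (an algorithmic Mader-type step is needed, and it is the only plausible source of the $O(2^{2\ell}n+m)$ bound), and the ``local test inspecting subsets of size $O(\ell)$'' does not correspond to any procedure with that cost.
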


\begin{lemma}[{Tazari and M\"uller-Hannemann~\cite[Lemma
  3.4]{DBLP:journals/dam/TazariM09}}]\label{lem:mh} Replacing the
  planar separator in Frederickson's \emph{Divide}
  procedure~\cite{Frederickson:87} with the separator algorithm of
  Reed and Wood~\cite[Theorem 2, which is based on
    Lemma~\ref{lem:rw}]{reed:hal-01184376} causes the
  Divide$(G,S,r,\ell)$ procedure to work as follows (where $G$ is a
  graph with $n$ vertices and excludes $K_\ell$ as a minor and $c_1$
  and $c_2$ are constants depending only on $\ell$):
  \begin{itemize}
  \item it divides $G$ into at most
    $c_2(\abs{S}/r^{\frac{2}{3}}+\frac{n}{r})$ regions;
  \item each region has at most $r$ vertices;
  \item each region has at most $c_1r^{\frac{2}{3}}$ boundary
    vertices, where the vertices in $S$ also count as boundary;
  \item it takes time $O(n\log n)$.
  \end{itemize}
\end{lemma}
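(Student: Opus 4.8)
The plan is to treat Frederickson's \emph{Divide} procedure~\cite{Frederickson:87} as a black box that touches its separator subroutine only through three abstract guarantees, and then to verify that the Reed--Wood separator~\cite{reed:hal-01184376} (built on Lemma~\ref{lem:rw}) supplies exactly these guarantees with the planar exponent $\frac12$ replaced by $\frac23$. Concretely, Frederickson's recursion and charging arguments use only: (a) a \emph{size bound} — applied to an $N$-vertex piece the separator has at most $c(N)$ vertices; (b) a \emph{balance bound} — deleting the separator leaves components groupable into two parts, each with at most a constant fraction (say $\frac23$) of the $N$ vertices; and (c) a \emph{time bound} — the separator is computed in linear time. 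In the planar case $c(N)=O(\sqrt N)$; I would show that for $K_\ell$-minor-free graphs the Reed--Wood separator gives $c(N)=O_\ell(N^{2/3})$ with the same balance and linear time, and that the class is closed under taking the recursive pieces (a subgraph of a $K_\ell$-minor-free graph is again $K_\ell$-minor-free), so the subroutine stays valid at every node of the recursion.

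First I would obtain (a)--(c), or rather cite the separator these assemble into: Reed and Wood's Theorem~2~\cite{reed:hal-01184376} produces, in linear time, a $\frac23$-balanced vertex separator of size $O_\ell(N^{2/3})$ for any $K_\ell$-minor-free $N$-vertex graph. The mechanism, which I would recall for intuition, runs the connected $H$-partition of Lemma~\ref{lem:rw} with parameter $k=\Theta(N^{1/3})$: since $G$ excludes $K_\ell$ and the quotient $H$ is a minor of $G$, the $K_\ell$-model alternative cannot occur, so the second alternative holds and $H$ has $\abs{H}=O_\ell(k^{-1}N)=O_\ell(N^{2/3})$ vertices, is again $K_\ell$-minor-free, and has parts with $\abs{H_v}\le 2k$. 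A balanced separator of the small graph $H$ then expands, by replacing each separator node with its $\le 2k$ original vertices, into a separator of $G$ of size $O_\ell(k\cdot\sqrt{\abs{H}})=O_\ell(\sqrt{kN})=O_\ell(N^{2/3})$; the linear running time and the closure under subgraphs are exactly what Theorem~2 and minor-closedness provide.

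With (a)--(c) in hand, the four conclusions follow by rerunning Frederickson's analysis with $\sqrt{\cdot}$ replaced by $(\cdot)^{2/3}$ throughout. The recursion splits the current piece with a balanced separator until every piece has at most $r$ vertices; guarantee (b) makes the recursion tree have depth $O(\log(n/r))=O(\log n)$, and since each separator call is linear and the pieces at a fixed level have total size $O(n)$, the total work is $O(n\log n)$, giving the time bound. The per-region vertex bound of $r$ is enforced directly by the stopping rule. For the boundary bound, every boundary vertex of a region is a separator vertex inherited from some split, so substituting $c(N)=O_\ell(N^{2/3})$ into Frederickson's summation over the recursion yields $c_1r^{2/3}$ boundary vertices per region, with the pre-specified set $S$ also counted as boundary by construction. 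Finally, the region count is $O(n/r)$ from the vertex budget plus $O(\abs{S}/r^{2/3})$ extra regions forced by the requirement that each region absorb at most $c_1r^{2/3}$ of the $\abs{S}$ designated vertices, matching $c_2(\abs{S}/r^{2/3}+n/r)$.

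\textbf{Main obstacle.} The delicate point is guarantee (b). Lemma~\ref{lem:rw} controls only the \emph{number} of parts and the size of each part, whereas Frederickson's recursion needs a separator balanced in \emph{vertex count}; because each part carries up to $2k=\Theta(N^{1/3})$ vertices, a part-count--balanced separator of $H$ could be vertex-count--unbalanced by an additive $\Theta(N^{1/3})$. I would handle this by computing the separator of $H$ \emph{weighted} by part sizes and arguing that the $\Theta(N^{1/3})$ granularity is absorbed into the constant fraction for large $N$ (small $N$ being a base case). Maintaining this weighted balance while still charging the expansion factor $2k$ into the $N^{2/3}$ separator size, and simultaneously confirming that the subroutine on $H$ respects the linear-time budget so the depth-$O(\log n)$ recursion stays within $O(n\log n)$, is the part requiring genuine care; the remainder is a mechanical substitution of exponents into the existing planar analysis.
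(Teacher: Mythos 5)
The first thing to note is that the paper does not prove this statement at all: Lemma~\ref{lem:mh} is imported verbatim from Tazari and M\"uller-Hannemann~\cite[Lemma 3.4]{DBLP:journals/dam/TazariM09} (the citation is even part of the lemma header), and the paper uses it strictly as a black box inside Lemma~\ref{lem:minorfree-rdiv-S}. So there is no in-paper proof to compare against; what you have written is a reconstruction of the proof in the cited source. As such a reconstruction, your outline is sound and follows essentially the same route that Tazari and M\"uller-Hannemann take: treat Frederickson's \emph{Divide}~\cite{Frederickson:87} as a recursion that touches the separator subroutine only through its size, balance, and running-time guarantees, plug in the Reed--Wood separator~\cite{reed:hal-01184376} (which exists and runs in linear time precisely because the $K_\ell$-model alternative of Lemma~\ref{lem:rw} cannot occur and quotients of connected partitions are minors), and redo the charging arguments with the exponent $\frac12$ replaced by $\frac23$. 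Your accounting of the region count, the per-region vertex bound, the boundary bound, and the $O(n\log n)$ time via $O(\log n)$ recursion depth with geometrically decreasing piece sizes is the right shape of argument.

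One correction on where the real difficulty sits. The ``main obstacle'' you identify --- that the $H$-partition of Lemma~\ref{lem:rw} controls only part count and part size, so a separator of $H$ must be computed with nodes weighted by $\abs{H_v}$ to get vertex-count balance in $G$ --- is genuine, but it is \emph{internal to the proof of Reed--Wood's Theorem~2}, which you have already invoked as a black box supplying a $\frac23$-balanced separator; once you cite that theorem you do not need to re-derive its weighting trick. The delicate point that actually belongs to the present lemma is different: Frederickson's \emph{Divide} does not only split by raw vertex count. After the vertex-count phase it must further split regions that have accumulated too many boundary vertices (and, here, too many vertices of the prespecified set $S$), and those splits require the separator to be balanced with respect to \emph{boundary/$S$-weight}, not vertex count. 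So the adaptation needs a weighted form of the minor-free separator (weights concentrated on boundary and $S$-vertices), and it is exactly this weighted application, together with the charging argument that bounds the number of extra regions by $O(\abs{S}/r^{2/3})$, that makes the lemma's first and third bullets come out. Your sketch gestures at this (``$S$ also counted as boundary by construction'') but attributes the weighting issue to the wrong layer of the argument; to turn the sketch into a proof you would state the weighted version of the separator once and use it in both phases of \emph{Divide}.
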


The separator algorithm used in Lemma~\ref{lem:mh} can be replaced
with the improved separator algorithm in
Wulff-Nilsen~\cite{Wulff-Nilsen11} to improve the constant factors in
the running time and the dependency on $\ell$ (at the cost of a slightly worse separator size), but
we still need to use the algorithm in Lemma~\ref{lem:rw} to get linear time.

\begin{lemma}\label{lem:minorfree-rdiv-S}
Given a graph $G$ that does not have a $K_\ell$-minor, and a subset
$S$ of its vertices, for any $r\in\Omega(\log n)$ we can in linear
time compute a partition of $G$ into $O(\abs{S}(\frac{\log
  n}{r})^{\frac{2}{3}}+\frac{n}{r})$ regions, each with at most $r$
vertices and $O(r^{\frac{2}{3}}\log^{\frac{1}{3}}n)$ boundary
vertices.
\end{lemma}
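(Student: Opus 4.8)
The plan is to combine the clustering of Lemma~\ref{lem:rw} with the divide step of Lemma~\ref{lem:mh} in a contract--divide--uncontract scheme. Applying Lemma~\ref{lem:mh} directly to $G$ already produces a partition of the right shape, but it runs in $O(n\log n)$ time and yields boundaries of size $O(r^{2/3})$. To reach linear time I would first shrink the vertex count by a logarithmic factor using a clustering; the very same shrinking is what inflates the boundary bound from $O(r^{2/3})$ to the claimed $O(r^{2/3}\log^{1/3}n)$, so the two discrepancies are reconciled by a single device.

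First I would run the algorithm of Lemma~\ref{lem:rw} with $k=\Theta(\log n)$. Since $G$ excludes $K_\ell$ and any $K_\ell$ model in the quotient would lift to a $K_\ell$ minor of $G$, the first alternative cannot occur, so the algorithm returns a connected partition $\{H_v\}$ with every part of size $\abs{H_v}\le 2k=O(\log n)$ and with only $O(n/k)=O(n/\log n)$ parts. Let $G'$ be the quotient graph obtained by contracting each part to a single vertex; then $G'$ has $n'=O(n/\log n)$ vertices, is itself a minor of $G$ and hence $K_\ell$-minor-free, and is computed in $O(2^{2\ell}n+m)=O(n)$ time because $\ell$ is constant and $m=O(n)$ for minor-free $G$. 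I would set $S'=\{v:H_v\cap S\neq\emptyset\}$, so $\abs{S'}\le\abs{S}$, and $r'=\floor{r/(2k)}=\Theta(r/\log n)$; the hypothesis $r\in\Omega(\log n)$ guarantees $r'=\Omega(1)$, so the divide step is meaningful.

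Next I would run Divide$(G',S',r',\ell)$ from Lemma~\ref{lem:mh} on $G'$; this takes $O(n'\log n')=O((n/\log n)\cdot\log n)=O(n)$ time, which together with the contraction gives the overall linear bound. Finally I would un-contract, replacing each vertex $v$ of a region by its part $H_v$ and assigning the internal edges of each part to one region containing $v$, so that $E(G)$ is partitioned. Then I would verify the three bounds. Each region of $G'$ has at most $r'$ vertices, each expanding to at most $2k$ original vertices, so each region of $G$ has at most $2kr'\le r$ vertices. Each region of $G'$ has at most $c_1(r')^{2/3}$ boundary vertices, each expanding to at most $2k$, giving $O((r')^{2/3}k)=O((r/\log n)^{2/3}\log n)=O(r^{2/3}\log^{1/3}n)$ boundary vertices per region, and the vertices of $S$ are boundary because their parts lie in $S'$. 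The number of regions is at most $c_2(\abs{S'}/(r')^{2/3}+n'/r')=O(\abs{S}(\log n/r)^{2/3}+n/r)$, using $\abs{S'}\le\abs{S}$, $(r')^{-2/3}=\Theta((\log n/r)^{2/3})$, and $n'/r'=\Theta(n/r)$.

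The one genuinely delicate point is the un-contraction accounting: a cluster that is a boundary vertex of $G'$ may belong to several regions, so all $2k$ of its original vertices appear in each of those regions' boundaries, and one must check that routing each cluster's internal edges to a single region keeps both the per-region vertex count within $r$ and the per-region boundary count within the stated bound while still yielding a genuine partition of $E(G)$. All of this follows from the uniform bound $\abs{H_v}\le 2k$, which prevents any single part from contributing more than $2k$ to any one region; the remainder is the parameter balancing carried out above.
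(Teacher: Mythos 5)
Your proposal is correct and follows essentially the same route as the paper's proof: cluster with Lemma~\ref{lem:rw} at $k=\Theta(\log n)$ (noting the $K_\ell$-model branch cannot occur), contract to $G'$ with $O(n/\log n)$ vertices, run the modified Divide of Lemma~\ref{lem:mh} on $(G',S')$ with region parameter $\approx r/(2k)$, and un-contract, with the same parameter accounting for region count, region size, and boundary size. If anything, your use of $\lfloor r/(2k)\rfloor$ makes the per-region vertex bound $2k\cdot\lfloor r/(2k)\rfloor\le r$ slightly cleaner than the paper's ceiling-based estimate, provided you state (as the paper does via $k\le r/2$) that the constant in $k$ is chosen so that $2k\le r$.
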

\begin{proof}
Choose $k\leq r/2, k\in\Theta(\log n)$, such that the linear time
algorithm from Lemma~\ref{lem:rw} outputs a $H$-partition with
$\abs{H}\in O(n/\log n)$ where $\abs{H_v}\leq 2k\leq r$ for every
$v$. Construct the graph $G'$ with $n'\in O(n/\log n)$ vertices by
contracting each $H_v$ to a single vertex, and let $S'$ be the vertices contracted from $H_v$ where $v\in S$.

Now set $\mathcal{R}' :=
\operatorname{Divide}(G',S',\ceil{\frac{r}{2k}},\ell)$ using
the modified Divide algorithm from Lemma~\ref{lem:mh}.  Since $n' \in
O(n/\log n)$ this takes linear time, and the result is a strict
$(\ceil{\frac{r}{2k}}, c_1\ceil{\frac{r}{2k}}^{\frac{2}{3}})$-division
of $G'$ with at most $c_2(\abs{S'}/\ceil{\frac{r}{2k}}^{\frac{2}{3}} +
n'/\ceil{\frac{r}{2k}})\leq c_2(\abs{S'}(\frac{2k}{r})^{\frac{2}{3}} +
n'\frac{2k}{r})\in O(\abs{S} (\frac{\log n}{r})^{\frac{2}{3}}+ n/r)$
regions.

Finally, construct $\mathcal{R}$ by replacing each vertex in each
region of $\mathcal{R}'$ with the subgraph $H_v$ that it was
contracted from.  For each edge that this would put in multiple
regions, just pick the first region that contains it.  This also takes
linear time.

By choice of $k$, $\ceil{\frac{r}{2k}}<\frac{r}{2k}+1\leq \frac{r}{k}$,
and thus each region has at most $\ceil{\frac{r}{2k}}k<r$ vertices and
at most $c_1\ceil{\frac{r}{2k}}^{\frac{2}{3}}k\leq
c_1r^{\frac{2}{3}}k^{\frac{1}{3}}\in
O(r^{\frac{2}{3}}\log^{\frac{1}{3}}n)$ boundary vertices.
\end{proof}

\begin{corollary}\label{cor:minorfree-rdiv}
  Given a graph $G$ that does not have a $K_\ell$-minor, for any
  $r\in\Omega(\log n)$ we can compute a strict
  $(r,O(r^{\frac{2}{3}}\log^{\frac{1}{3}}n))$-division in linear time\footnote{We believe this fact to be folklore, but failed to find a reference in the literature.}.
\end{corollary}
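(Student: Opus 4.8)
The plan is to observe that this corollary is nothing more than the special case of Lemma~\ref{lem:minorfree-rdiv-S} obtained by taking the prescribed vertex set to be empty. I would invoke Lemma~\ref{lem:minorfree-rdiv-S} with $S=\emptyset$. Substituting $\abs{S}=0$ into the region-count bound $O(\abs{S}(\frac{\log n}{r})^{\frac{2}{3}}+\frac{n}{r})$ collapses the first term and leaves $O(\frac{n}{r})$ regions, which is exactly the required number. The per-region guarantees are inherited verbatim: each region has at most $r$ vertices, and $O(r^{\frac{2}{3}}\log^{\frac{1}{3}}n)$ boundary vertices. Matching these against the definition of a strict $(r,s)$-division from Section~\ref{sec:prelims}, we identify $s=O(r^{\frac{2}{3}}\log^{\frac{1}{3}}n)$ and conclude the output is a strict $(r,O(r^{\frac{2}{3}}\log^{\frac{1}{3}}n))$-division.

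The remaining points to check are the structural requirements that Lemma~\ref{lem:minorfree-rdiv-S} produces a genuine \emph{division} rather than merely a vertex partition with the stated cardinalities. First, the regions must partition the edges; this holds because the construction in that lemma's proof assigns each edge to exactly one region (explicitly picking ``the first region'' in the case of an edge spanning multiple un-contracted blocks $H_v$). Second, the definition demands that only boundary vertices appear in more than one region; this is guaranteed since the shared vertices are precisely those inherited from the shared boundary of the underlying $\operatorname{Divide}$ output together with the contraction blocks $H_v$ straddling the cut, all of which are declared boundary. With $S=\emptyset$ there is no externally forced boundary to worry about, so no additional shared vertices are introduced beyond those arising from the separator itself.

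Since the empty-$S$ instance is immediate and all the quantitative work is already done inside Lemma~\ref{lem:minorfree-rdiv-S}, there is essentially no hard step here; the only thing warranting care is confirming that the formal definition of a strict division is met, namely the edge-partition and the ``shared vertices are boundary'' conditions, both of which follow directly from the construction. The linear running time is likewise inherited unchanged from Lemma~\ref{lem:minorfree-rdiv-S}. In short, the corollary is the $S=\emptyset$ corollary of the lemma, and the proof amounts to reading off the bounds and verifying they instantiate the definition.
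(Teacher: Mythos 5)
Your proposal is correct and matches the paper's proof exactly: the paper likewise proves this corollary by a one-line application of Lemma~\ref{lem:minorfree-rdiv-S} with $S=\emptyset$. Your additional verification that the output satisfies the formal definition of a strict division is sound, just more detailed than the paper's one-sentence argument.
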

\begin{proof}
  Simply apply Lemma~\ref{lem:minorfree-rdiv-S} with $S=\emptyset$.
\end{proof}

\begin{proof}[Proof of Theorem~\ref{thm:minorfree-linear}]
  First use Lemma~\ref{lem:minorfree-rdiv-S} with
  $r=r_1\in\Theta(t^3(n)\log^4 n)$ and $S=\emptyset$ to compute the
  strict $(r_1,s_1)$-division $\mathcal{A}$ in linear time, where
  $s_1\in O(r_1^{\frac{2}{3}}\log^{\frac{1}{3}}n)$.  Observe that
  $\frac{r_1}{s_1}\in\Omega((\frac{r_1}{\log
    n})^{\frac{1}{3}})=\Omega(t(n)\log n)$ as required.

  Then for each $A\in\mathcal{A}$, use Lemma~\ref{lem:minorfree-rdiv-S}
  again with $r=r_2\in\Theta(t^3(r_1)\log^4 r_1)$ and $S=\boundary A$
  to compute a partition $A\cap\mathcal{R}$ of $A$ into
  $O(\abs{\boundary A}(\frac{\log\abs{A}}{r_2})^{\frac{2}{3}} +
  \frac{\abs{A}}{r_2} ) \subseteq O(s_1(\frac{\log r_1}{r_2})^{\frac{2}{3}} +
  \frac{r_1}{r_2} ) = O(\frac{r_1}{r_2})$ regions, each having at
  most $r_2$ vertices and at most $s_2\in
  O(r_2^{\frac{2}{3}}\log^{\frac{1}{3}}\abs{A})\subseteq
  O(r_2^{\frac{2}{3}}\log^{\frac{1}{3}}r_1)$ boundary vertices.  The
  union of these regions, $\mathcal{R}$, has at most
  $\sum_{A\in\mathcal{A}}O(\frac{r_1}{r_2})=O(\frac{n}{r_2})$ regions,
  and is thus a strict $(r_2,s_2)$-division of $G$. Furthermore,
  $\frac{r_2}{s_2}\in\Omega((\frac{r_2}{\log
    r_1})^{\frac{1}{3}})=\Omega(t(r_1)\log r_1)$ as required.

  Finally, since $t(n)\in O(\poly(\log n))$, we have $r_1,s_1\in
  O(\poly(\log n))$ and $r_2,s_2\in O(\poly(\log\log n))$.
\end{proof}

\section{Dynamic first-on-path with split and contract}\label{sec:first-on-path}

In this section, we will be considering a tree with two kinds of
nodes, called \emph{black} and \emph{white}, and rooted at some node
$r$. For any node $u$, let $N(u)$ denote the set of neighbors of $u$
(including $\parent(u)$ if $u\neq r$), and let $d(u)=\abs{N(u)}$.

Our goal is to support the following operations:
\begin{description}

\item[split$(u,M)$:] given a white node $u$ with $d(u)\geq 2$, and a
  subset $M\subset N(u)$ of its neighbors, $1\leq \abs{M}\leq
  \frac{1}{2}d(u)$, insert a new white node $v$ as child of $u$.  Then
  let $M'=M$ if $u=r$ or $\parent(u)\not\in M$, and $M'=N(u)\setminus
  M$ otherwise.  Finally make each node in $M'$ a child of $v$.

\item[contract$(e)$] given an edge $e=(c,p)$, make all children of
  $c$ children of $p$ and turn $p$ black.

\item[parent$(u)$:] given a node $u$, return its parent (or $\nil$ if
  $u$ is the root).  This operation should take worst case $O(1)$
  time.

\item[first-child$(u)$:] given a node $u$, return its first child (or
  $\nil$ if $u$ is a leaf).  The order of the children is chosen
  arbitrarily by the data structure, but stays fixed between splits.
  This operation should take worst case $O(1)$ time.

\item[next-sibling$(u)$:] given a node $u$, return its next sibling (or
  $\nil$ if $u$ is the last child of its parent).  The order of the
  children is chosen arbitrarily by the data structure, but stays
  fixed between splits.  This operation should take worst case $O(1)$
  time.

\item[ca$(u,v)$:] given nodes $u,v$, return the triplet $(a,u',v')$ of
  \emph{characteristic ancestors}
  (See~\cite{Gabow:1990:DSW:320176.320229}) where $a=\nca(u,v)$ is the
  nearest common ancestor to $u,v$, and $u'$ (resp $v'$) is either the
  first node on the path from $a$ to $u'$ ($v'$), or $a$ if $u=a$
  ($u=v$).  This operation should take worst case $O(1)$ time.

\item[first-on-path$(u,v)$:] given distinct nodes $u,v$ return the
  first node $w$ on the path from $u$ to $v$.  Note that given
  $p=\parent(u)$ and $(a,u',v')=\ca(u,v)$ this can trivially be
  computed in $O(1)$ time as follows: If $u=a$ then $w=v'$, otherwise
  $w=p$.

\end{description}
The goal is that, when starting on a star with $B$ black leaves and a
white center, any sequence of $s$ splits and (at most $B+s$) contracts
takes total $O(s+B\log B)$ time, while answering
intermixed queries in worst case constant time.

Note that there is a sequence of valid splits
$\splitnode(u_1,M_1),\ldots,\splitnode(u_s,M_s)$ with
$\sum_{i=1}^s\abs{M_i}=\Omega(s+B\log B)$, so a total time of
$O(s+B\log B)$ is in some sense the best we can hope for without
somehow compressing the input.

Observe also that starting from a star is not really a restriction,
since given such a data structure, any tree with $n$ nodes of which
$B$ are black can easily be constructed from such a star in $O(n+B\log
B)$ time.

As a starting point we will use the following simple tree data structure.

\begin{lemma}\label{lem:tree-split-contract-parent}
  There is a data structure for a dynamic rooted tree with $n$ nodes,
  that can be initialized in $O(n)$ time, supporting $\splitnode(u,M)$
  operations in worst case $O(\abs{M})$ time, $\contract(u,v)$
  operations in worst case $O(\min\set{d(u),d(v)})$ time, and
  $\parent(u)$, $\firstchild(u)$, and $\nextsibling(u)$ queries in
  worst case constant time,
\end{lemma}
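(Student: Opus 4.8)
The plan is to use the standard \emph{first-child / next-sibling} representation of a rooted tree, augmented with one level of indirection on the parent pointers so that no $\splitnode$ ever has to touch more than $O(\abs{M})$ nodes. Concretely, I give every node $w$ its own \emph{child-list header} object $H(w)$, which stores the head and tail of a doubly-linked list of $w$'s children together with a back-pointer $H(w).\mathrm{owner}=w$. Every non-root node $c$ stores a single pointer to the header of its parent, i.e.\ to $H(\parent(c))$; conversely $w$ stores a pointer to $H(w)$. With this layout $\firstchild(w)$ reads $H(w).\mathrm{head}$, $\nextsibling(c)$ follows $c$'s forward sibling pointer, and $\parent(c)$ dereferences $c$'s stored header and returns its $\mathrm{owner}$ field; each is two pointer dereferences, hence worst-case $O(1)$ regardless of the update history. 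Initialization from a star (or any tree) builds the $n$ headers and sibling links in $O(n)$ time. I also maintain a child-count field per node, updated by $O(1)$ increments, so that degrees can be compared in constant time.

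The key step is $\splitnode(u,M)$, and the only real obstacle is that the set $M'$ of neighbors actually moved under the new node $v$ may be the \emph{complement} $N(u)\setminus M$ (when $\parent(u)\in M$), which can be far larger than $M$, yet we must still finish in $O(\abs{M})$. I resolve this by \emph{reassigning a header instead of moving the large block}. In the easy case $\parent(u)\notin M$, so $M'=M$ is the small side: I create $v$ with a fresh empty header, splice the at most $\abs{M}$ nodes of $M$ out of $H(u)$ and into $H(v)$ while repointing each one's parent-header to $H(v)$, and finally insert $v$ into $H(u)$. In the hard case $\parent(u)\in M$, so $M'=N(u)\setminus M$ is the large side: rather than move those nodes, I hand the \emph{entire} header $H(u)$ to $v$ by setting $H(u).\mathrm{owner}:=v$ (so every node still referencing $H(u)$ instantly has parent $v$, exactly as required for $M'$), give $u$ a brand-new empty header, and then splice only the $O(\abs{M})$ nodes of $M\setminus\set{\parent(u)}$ out of $v$'s inherited list into $u$'s new list, repointing their parent-headers; lastly $v$ is inserted as a child of $u$. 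Note $\parent(u)$ itself is reached through $u$'s own unchanged parent-header pointer and therefore stays with $u$. In both cases the invariant ``every node references exactly the header owned by its current parent'' is restored after touching only $O(\abs{M})$ nodes, so the queries remain correct and the time is worst-case $O(\abs{M})$.

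For $\contract(c,p)$, where $c$ is the child of $p$, the standard ``charge the smaller side'' idea now suffices. After removing $c$ from $p$'s list, I compare the two child-list lengths: if $c$ has no more children than $p$, I splice $c$'s children into $H(p)$ and repoint each to $H(p)$, discarding $H(c)$; otherwise I reassign $H(c)$ to $p$ (its current contents, $c$'s former children, thereby inheriting parent $p$), move $p$'s original children into it while repointing them to $H(c)$, and discard $H(p)$. Either branch touches $O(\min\set{d(c),d(p)})$ nodes, after which I recolor $p$ black and free $c$. Both sub-cases preserve the one-header-per-node invariant, giving the claimed worst-case $O(\min\set{d(c),d(p)})$ bound.

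The only point needing care beyond routine doubly-linked-list surgery is the header-reassignment trick in the hard split case and its contraction analogue; everything else is bookkeeping. Since no rebuilding or amortization is used, all stated bounds hold in the worst case.
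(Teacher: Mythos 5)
Your proposal is correct and takes essentially the same approach as the paper: your child-list header with an $\mathrm{owner}$ back-pointer is precisely the paper's ``sentinel'' in a doubly-linked ring (parent and children point to the sentinel, only the sentinel points to the parent), and you use the same two key moves --- reassigning the whole header to the new node $v$ when $\parent(u)\in M$ so that only the $O(\abs{M})$ nodes of $M\setminus\set{\parent(u)}$ need to be spliced out, and moving the shorter child list onto the longer one in $\contract$. The only difference is presentational: your write-up spells out the pointer surgery and the invariant in more detail than the paper does.
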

\begin{proof}
  We need a level of indirection to be able to handle
  $\splitnode(u,M)$ where $\parent(u)\in M$.  Instead of a direct
  pointer from each child to its parent, we keep the list of children
  in a doubly-linked ring with a ``sentinel''. The parent and each
  child in the ring has a pointer to the sentinel, and only the
  sentinel has a pointer to the parent.
  This structure can clearly be initialized in $O(n)$ time.
  Finding the parent then requires following two pointers instead of
  one, but that is still constant.
  Creating a new node $v$, and (if $\parent(u)\in M$) making all the
  original children of $u$ children of $v$, and then making $v$ a
  child of $u$ only requires changing a constant number of
  pointers.
  Moving each of the $\abs{M}$ edges between $u$ and $v$ takes only a
  constant number of pointer changes per edge.  Thus the total number
  of pointer changes per split is $O(\abs{M})$.

  For $\contract(v,u)$, we move the children of the shorter child list
  to the larger list one at a time (since they need to point to the
  right sentinel), then make that child list the child list of
  $u$. This clearly takes $O(\min\set{d(u),d(v)})$ time.
\end{proof}

\begin{lemma}
  The total time for $s$ splits and at most $B+s$ contractions, when
  using a data structure that uses worst case $O(\abs{M}$ time for
  $\splitnode(u,M)$ and worst case $O(\min\set{d(u),d(v)})$ for
  $\contract(v,u)$, is $O(s+B\log B)$.
\end{lemma}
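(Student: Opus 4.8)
The plan is to give an amortized analysis by a potential argument, separating the total running time into an $O(1)$ per-operation overhead and the cost of physically relocating edge-endpoints. The overhead contributes $O(s+\#\contract)=O(s+B)$, since there are at most $B+s$ contractions, so it suffices to bound the total number of endpoint relocations performed by the structure of Lemma~\ref{lem:tree-split-contract-parent}. I would use the potential $\Phi = \sum_{u\text{ white}} d(u)\log d(u)$, i.e.\ the sum, over every edge-endpoint sitting at a white node, of the logarithm of that node's degree, scaled by a suitable constant. On the starting star $\Phi_{\text{init}}=B\log B$ (the white center has degree $B$, the black leaves contribute $0$), and $\Phi\ge 0$ throughout.

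First I would handle splits. A $\splitnode(u,M)$ replaces a white node of degree $D=d(u)$ by two white nodes of degrees $a+1$ and $D-a+1$, where $a=\abs{M}\le\tfrac12 D$, at cost $O(a)$, with the relocated endpoints landing in a list whose size at least (nearly) halves. Since $x\log x$ is convex, $\Delta\Phi = (a{+}1)\log(a{+}1)+(D{-}a{+}1)\log(D{-}a{+}1)-D\log D$ is maximized at $a=1$ (value $2$) and is $-\Omega(a)$ once $a$ is a constant fraction of $D$; bounding it above by its chord gives $\Delta\Phi\le 2-\Omega(a)$, so after scaling $\Phi$ by a large enough constant, every split satisfies $\mathrm{cost}+\Delta\Phi=O(1)$. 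Crucially, a contraction only turns a node black and hence removes it from the white sum, so its $\Delta\Phi\le 0$. Telescoping, the total split cost is at most $O(s)+\Phi_{\text{init}}-\Phi_{\text{final}}=O(s+B\log B)$.

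It remains to bound the total contraction cost, i.e.\ the total number of endpoints relocated by contractions, and this is where I expect the real difficulty. Here I would use the union-by-size discipline of Lemma~\ref{lem:tree-split-contract-parent} (always move the shorter child list) together with two structural facts: splits never touch a black node, and in $\contract((c,p))$ the relocated endpoints are exactly those of the smaller side $c$, which land in $p$ just as $p$ is turned black. Thus every endpoint spends a \emph{white phase}, during which it is relocated only by splits, followed, after its first contraction-relocation, by a \emph{black phase} in which it can only take part in further merges, its containing set at least doubling each time. Charging each contraction-relocation to such a doubling bounds the black-phase relocations of a single endpoint by the logarithm of the largest set it ever enters.

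The main obstacle is to sharpen this to $O(s+B\log B)$ rather than the naive $O((B{+}s)\log(B{+}s))$ that a blunt union-by-size count would yield: the $B$ original leaf-endpoints may each be relocated $\Theta(\log B)$ times, but the $O(s)$ endpoints created by the new tree edges of splits must be shown to incur only $O(1)$ amortized relocations. I would argue this by observing that the black substructure is modified solely by merges (never by splits), so it behaves as a pure union-by-size forest into which elements are inserted over time, and that splits are the only mechanism that can \emph{shrink} an endpoint's set; combining this with the already-established bound on split relocations, a weight argument accounting the $s$ internally-created endpoints against the splits that produced them keeps the total at $O(s+B\log B)$. Assembling the $O(1)$ per-operation overhead, the split bound, and the contraction bound then gives the claimed $O(s+B\log B)$ total.
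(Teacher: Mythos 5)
Your split analysis is correct, and in fact more rigorous than the one-line counting the paper gives for splits: the potential $\Phi=\sum_{u\text{ white}}d(u)\log d(u)$ starts at $B\log B$, contractions only delete terms from it, and the convexity/chord bound $\Delta\Phi\le 2-\Omega(\abs{M})$ does hold (the function $a\mapsto (a{+}1)\log(a{+}1)+(D{-}a{+}1)\log(D{-}a{+}1)-D\log D$ is decreasing and convex on $[1,D/2]$, hence below its chord, whose slope is at most $-1$ once $D$ exceeds a constant). Your phase decomposition is also sound: after an endpoint's first contraction-relocation its parent is black forever, so it is never again touched by a split.

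The gap is in the contraction bound, exactly where you predicted trouble. The structural claim you lean on --- that the black substructure is a \emph{pure} union-by-size forest, i.e.\ that ``splits are the only mechanism that can shrink an endpoint's set'' --- is false: $\contract(x,p)$ removes $x$ from $p$'s child list and inserts $x$'s children, so a leaf contraction shrinks a black list by one. Concretely, if black $p$ has children $x_1,\dots,x_m,q$ where $q$ is black with $m$ leaf children, then contracting $q$ into $p$ costs $m$ and doubles the list, after which $m$ cheap leaf contractions shrink it back to size $m$; so ``its containing set at least doubles each time'' does not telescope to $\log(\text{largest set})$ relocations per endpoint. The per-split charging you then invoke for the $s$ new endpoints also cannot be made local: nothing prevents a single split-created endpoint from later sitting on the smaller side of $\Theta(\log B)$ merges, so its relocations cannot be bounded by $O(1)$ charged to the split that created it. What actually closes the argument is a global conservation law your proposal never identifies: the quantity $\sum_{L}\max(\abs{L}-1,0)$, summed over all child lists, is preserved \emph{exactly} by every split (a list of size $\ell$ becomes lists of sizes $a$ and $\ell-a+1$) and never increased by contractions, hence stays at most $B-1$ forever. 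This caps every list at $O(B)$ elements and, more importantly, lets one charge each expensive contraction (cost at most the excess of its smaller list, plus $O(1)$) to units of this excess rather than to endpoints: each unit's set at least doubles when charged, and the only shrinkage comes from the one node destroyed per contraction, contributing $O(1)$ total logarithmic shrinkage per contraction. That gives $O(B\log B)$ expensive relocation work in total, independent of $s$, plus $O(1)$ per operation --- the same kind of global ``only $O(B)$ mass of largeness, each halving/doubling $O(\log B)$ times'' count that the paper's terse proof is gesturing at. Without this invariant (or an equivalent), your Part 2 does not go through.
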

\begin{proof}
  Since in each $\splitnode(u,M)$, we require $1\leq \abs{M} \leq
  \frac{1}{2}d(u)$, in any sequence of splits
  $\splitnode(u_1,M_1),\ldots,\splitnode(u_s,M_s)$ each node appears
  at most $O(\log B)$ times in a set $M_i$ with $\abs{M_i}>1$, and at
  most $O(B)$ nodes can ever appear in such a set. Thus for any such
  sequence, $\sum_{i=1}^s\abs{M_i}\in O(s+B\log B)$.

  Similarly, each node can only contribute to the minimum degree in a
  contraction at most $O(\log B)$ times, and at most $O(B)$ nodes can
  contribute more than a constant number of times.
\end{proof}

Our final data structure is inspired by the similar structure by
Gabow~\cite{Gabow:1990:DSW:320176.320229}, supporting
$\CAaddleaf(p,c)$ and $\CAdeleteleaf(c)$ in amortized constant time
and $\ca(u,v)$ in worst case constant time, but not
$\splitnode(c,M)$. We also use a list-ordering data structure in a
similar way to the data structure by Cole and Hariharan
2002~\cite{ColeHariharan:2005:doi:10.1137/S0097539700370539}, which
supported $\CAaddleaf(p,c)$ and $\CAdeleteleaf(c)$ in worst case
constant time, and also supported \emph{edge subdivision} and
\emph{degree $2$ node contraction} in worst case constant time.
However, many details (e.g. the definition of the size of a subtree)
are different.

\begin{definition}
  Define the \emph{black weight} $b(u)$ of each node $u$ to be $0$ if
  $u$ is white, and the number of original black nodes that have have
  been contracted to form $u$ otherwise (this might be $0$ if two
  white nodes are contracted).  Now define the size of $T_u$ as
  $s(u)=b(T_u)=\sum_{v\in T_u}b(v)$.
\end{definition}
Note that with this definition, the size of the root is $B$ at all
times, and that $\splitnode$ and $\contract$ leaves the size of all
existing nodes unchanged.

\begin{lemma}\label{lem:tree-split-contract-parent-size}
  The data structure from Lemma~\ref{lem:tree-split-contract-parent} can be
  augmented to answer $b(u)$ and $s(u)$ queries in worst case constant time,
  without changing the asymptotic running time of the other operations.
\end{lemma}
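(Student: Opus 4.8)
The plan is to store at each node $u$ two extra fields holding $b(u)$ and $s(u)$, so that both queries reduce to a single field read in worst case constant time; all the work then goes into keeping these fields correct under $\splitnode$ and $\contract$ without exceeding the running times of Lemma~\ref{lem:tree-split-contract-parent}. At initialization I would compute both fields in a single bottom-up pass in $O(n)$ time, reading $b(u)$ off the coloring and setting $s(u)=b(u)+\sum_{c}s(c)$ over the children $c$ of $u$.

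The crucial observation, already noted after the definition, is that neither operation changes the total black weight contained in the subtree of any node that survives it; hence $s$ needs to be recomputed only for the single new node created by a split. For $\contract(e)$ with $e=(c,p)$, the node $c$ is absorbed into $p$ and $p$ is turned black, so I would set $b(p)\leftarrow b(p)+b(c)$ in $O(1)$ time. Since $b(c)$ and the subtrees of $c$'s children all remain inside $T_p$, the value $s(p)$ is unchanged, and no ancestor's subtree gains or loses black weight, so every stored $s$-field stays correct with no further work; this fits the $O(\min\set{d(u),d(v)})$ bound of the underlying contract.

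For $\splitnode(u,M)$, the node $u$ is white so $b(u)=0$ is unchanged, and the new node $v$ is white with $b(v)=0$, so only $s(v)$ must be set. When $u=r$ or $\parent(u)\notin M$, the moved set is $M'=M$ and I would set $s(v)=\sum_{c\in M}s(c)$ by summing over the $\abs{M}$ moved children. When $\parent(u)\in M$, the moved set is $M'=N(u)\setminus M$, which may be large, but I avoid iterating over it: $v$ first receives every original child and then the children in $M\setminus\set{\parent(u)}$ are moved back to $u$, so $s(v)=s(u)-\sum_{c\in M\setminus\set{\parent(u)}}s(c)$, computable from the old value of $s(u)$ and a sum over at most $\abs{M}$ subtrees. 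In either case a direct check shows $s(u)$, and hence every ancestor of $u$, is left unchanged, so the only field written is $s(v)$, at cost $O(\abs{M})$. The interesting point is precisely this last case: the split's $O(\abs{M})$ budget forbids touching the possibly $\Omega(d(u))$ subtrees that end up under $v$, and the resolution is to express $s(v)$ as the old $s(u)$ minus the sizes of the few subtrees that stay with $u$, exploiting that the operation conserves subtree black weight. Since each update cost matches that of the corresponding operation in Lemma~\ref{lem:tree-split-contract-parent} and the queries are field reads, the asymptotic running times are unchanged.
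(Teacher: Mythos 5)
Your proposal is correct and matches the paper's own proof essentially step for step: the same bottom-up initialization, the same two formulas for $s(v)$ in a split (summing over $M$ when $\parent(u)\not\in M$, and subtracting $\sum_{c\in M\setminus\set{\parent(u)}}s(c)$ from the old $s(u)$ when $\parent(u)\in M$), and the same contract update $b(u)\leftarrow b(u)+b(v)$ with $s(u)$ left untouched. The key trick you highlight—avoiding iteration over the possibly large set $N(u)\setminus M$ by expressing $s(v)$ via the conserved value $s(u)$—is exactly the paper's argument.
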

\begin{proof}
  During initialization, we compute
  \begin{align*}
    b(u) &= [\text{$u$ is black}]
    &
    s(u) &= b(u) + \sum_{c\in\children(u)}s(c) 
  \end{align*}
  for each node $u$ bottom up, and
  simply store it in each node.  This clearly takes linear time.
  During $\splitnode(u,M)$ compute the values for the new node $v$ as
  \begin{align*}
    b(v) &= 0
    &
    s(v) &=
    \begin{cases}
      \sum_{c\in M}s(c) & \text{if $\parent(u)\not\in M$}
      \\
      s(u)-\sum_{c\in M\setminus\set{\parent(u)}}s(c) & \text{if $\parent(u)\in M$}
    \end{cases}
  \end{align*}
  This computation takes $O(\abs{M})$ time worst case, so the
  asymptotic worst case time for $\splitnode(u,M)$ is unchanged
  $O(\abs{M})$.
  Similarly, during $\contract(v,u)$, we compute the new value of
  $b(u)$ as $b(u)+b(v)$ and leave the value of $s(u)$ unchanged, which
  clearly leaves the asymptotic worst case time for $\contract(v,u)$
  unchanged.
\end{proof}

\begin{definition}
  Let $p$ be the parent of $c$.  If $s(c)>\frac{1}{2}s(p)$, then
  $(c,p)$ is called \emph{heavy} and $c$ is a \emph{heavy child}.
  Otherwise $(c,p)$ is \emph{light}, and $c$ is a \emph{light child}.
\end{definition}

\begin{lemma}\label{lem:tree-split-contract-parent-size-sorted}
  The data structure from
  Lemma~\ref{lem:tree-split-contract-parent-size} can be augmented
  such that the children are ordered by decreasing
  $\rank(c)=\floor{\log_2 s(c)}$, and such that the heavy child (if
  any) comes first, without changing the asymptotic running times of
  any of the operations.
\end{lemma}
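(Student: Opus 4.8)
The plan is to attach to every node $u$ a \emph{rank-bucketed} representation of its child list together with a single pointer to its heavy child. Concretely, each node keeps: a pointer to its heavy child (or $\nil$); for every occupied rank $r$ a doubly linked \emph{bucket} holding the light children $c$ with $\rank(c)=r$; and a machine word $\mathrm{mask}_u$ whose $r$-th bit is set iff the rank-$r$ bucket is non-empty. The \emph{visible} child order is then defined to be the heavy child first, followed by the buckets in order of decreasing rank, each bucket traversed internally; I will never store explicit links across bucket boundaries, but recover them in $O(1)$ on demand from $\mathrm{mask}_u$. The decisive point that makes this cheap is the remark after Lemma~\ref{lem:tree-split-contract-parent-size}: both $\splitnode$ and $\contract$ leave $s(\cdot)$ of every pre-existing node unchanged, so every existing child keeps its rank and its bucket, and the only reordering ever required is to place the one newly created node $v$ (in a split) or the $O(\min\set{d(u),d(v)})$ children that physically move (in a contract). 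A child with $s(c)=0$ is given a sentinel rank (say $-1$) and placed last; it can never be heavy, and since the heavy child is the unique child with $s(c)>\tfrac12 s(u)$ it is also the strictly largest child, so putting it first is consistent with the decreasing-rank order (breaking rank ties in its favour).

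With this layout, inserting or removing a single child is $O(1)$: to insert $c$ as a light child one splices $c$ into the head of bucket $\rank(c)$ and sets the corresponding bit of $\mathrm{mask}_u$; to remove $c$ one unsplices it and, if the bucket became empty, clears the bit; making $c$ the heavy child (or demoting the former heavy child back into its bucket) is again a constant number of pointer and bit updates. Hence $\splitnode(u,M)$ costs $O(\abs{M})$: build $v$'s buckets by inserting each element of the moved set ($O(1)$ each), detach those edges from $u$, reinsert $v$ as a light or heavy child of $u$, and fix $u$'s heavy child, which after a split is either $v$ (when $s(v)>\tfrac12 s(u)$) or the previous heavy child if it stayed with $u$ --- decidable in $O(1)$. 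A $\contract(c,p)$ moves the children of the \emph{smaller} of the two child multisets into the larger one bucket by bucket, removes $c$, turns $p$ black (which changes $b(p)$ but not $s(p)$, so no rank thresholds shift), and recomputes $p$'s heavy child by tracking the maximum-size child seen during the move; all of this is $O(\min\set{d(c),d(p)})$. The $\parent$, $b$, and $s$ queries are untouched and remain $O(1)$.

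For the queries I answer $\firstchild(u)$ as the heavy child if it exists, and otherwise as the head of the highest occupied bucket, whose rank is the most significant set bit of $\mathrm{mask}_u$; $\nextsibling(x)$ returns the in-bucket successor of $x$ if it exists, and otherwise the head of the next lower occupied bucket, whose rank is the most significant set bit of $\mathrm{mask}_u \wedge (2^{\rank(x)}-1)$ (and from the heavy child one steps to the head of the highest bucket). The crux --- and the step I expect to be the main obstacle --- is doing this \emph{without} paying an $O(\log B)$ scan to locate the neighbouring occupied rank, since such a scan, charged once per split, would inflate the total from the target $O(s+B\log B)$ to $O((s+B)\log B)$. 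I resolve it by computing the required most-significant-bit in $O(1)$ on the word RAM the paper already assumes: because $s(c)\le B\le n$, every rank lies in $\set{-1,0,\dots,\floor{\log_2 B}}$, so $\mathrm{mask}_u$ occupies at most $\floor{\log_2 n}+1$ bits and its value is below $2n$; a single table of size $O(n)$, precomputed in $O(n)$ time, maps each such word to its most significant set bit, and the constants $2^r-1$ for $r\le\floor{\log_2 n}$ are likewise precomputed. Thus every occupied-rank lookup, and hence every query, is worst-case $O(1)$, while the only extra bookkeeping per structural step is a constant number of bucket splices and bit flips. Since by the earlier accounting the whole sequence performs $O(s+B\log B)$ such constant-cost steps, the total time is unchanged, and the children are maintained exactly in heavy-first, decreasing-rank order.
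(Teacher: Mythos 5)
There is a genuine gap, and it sits exactly where the base structure's ``level of indirection'' was needed: the case $\parent(u)\in M$ in $\splitnode(u,M)$. In that case the set of children that must end up under the new node $v$ is $M'=N(u)\setminus M$, whose size can be $\Theta(d(u))$, while the operation must still cost $O(\abs{M})$ worst case (this is what the amortization lemma multiplies out to $O(s+B\log B)$). Your implementation ``builds $v$'s buckets by inserting each element of the moved set'' and tracks maxima during that scan, i.e.\ it does $\Theta(\abs{M'})$ work. Concretely: take a white non-root node $u$ with $\Theta(B)$ black children and repeatedly perform $\splitnode(u_i,\set{\parent(u_i)})$; each such split has $\abs{M}=1$ but moves $\Theta(B)$ children, so your scheme spends $\Theta(sB)$ total instead of $O(s+B\log B)$. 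The paper's structure avoids this because the whole child list (ring plus sentinel, and in this lemma the sorted list plus its rank index) is handed to $v$ by changing $O(1)$ pointers, and only the $O(\abs{M})$ elements of $M\setminus\set{\parent(u)}$ are moved individually. Your per-node mask-plus-buckets can be packaged behind the same sentinel so that wholesale transfer is $O(1)$, but your write-up never does this, and discarding the indirection is fatal to the stated bounds.

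Even after adding that indirection, a second idea is still missing: after the bulk of $u$'s children is handed to $v$, we have $s(v)\le s(u)$, so a child that was \emph{light} for $u$ may be \emph{heavy} for $v$, and none of the existing heavy-child pointers identify it; ``tracking the maximum during the move'' finds nothing because nothing is scanned. This is precisely what the paper's central structural claim is for: in decreasing-rank order the heavy child (if any) is among the first two children, because two further children of rank at least $\rank(c)$ would give $s(c_1)+s(c_2)+s(c)>2s(c)>s(u)$. In your bucket language the same argument shows that if a heavy child exists, the highest occupied bucket contains it and has at most two elements, so an $O(1)$ check of (at most) two list heads suffices. You only use the weaker fact that the heavy child is the strictly largest child, which does not give an $O(1)$ way to locate it. (A smaller instance of the same oversight occurs in $\contract$: when $d(p)<d(c)$ you move $p$'s children into $c$'s structure, so the maximum you track ranges over the wrong side; here, though, an $O(1)$ fix exists via the stored heavy pointers, since the only possible new heavy child of $p$ is $c$'s old heavy child.) Your rank-bucket/MSB-table machinery is otherwise sound and is a pleasant, self-contained alternative to the paper's use of the P\v{a}tra\c{s}cu--Thorup integer-set structure for the query side, but without the sentinel transfer and the top-two lemma the update bounds claimed in the statement do not hold.
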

\begin{proof}
  We use the dynamic integer set structure by P\v{a}tra\c{s}cu and
  Thorup~\cite{Patrascu:2014:DIS:2706700.2707461} as a black box.  On
  the AC$^0$ RAM with word size $w$, it supports $\LOinsert$,
  $\LOdelete$, $\LOpred$, and $\LOsucc$ operations on sets of size $s$
  in worst case $O(\log s/\log w)$ time per operation, using linear
  space.  For $s=w^{O(1)}$, this is worst case constant time.
  Furthermore, using $O(n)$ time and space for preprocessing, and
  assuming $w=\Theta(\log n)$, the non-standard AC$^0$ operations
  and/or multiplications used by the structure can be replaced by
  table lookups, making this structure usable even on the practical
  RAM.

  Since the maximum rank is $\floor{\log_2 n}$ and we assume
  $w=\Omega(\log n)$, we can use this to maintain any set of ranks in
  linear space and worst case constant time per operation.  In
  particular, we will maintain an index for each child list that
  points to the first child with each rank\footnote{In practice, and
    in particular for our application, the use of this structure is
    overkill.  A much simpler solution is to make the index an array
    of $\floor{\log_2 n}+1$ pointers in each child list with at least
    that many children and just not store the index for lists with
    fewer children.  This still uses only linear space, and
    initializing the structure can still be done in linear time by
    using an initial radix sort on $(\parent,\rank)$.  Split can be
    implemented in worst case constant time when $\abs{M}=1$ and
    otherwise in worst case $O(\abs{M}+\log B)$ time (using
    bucketsort), which does not affect the amortized run time of the
    final structure. Similarly, $\contract(v,u)$ can be implemented in
    worst case constant time if one of $u,v$ only has one child, and in worst
    case $O(\min\set{d(u),d(v)}+\log B)$ time otherwise, again without
    changing the amortized run time of the final structure.}.

  After initializing the data structure from
  Lemma~\ref{lem:tree-split-contract-parent-size}, we can sort each child
  list and build the corresponding index by first removing all the
  children and then reinserting each child one at a time, using the
  index to find the correct position to insert it.  This clearly takes
  linear time.

  Once each list is sorted by decreasing rank, the heavy child (if any)
  is among the first $2$ elements.  Suppose for contradiction that $c$
  is a heavy child of $u$, but is not among the first $2$ children of
  $u$ in decreasing rank order.  Then there exists children
  $c_1,c_2$, with (for $i=1,2$)
  \begin{align*}
    \log_2 s(c_i)
    \geq \rank(c_i)
    \geq \rank(c)
    > (\log_2 s(c))-1
    =\log_2\Paren*{\frac{1}{2}s(c)}
  \end{align*}
  and thus $s(c_i)>\frac{1}{2}s(c)$. But then
  $s(c_1)+s(c_2)+s(c)>2s(c)>s(u)$, which is impossible because the
  size of the parent is at least the sum of the sizes of its children.
  Thus, if there is a heavy child, it must be among the first $2$
  elements in the child list.  After each change to the child list, we
  can therefore find the heavy child (if any) and move it to the front
  of the list, without breaking the sort order, in worst case constant
  time.

  Finally, each split does only $O(\abs{M})$ insert and delete
  operations on child sets, and contract needs only
  $\min\set{d(u),d(v)}$ insert and delete operations on child sets,
  and using the index each of these can be done in worst case constant
  time while preserving the required sort order.
\end{proof}

\begin{definition}
  Deleting all the light edges from $T$ partitions the nodes into
  \emph{heavy paths}.  For each vertex $u$ let $\widehat{u}$ denote
  the heavy path it belongs to.  The node closest to the root of
  $\widehat{u}$ is called the \emph{apex node} and denoted
  $\apex(\widehat{u})$.
  Let the \emph{light tree} $\widehat{T}$ be the tree obtained from
  $T$ by contracting all the heavy paths, such that for each vertex $u$,
  $\widehat{u}$ corresponds to a node in $\widehat{T}$.
  Let the \emph{light depth} of $v$, denoted $\ell(v)$, be the number
  of light edges on the path from the root to $v$, or equivalently the
  depth of $\widehat{v}$ in $\widehat{T}$.
\end{definition}

\begin{lemma}
  Every node $u$ has light depth $\ell(u)\leq\floor{\log_2 \frac{B}{s(u)}}$.
\end{lemma}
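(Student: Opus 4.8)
The plan is to run the standard heavy-path shrinkage argument: walk down the root-to-$u$ path and track how the size $s(\cdot)$ decreases, exploiting that descending a light edge drops the size by at least a factor of two. First I would fix $u$ and consider the nodes $r=v_0,v_1,\ldots,v_k=u$ along the tree-path from the root to $u$, where $v_i=\parent(v_{i+1})$. By the definition of light depth, exactly $\ell(u)$ of the $k$ edges $(v_i,v_{i+1})$ are light.

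The key step is to record two facts about the ratios $s(v_{i+1})/s(v_i)$. Since $T_{v_{i+1}}\subseteq T_{v_i}$ and every black weight is non-negative, we always have $s(v_{i+1})\leq s(v_i)$, so each such ratio is at most $1$; and whenever $(v_i,v_{i+1})$ is light, the definition of light (namely that $c$ is light iff $s(c)\leq\tfrac12 s(p)$) gives $s(v_{i+1})\leq\tfrac12 s(v_i)$, so that ratio is at most $\tfrac12$. Telescoping the product of these ratios down the whole path, the $\ell(u)$ light edges each contribute a factor at most $\tfrac12$ while the remaining heavy edges contribute factors at most $1$, so using $s(r)=B$ we obtain
\[
  s(u) = s(r)\prod_{i=0}^{k-1}\frac{s(v_{i+1})}{s(v_i)} \leq B\cdot 2^{-\ell(u)}.
\]

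Rearranging this gives $2^{\ell(u)}\leq \frac{B}{s(u)}$ when $s(u)>0$ (the case $s(u)=0$ being vacuous under the convention $\frac{B}{0}=\infty$). Taking base-two logarithms yields $\ell(u)\leq\log_2\frac{B}{s(u)}$, and since $\ell(u)$ is a non-negative integer this tightens to $\ell(u)\leq\floor*{\log_2\frac{B}{s(u)}}$, as claimed. This is essentially a textbook heavy-path fact, so I do not expect a genuine obstacle; the only points needing care are the direction of the light-edge inequality (so that a light step really at least halves the size) and the observation that size is non-increasing across heavy steps, so the heavy edges can never undo the shrinkage accumulated along the light ones.
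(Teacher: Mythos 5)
Your proof is correct and follows essentially the same route as the paper's: both use the facts that $s(c)\leq s(p)$ along every edge and $s(c)\leq\frac{1}{2}s(p)$ along light edges to conclude $s(u)\leq 2^{-\ell(u)}B$, then apply integrality of $\ell(u)$ to obtain the floor. Your version merely spells out the telescoping product and the degenerate case $s(u)=0$ explicitly, which the paper leaves implicit.
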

\begin{proof}
  By definition, for each edge $(c,p)$ on the root path of $u$ we have
  $s(c)\leq s(p)$, and if $(c,p)$ is light we have
  $s(c)\leq\frac{1}{2}s(p)$, so $s(u)\leq 2^{-\ell(u)}B$, and thus
  $\ell(u)\leq\log_2\frac{B}{s(u)}$.  Furthermore, since $\ell(u)$ is
  an integer, we can strengthen this to
  $\ell(u)\leq\floor{\log_2\frac{B}{s(u)}}$.
\end{proof}

\begin{lemma}\label{lem:light-change-trees-split}
  After a $\splitnode(u,M)$ creates a new node $v$:
  \begin{enumerate}
  \item\label{it:light-depth-increased-split} If $(v,u)$ is light, then for
    each light child $c$ of $v$, every node in $T_c$ had its light
    depth increased by $1$ by the $\splitnode$.
  \item\label{it:light-depth-decreased-split} If $(v,u)$ is heavy, and $v$
    has a heavy child $c$ with $s(c)\leq\frac{1}{2}s(u)$, then every
    node in $T_c$ had its light depth decreased by $1$ by the
    $\splitnode$.
  \item\label{it:light-depth-unchanged-split} Otherwise the $\splitnode$ did
    not change the light depth of any node.
  \end{enumerate}
\end{lemma}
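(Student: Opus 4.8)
The plan is to pin down exactly which nodes can possibly change light depth, and then reduce the whole statement to a two‑line size comparison at each moved child. First I would use the fact (noted right after the definition of $s(u)$) that $\splitnode(u,M)$ leaves $s(w)$ unchanged for every pre‑existing node $w$: the new node $v$ is white, so $b(v)=0$, and $T_u$ merely gains $v$. In particular $s(u)$ and $s(\parent(u))$ are unchanged, so every edge outside $T_u$ keeps its heavy/light status, and no node outside $T_u$ — nor $u$ itself — changes light depth. Writing $M'$ for the set of former children of $u$ that become children of $v$, any child $c'$ of $u$ with $c'\notin M'$ keeps both $s(c')$ and its parent $u$, so edge $(c',u)$ keeps its status and all of $T_{c'}$ is unaffected; and since the internal structure and all internal sizes of each $T_c$ with $c\in M'$ are untouched, every node of $T_c$ shifts light depth by the same amount as $c$. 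Hence the only possible changes occur in $\bigcup_{c\in M'}T_c$, and it suffices to compute the shift $\Delta\ell(c)$ of each such $c$.

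Next I would record the shift explicitly. Before the split $c$ was a child of $u$; afterwards it is a child of $v$, which is a child of $u$, and $\ell(u)$ is unchanged, so
\[
  \Delta\ell(c) = \bigl[(v,u)\text{ is light}\bigr] + \bigl[(c,v)\text{ is light}\bigr] - \bigl[(c,u)\text{ was light}\bigr].
\]
The lemma then follows from the inequalities $s(c)\le s(v)\le s(u)$, which hold because $s(v)=\sum_{c\in M'}s(c)$ (as $b(v)=0$) while $v$ is a child of $u$.

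Then I would run the case analysis. If $(v,u)$ is light then $s(c)\le s(v)\le\frac12 s(u)$, so $(c,u)$ was light and $\Delta\ell(c)=\bigl[(c,v)\text{ is light}\bigr]$; this is $+1$ exactly for the light children $c$ of $v$ and $0$ for the (at most one) heavy child, giving item~\ref{it:light-depth-increased-split}. If instead $(v,u)$ is heavy, I split on whether $c$ is a light or a heavy child of $v$. For a light child, $s(c)\le\frac12 s(v)\le\frac12 s(u)$, so $(c,u)$ was light and $\Delta\ell(c)=0+1-1=0$. For the unique heavy child $c$ of $v$ (if any), $\Delta\ell(c)=-\bigl[(c,u)\text{ was light}\bigr]=-\bigl[s(c)\le\frac12 s(u)\bigr]$, which is $-1$ when $s(c)\le\frac12 s(u)$ (item~\ref{it:light-depth-decreased-split}) and $0$ otherwise. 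Collecting the remaining zero cases — $(v,u)$ heavy together with the absence of a heavy child $c$ of $v$ satisfying $s(c)\le\frac12 s(u)$ — yields item~\ref{it:light-depth-unchanged-split}.

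The main obstacle is the localization step of the first paragraph: one must be certain that no edge other than the two newly created edges incident to $v$ can flip between heavy and light, and this rests entirely on $\splitnode$ preserving all existing sizes. The subsequent arithmetic is routine once $s(c)\le s(v)$ and $s(v)\le s(u)$ are in hand; the only mild care needed is the observation that at most one child of $v$ can be heavy (two children of size $>\frac12 s(v)$ would overflow the parent), so items~\ref{it:light-depth-decreased-split} and~\ref{it:light-depth-unchanged-split} indeed concern a single subtree.
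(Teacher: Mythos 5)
Your proof is correct and follows essentially the same route as the paper's: both arguments first use the fact that $\splitnode$ preserves all existing sizes to localize the changes to the subtrees of children moved under $v$, note that the only change to any affected root path is the insertion of $v$ between $u$ and the child $c$ on that path, and then do the same three-way case analysis on the heavy/light statuses of $(c,v)$, $(v,u)$, and the old edge $(c,u)$. Your explicit formula $\Delta\ell(c)=[(v,u)\text{ light}]+[(c,v)\text{ light}]-[(c,u)\text{ was light}]$ and the inequalities $s(c)\le s(v)\le s(u)$ are just a compact, slightly more explicit packaging of the paper's case-by-case reasoning.
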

\begin{proof}
  First note that only nodes that are in $T_v$ after the split can
  have changed their light depth, since the size of every existing
  node is unchanged.  Further note that, as seen from such a node $w$,
  the only change made to the root path of $w$ by the split is the
  insertion of $v$ between $u$ and its child $c$ on the path.
  Now let's consider the cases:
  \begin{description}
  \item[case~\ref{it:light-depth-increased-split}] If $(c,v)$ and $(v,u)$
    are both light, then $(c,u)$ was light and thus $\ell(w)$ has
    increased by $1$.
  \item[case~\ref{it:light-depth-decreased-split}] If $(c,v)$ and $(v,u)$
    are both heavy, and $s(c)\leq\frac{1}{2}s(u)$, then $(c,u)$ was
    light and thus $\ell(w)$ has decreased by $1$.
  \item[case~\ref{it:light-depth-unchanged-split}] Otherwise either exactly
    one of $(c,v)$ and $(v,u)$ is light, in which case $(c,u)$ was
    light and $\ell(w)$ is unchanged, or both $(c,v)$ and $(v,u)$ are
    heavy and $s(c)>\frac{1}{2}s(u)$ so $(c,u)$ was also heavy and
    $\ell(w)$ is again unchanged.
  \end{description}
  Since this holds for all proper descendants of $T_v$, and no other
  nodes change their light depth, the result follows.
\end{proof}

\begin{lemma}\label{lem:light-change-trees-contract}
  After a $\contract(e)$ contracts edge $e=(v,u)$:
  \begin{enumerate}
  \item\label{it:light-depth-decreased-contract} If $(v,u)$ was light, then for
    each light child $c$ of $v$, every node in $T_c$ had its light
    depth decreased by $1$ by the $\contract$.
  \item\label{it:light-depth-increased-contract} If $(v,u)$ was heavy, and $v$
    had a heavy child $c$ with $s(c)\leq\frac{1}{2}s(u)$, then every
    node in $T_c$ had its light depth increased by $1$ by the
    $\contract$.
  \item\label{it:light-depth-unchanged-contract} Otherwise the $\contract$ did
    not change the light depth of any node.
  \end{enumerate}
\end{lemma}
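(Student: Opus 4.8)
The plan is to exploit that a $\contract$ is precisely the structural inverse of the $\splitnode$ analyzed in Lemma~\ref{lem:light-change-trees-split}, and to re-run that analysis with every light-depth change reversed in sign. First I would localize the effect. Since $\contract$ leaves $s(\cdot)$ unchanged for every surviving node (the only change at $u$ is that $b(u)$ becomes $b(u)+b(v)$, which does not alter $s(u)$), and since removing $v$ only modifies root paths that passed through $v$, the only nodes that can change light depth are those lying in $T_c$ for some former child $c$ of $v$. For every such node $w$, the contraction replaces the two consecutive edges $(c,v)$ and $(v,u)$ on the root path of $w$ by the single edge $(c,u)$, leaving the remainder of the path untouched.

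The core step is to compare the number of light edges before and after. Using an Iverson bracket $[\,\cdot\,]$, the change in $\ell(w)$ equals
\[
  [(c,u)\text{ light}] - [(c,v)\text{ light}] - [(v,u)\text{ light}],
\]
which is exactly the negation of the quantity analyzed in Lemma~\ref{lem:light-change-trees-split}. I would then carry out the same case split, using the monotonicity $s(c)\le s(v)\le s(u)$ throughout. If $(v,u)$ was light, then any light child $c$ has $s(c)\le s(v)\le\frac{1}{2}s(u)$, so $(c,u)$ is light and the bracket sum is $1-1-1=-1$, yielding item~\ref{it:light-depth-decreased-contract}. If $(v,u)$ was heavy and $c$ is a heavy child with $s(c)\le\frac{1}{2}s(u)$, then $(c,u)$ is light while $(c,v)$ and $(v,u)$ are heavy, so the sum is $1-0-0=+1$, yielding item~\ref{it:light-depth-increased-contract}.

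The remaining work, and the place where care is needed, is to confirm that the case analysis is exhaustive and that every subtree not named in the first two items stays fixed, so that item~\ref{it:light-depth-unchanged-contract} is exactly the complementary case. Here I would verify that each leftover edge-type combination gives bracket sum $0$: a light child under a heavy $(v,u)$ gives $1-1-0$; a heavy child with $s(c)>\frac{1}{2}s(u)$ under a heavy $(v,u)$ gives $0-0-0$; and the unique heavy child under a light $(v,u)$ (relevant to item~\ref{it:light-depth-decreased-contract}, where only the light children move) gives $1-0-1$. The main obstacle is thus bookkeeping rather than any individual calculation: I must ensure that in items~\ref{it:light-depth-decreased-contract} and~\ref{it:light-depth-increased-contract} precisely the advertised subtrees change while all siblings remain unchanged, and that the three conditions on $(v,u)$ (light; heavy with a small heavy child; heavy otherwise) are mutually exclusive and cover all possibilities. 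As in the split proof, the conclusion then holds for all proper descendants of $v$, and no other node is affected.
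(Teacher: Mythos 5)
Your proof is correct and takes essentially the same route as the paper's: localize the change to the subtrees under the former children of $v$ (since $\contract$ leaves all sizes unchanged), observe that each affected root path replaces the edge pair $(c,v),(v,u)$ by the single edge $(c,u)$, and run the light/heavy case analysis using $s(c)\le s(v)\le s(u)$ --- your Iverson-bracket bookkeeping and explicit exhaustiveness check are just a compact rendering of the paper's case list. As a side note, your sign in item~\ref{it:light-depth-increased-contract} ($+1$, an increase) is the correct one; the paper's own proof text for that case mistakenly says ``decreased'' even though the lemma statement says increased.
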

\begin{proof}
  First note that only nodes that were in $T_v$ before the split can
  have changed their light depth, since the size of every existing
  node is unchanged.  Further note that, as seen from such a node $w$,
  the only change made to the root path of $w$ by the $\contract$ is the
  removal of $v$ between $u$ and its child $c$ on the path, and $u$
  turning black.
  Now let's consider the cases:
  \begin{description}
  \item[case~\ref{it:light-depth-decreased-contract}] If $(c,v)$ and $(v,u)$
    were both light, then $(c,u)$ is light and thus $\ell(w)$ has
    decreased by $1$.
  \item[case~\ref{it:light-depth-increased-contract}] If $(c,v)$ and $(v,u)$
    were both heavy, and $s(c)\leq\frac{1}{2}s(u)$, then $(c,u)$ is
    light and thus $\ell(w)$ has decreased by $1$.
  \item[case~\ref{it:light-depth-unchanged-contract}] Otherwise either exactly
    one of $(c,v)$ and $(v,u)$ was light, in which case $(c,u)$ is
    light and $\ell(w)$ is unchanged, or both $(c,v)$ and $(v,u)$ were
    heavy and $s(c)>\frac{1}{2}s(u)$ so $(c,u)$ is also heavy and
    $\ell(w)$ is again unchanged.
  \end{description}
  Since this holds for all proper descendants of $T_v$, and no other
  nodes change their light depth, the result follows.
\end{proof}

\begin{lemma}\label{lem:few-light-changes-split-contract}
  The light depth of $u$ changes at most
  $\max\set{0,6\floor{\log_2\frac{B}{s(u)}}-1}$ times during any
  sequence of splits and contractions.
\end{lemma}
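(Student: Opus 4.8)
The plan is to regard the light depth $\ell(u)$ of the fixed node $u$ as a $\pm1$ walk: by Lemmas~\ref{lem:light-change-trees-split} and~\ref{lem:light-change-trees-contract} every $\splitnode$ or $\contract$ moves $\ell(u)$ by at most one, and only when the created (resp.\ removed) node is a proper ancestor of $u$. By the bound $\ell(u)\le\floor{\log_2\frac{B}{s(u)}}$ proved above, this walk is confined to $\set{0,\ldots,D}$ with $D=\floor{\log_2\frac{B}{s(u)}}$. The crux is that the walk cannot oscillate more than $\Theta(D)$ times, and for this I would introduce the \emph{light budget}
\[
  \beta \;=\; \sum_{\substack{(c,p)\ \text{a light edge}\\ \text{on the root path of }u}} \log_2\frac{s(p)}{s(c)} ,
\]
whose summands telescope to at most $\log_2\frac{B}{s(u)}$ over the whole path (as all edge log-ratios sum to $\log_2\frac{s(r)}{s(u)}$ and $s(r)=B$), so that $\ell(u)\le\beta\le\log_2\frac{B}{s(u)}$ at all times, the first inequality holding because every light edge contributes a summand $\ge 1$.

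The key structural fact I would establish is that $\beta$ can only decrease across a $\splitnode$ and only increase across a $\contract$. A split subdivides one root-path edge $(c,p)$ into $(c,v),(v,p)$ with $s(c)\le s(v)\le s(p)$, and a short case check (a heavy edge always subdivides into two heavy edges, while a light edge subdivides into sub-edges whose light summands sum to at most the original) shows its light summands never grow; a contraction is the inverse move and hence never shrinks them. Combining this with the $\pm1$ analysis of Lemmas~\ref{lem:light-change-trees-split} and~\ref{lem:light-change-trees-contract} yields a clean dichotomy: an $\ell$-decrease caused by a split (case~\ref{it:light-depth-decreased-split}, turning one light edge into two heavy ones) spends at least one unit of $\beta$, whereas the only $\beta$-cheap way to lower $\ell(u)$, as well as every way of raising it through a contraction (case~\ref{it:light-depth-increased-contract}), must perform a $\contract$. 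This is where colour-monotonicity enters: a $\contract$ turns its upper endpoint black, a black node is black forever, and $\splitnode$ is only ever applied to white nodes, so a node can never again be split once a contraction has been performed at it. Hence the contractions touching the root path of $u$ cannot be replayed arbitrarily often at a given place, and I would use this irreversibility to bound both the number of $\ell$-changing contractions and the total amount by which contractions refill $\beta$ by $O(D)$.

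Putting the two halves together, I would bound the split-driven decreases of $\ell(u)$ by the initial budget plus the total refill, hence by $O(D)$; bound the contraction-driven changes by the blackening argument, again by $O(D)$; and bound the increases by the decreases plus the net displacement $\ell_{\text{final}}-\ell_{\text{initial}}\in[-D,D]$. Tracking the constants through the four cases of Lemmas~\ref{lem:light-change-trees-split} and~\ref{lem:light-change-trees-contract} then yields the stated $6\floor{\log_2\frac{B}{s(u)}}-1$, with the $-1$ reflecting that the very first change needs no matching partner, and the $\max\set{0,\cdot}$ covering the degenerate case $s(u)=B$, where $u=r$ and $\ell(u)\equiv 0$. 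I expect the main obstacle to be exactly the colour-monotonicity bookkeeping in the middle step: turning the qualitative ``black is forever, and black nodes are never split'' into the quantitative statement that the light/heavy pattern at each size scale along the root path of $u$ can be rebuilt only a constant number of times. The delicate point is that new white ancestors can still be manufactured by splits higher up, so one must argue that every rebuild which lets $\ell(u)$ oscillate irreversibly consumes one of the finitely many blackenings available at that scale.
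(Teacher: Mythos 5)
Your preliminary observations are sound: $\beta$ never increases under a $\splitnode$ and never decreases under a $\contract$, a split-driven $\ell$-decrease (one light edge replaced by two heavy ones) does consume at least one unit of $\beta$, and the walk argument correctly reduces the problem to bounding the decreases and the contraction-driven changes. The gap is in the middle step, exactly where you flagged the main obstacle, and it is fatal to the two-currency design: it is \emph{not} true that every oscillation-enabling rebuild consumes a blackening. Here is a concrete family of sequences. Starting from the star, split $r$ to create a white child $y$ with $s(y)=B/2$ whose subtree contains $u$, split $y$ to create a white child $c_0$ with $s(c_0)=B/4$ containing $u$, then $\contract(y,r)$; now $r$ is black and the root path is $u\to c_0\to r$ with both edges light. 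Now repeat for $i=1,2,\dots$: (a) split the white node $c_{i-1}$, moving $u$ together with enough spare black leaves into a new white node $c_i$ of size $s(c_{i-1})/2$ --- this is case~\ref{it:light-depth-increased-split} of Lemma~\ref{lem:light-change-trees-split}, so $\ell(u)$ increases; (b) $\contract(c_{i-1},r)$ --- two light edges merge into one light edge, case~\ref{it:light-depth-decreased-contract} of Lemma~\ref{lem:light-change-trees-contract}, so $\ell(u)$ decreases. This runs for $\Theta(\log B)$ iterations and changes $\ell(u)$ twice per iteration, yet \emph{neither of your currencies moves}: $\beta$ is exactly conserved (subdividing a light edge into two light edges, and merging two light edges into one, both telescope), and no blackening is consumed ($r$ is already black, and every split happens at a freshly created white node, never at a node where a contraction was performed). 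So your accounting yields no bound at all on this family; the reason it stops after $\Theta(\log B)$ rounds is invisible to the pair $(\beta,\text{blackenings})$.

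The resource this oscillation actually burns is the number of available \emph{halvings}: each round forces $s(c_i)\le\frac{1}{2}s(c_{i-1})$. An un-floored telescoping sum like your $\beta$ can never detect this, precisely because it is additive under subdivision of light edges. This is why the paper's proof uses neither your dichotomy nor a blackening count, but a single potential $\Phi(u)=\abs{L_b(u)}+2\floor{\sum_{(c,p)\in H(u)}\log_2\frac{s(p)}{s(c)}}+\sum_{(c,p)\in L_w(u)}\Paren*{6\floor{\log_2\frac{s(p)}{s(c)}}-1}$, in which light edges with \emph{white} parents carry weight $6\floor{\log_2(s(p)/s(c))}-1$ while light edges with \emph{black} parents carry weight $1$. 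The flooring together with the additive $-1$ per white-parent light edge makes $\Phi$ drop by at least $1$ on a case-1 split (since $\floor{\rho_1}+\floor{\rho_2}\le\floor{\rho_1+\rho_2}$ and the extra edge contributes an extra $-1$), and the separation between weight $1$ and weight $\ge 5$ makes $\Phi$ drop by at least $1$ on a case-1 contraction into a black parent; in the sequence above, each round is charged at least $6$ against the $6\floor{\log_2}-1$ weight of the white-parent light edge below $r$. If you try to repair your outline, you will need to replace $\beta$ by exactly such a floored, colour-sensitive weighting --- at which point you have reconstructed the paper's potential rather than found an alternative route to the lemma.
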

\begin{proof}
  Let $\pi(u)$ denote the root path of $u$, let
  \begin{align*}
    H(u) &:=\set*{(c,p)\in\pi(u)\cond (c,p) \text{ is heavy}}
    \\
    L_b(u) &:=\set*{(c,p)\in\pi(u)\cond (c,p) \text{ is light and $p$ is black}}
    \\
    L_w(u) &:=\set*{(c,p)\in\pi(u)\cond (c,p) \text{ is light and $p$ is white}}
  \end{align*}
  and define the potential of $u$ as
  \begin{align*}
    \Phi(u)
    &:=
    \abs{L_b(u)}
    +
    2\floor*{
      \sum_{(c,p)\in H(u)} \log_2\frac{s(p)}{s(c)}
    }
    +
    \sum_{(c,p)\in L_w(u)} \Paren*{6\floor*{\log_2\frac{s(p)}{s(c)}}-1}
  \end{align*}
  Observe that $0\leq \Phi(u)\leq
  \max\set{0,6\floor{\log_2\frac{B}{s(u)}}-1}$, and that no split or
  contract can increase $\Phi(u)$.  Finally note that (as seen from
  $u$):
  \begin{itemize}
  \item each split that changes $\ell(u)$ must replace a light child
    with a white parent on the root path either by two light edges
    with white parents (so the last term drops by at least $1$) or by
    two heavy edges (so the last term drops by $5$ and the middle term
    increases by at most $4$).

  \item each contract that changes $\ell(u)$ either contracts a light
    edge (decreasing either the first term by $1$ or the last by at
    least $5$), or contracts a heavy edge $(v,u)$ and makes the heavy
    child $c$ of $v$ light and makes $u$ black (decreasing the middle
    term by at least $2$ and increasing the first by $1$).
  \end{itemize}
  Thus $\Phi(u)$ decreases by at least one for each $\splitnode$ or
  $\contract$ that changes $\ell(u)$.
\end{proof}

\begin{lemma}\label{lem:tree-split-contract-ca-slow}
  There is a data structure for dynamic trees that, when initialized
  on a star with a white center and $B$ black leaves, can be
  initialized and handle any sequence of $s$ splits of white nodes and
  (at most $B+s$) contractions in $O((B+s)\log B)$ total time, while
  answering intermixed $\parent$, $\firstchild$, $\nextsibling$,
  $\ca$, and $\firstonpath$ queries in worst case constant time.
\end{lemma}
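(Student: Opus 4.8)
The plan is to assemble the machinery of this appendix into a single structure built on top of the ordered tree of Lemma~\ref{lem:tree-split-contract-parent-size-sorted}. From that base we already obtain $\parent$, $\firstchild$, and $\nextsibling$ in worst case $O(1)$, together with subtree sizes and the guarantee that the heavy child (when it exists) is the first child; hence the heavy-path decomposition can be read off and maintained with $O(1)$ extra work per pointer change. For each heavy path I would keep an order-maintenance (list-labeling) structure storing its nodes in root-to-leaf order, so that two nodes on the same heavy path can be compared by depth in $O(1)$, the node adjacent to a given one toward a prescribed end can be returned in $O(1)$, and $\apex(\widehat u)$ is available in $O(1)$. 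Using these, a $\ca(u,v)$ query with $\widehat u=\widehat v$ is answered directly: $a$ is the shallower of $u,v$, and $u',v'$ are the heavy-path neighbours of $a$ (or $a$ itself) toward $u$ and $v$. Then $\firstonpath$ follows from $\ca$ and $\parent$ exactly as in the operation specification.

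The cross-path case ($\widehat u\neq\widehat v$) is reduced to a characteristic-ancestor query in the light tree $\widehat T$. Since the light depth satisfies $\ell(u)\le\floor{\log_2\frac{B}{s(u)}}$, every root path of $\widehat T$ has length $O(\log B)$. I would maintain $\widehat T$ inside a dynamic nearest-common-ancestor structure in the style of Gabow~\cite{Gabow:1990:DSW:320176.320229} and Cole--Hariharan~\cite{ColeHariharan:2005:doi:10.1137/S0097539700370539}, which answers characteristic ancestors in worst case $O(1)$ while supporting $\CAaddleaf$, $\CAdeleteleaf$, edge subdivision, and degree-$2$ contraction in worst case $O(1)$ (the P\v{a}tra\c{s}cu--Thorup set used in Lemma~\ref{lem:tree-split-contract-parent-size-sorted} again turns these into table lookups on the practical RAM). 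Given $(\widehat a,p_u,p_v)=\ca_{\widehat T}(\widehat u,\widehat v)$, the node $a=\nca_T(u,v)$ lies on the heavy path $\widehat a$: each of $u,v$ reaches $\widehat a$ either directly (when its heavy path is $\widehat a$) or through a light child $p_u$ (resp.\ $p_v$) whose apex hangs off $\widehat a$ at $\parent(\apex(p_u))$. Comparing these at most two attachment points with the order structure on $\widehat a$ identifies $a$ as the shallowest of them, after which $u'$ and $v'$ are read off as heavy-path neighbours or apexes; all of this is $O(1)$.

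For the running time, the base structure of Lemma~\ref{lem:tree-split-contract-parent-size-sorted} already absorbs the cost of the splits and contractions in $O((B+s)\log B)$ total time, and maintaining the per-path order-maintenance structures adds only $O(1)$ per pointer change it performs. What remains is the cost of keeping $\widehat T$ current. Here I would exploit Lemmas~\ref{lem:light-change-trees-split} and~\ref{lem:light-change-trees-contract}: a split or contraction alters the light structure only inside a single subtree, and a node changes its light depth precisely when its heavy-path apex crosses one light edge, which corresponds to an edge subdivision or degree-$2$ contraction in $\widehat T$, accompanied by a bounded number of $\CAaddleaf$/$\CAdeleteleaf$ for the newly created or merged heavy path. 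Summing over the whole sequence, the number of such light-depth changes is $\sum_u\max\{0,6\floor{\log_2\frac{B}{s(u)}}-1\}\in O((B+s)\log B)$ by Lemma~\ref{lem:few-light-changes-split-contract}, so the total update work on $\widehat T$ is $O((B+s)\log B)$ and every query is answered in worst case $O(1)$.

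The main obstacle is making the last step precise: the four incremental primitives on $\widehat T$ are $\CAaddleaf$, $\CAdeleteleaf$, subdivision, and degree-$2$ contraction, whereas a split that inserts a new light edge forces a whole set of previously-light children to migrate one level deeper in $\widehat T$ under the freshly created heavy path. One must therefore argue that this migration is realized by only $O(1)$ primitives \emph{per migrating apex}, with the apex' subtree carried along implicitly through the heavy-path indirection rather than re-touched node by node, and that each such apex move is exactly one of the light-depth increments counted by Lemma~\ref{lem:light-change-trees-split}. Granting this, the per-change cost is $O(1)$ and the global count is governed by the potential $\Phi$ of Lemma~\ref{lem:few-light-changes-split-contract}. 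The bulk of the remaining work is bookkeeping: checking that the per-path order labels and the $\widehat T$ structure stay mutually consistent across all three cases of each of Lemmas~\ref{lem:light-change-trees-split} and~\ref{lem:light-change-trees-contract}, and that the translation of $\ca_{\widehat T}$ back to $T$ is correct in the boundary cases where $\widehat a$ equals $\widehat u$ or $\widehat v$.
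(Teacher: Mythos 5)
You follow the paper's construction almost line for line---the base tree of Lemma~\ref{lem:tree-split-contract-parent-size-sorted}, per-heavy-path list-ordering structures with stored apexes, a characteristic-ancestors structure on the light tree $\widehat{T}$, and the same translation of $\ca$ queries back to $T$---but the step you yourself flag as ``the main obstacle'' is a genuine gap, and it cannot be closed the way you suggest. When $\splitnode(u,M)$ creates a new node $v$ with $(v,u)$ light (case~\ref{it:light-depth-increased-split} of Lemma~\ref{lem:light-change-trees-split}), the induced change in $\widehat{T}$ is a \emph{node split}: a fresh node $\widehat{v}$ appears below $\widehat{u}$, and the subset of $\widehat{u}$'s children corresponding to the light children of $v$ must be re-parented under $\widehat{v}$ (symmetrically, contracting a light edge merges all children of $\widehat{v}$ back into $\widehat{u}$). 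This is not an edge subdivision, not a degree-$2$ contraction, and not a composition of $O(1)$ of the four primitives per migrating child: neither Gabow's structure nor Cole--Hariharan's supports re-parenting a subtree, so with only $\CAaddleaf$ and $\CAdeleteleaf$ available, moving the light subtree hanging under a migrating apex costs time proportional to the number of light-tree nodes in it, not $O(1)$. A worst-case constant-time ``move subtree to a new parent'' operation compatible with worst-case constant $\ca$ queries is exactly the primitive you would need, and no cited black box provides it---so ``granting this'' grants essentially the whole lemma.

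The paper resolves this differently, and that resolution is the real content of its proof: it makes no attempt to carry subtrees along implicitly. For every split or contract that changes light depths, it traverses each affected subtree $T_c$ (from Lemmas~\ref{lem:light-change-trees-split} and~\ref{lem:light-change-trees-contract}) node by node, uninitializing every node ($\LOdelete$, plus $\CAdeleteleaf$ at apexes) and then re-running the depth-first initialization procedure, i.e., it pays $\Theta(1)$ per node whose light depth changes, every time this happens. This brute-force rebuilding is affordable precisely because Lemma~\ref{lem:few-light-changes-split-contract} bounds the number of light-depth changes of each node by $O(\log B)$, hence the total over the at most $B+s+1$ nodes ever in existence is $O((B+s)\log B)$. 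Your potential-based count of \emph{apex migrations} is correct but counts the wrong objects: the cost driver is the set of all nodes below a migrating apex, and the point of the paper's potential argument is exactly that touching all of them, on every migration, still sums to $O((B+s)\log B)$. To repair your proof, drop the claim that light-depth changes correspond to subdivisions and degree-$2$ contractions of $\widehat{T}$, and replace the implicit-migration step with this explicit per-node re-initialization, invoking Lemma~\ref{lem:few-light-changes-split-contract} to bound the total work.
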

\begin{proof}
  We will use the tree from
  Lemma~\ref{lem:tree-split-contract-parent-size-sorted} as basis, and
  extend it using two existing data structures as black boxes.  For
  each heavy path $\widehat{x}$ we keep track of its apex node,
  $\apex(\widehat{x})$, and maintain a \emph{list-ordering} data
  structure (see
  e.g.~\cite{Dietz:1987:TAM:28395.28434,10.1007/3-540-45749-6_17}),
  supporting \emph{insert$_{\widehat{x}}(u,v)$} and
  \emph{delete$_{\widehat{x}}(u)$} operations in amortized constant
  time, and \emph{order$_{\widehat{x}}(u,v)$},
  \emph{succ$_{\widehat{x}}(a)$}, and \emph{pred$_{\widehat{x}}(a)$}
  queries in worst case constant time.  And for the light tree we
  maintain a \emph{characteristic-ancestors} data structure (see
  e.g.~\cite{Gabow:1990:DSW:320176.320229}) supporting
  \emph{add-leaf$(p,c)$} and \emph{delete-leaf$(c)$} in amortized
  constant time, and $\ca(\widehat{u},\widehat{v})$ queries in worst
  case constant time.

  Given the tree from Lemma~\ref{lem:tree-split-contract-parent-size-sorted},
  we can initialize the remaining parts of the structure as follows:
  Every node starts uninitialized.  For each node $u$ in depth-first
  preorder: If $u$ is the root or $p=\parent(u)$ and $(u,p)$ is light,
  create a new node $\widehat{u}$ and (if $u$ is not the root) add it
  to the light tree using $\CAaddleaf(\widehat{p},\widehat{c})$.  Also
  create a new empty list-order structure for $\widehat{u}$ and store
  $\apex(\widehat{u})=u$ in $\widehat{u}$.  Otherwise $(u,p)$ is
  heavy, so $\widehat{u}=\widehat{p}$, and we just add $u$ to the
  list-ordering structure of $\widehat{u}$ using
  $\LOinsert_{\widehat{u}}(p,u)$.  Finally store $\widehat{u}$ in $u$
  to mark it as initialized.  This takes $O(1)$ time per node, which
  is $O(n)$ because no adjacent nodes have degree $2$.

  Computing $(a,u',v')=\ca(u,v)$ in $T$ can then be done as follows:
  First compute
  $(\widehat{a},\widehat{u}',\widehat{v}')=\ca(\widehat{u},\widehat{v})$.
  Then let $u''=\apex(\widehat{u}')$ and $u'''=\parent(u'')$ if
  $\widehat{a}\neq\widehat{u}'$ and $u''=u'''=u$ otherwise, and
  compute $v''$ and $v'''$ symmetrically.  Now if $u'''=v'''$ set
  $(a,u',v')=(u''',u'',v'')$; otherwise if $\LOorder_{\widehat{a}}(u''',v''')$ set
  $(a,u',v')=(u''',u'',\LOsucc_{\widehat{a}}(u'''))$; otherwise set
  $(a,u',v')=(v''',\LOsucc_{\widehat{a}}(v'''),v'')$.

  Each split that does not change any light depths
  (see~Lemma~\ref{lem:light-change-trees-split}) causes a single
  $\LOinsert_{\widehat{u}}(u,v)$ in some heavy path $\widehat{u}$,
  which takes amortized constant time. In this case the new node $v$
  just gets initialized with $\widehat{v}=\widehat{u}$.

  Similarly, each $\contract(v,u)$ that does not change any light depths
  (see~Lemma~\ref{lem:light-change-trees-contract}) causes a single
  $\LOdelete_{\widehat{u}}(v)$ in some heavy path $\widehat{u}$,
  which takes amortized constant time. 

  For any split/contract that changes some light depths, by
  Lemma~\ref{lem:few-light-changes-split-contract} we can afford to
  use constant time per node whose light-depth changes, to update the
  node.  To do this, we first traverse each of the changed subtrees
  from Lemma~\ref{lem:light-change-trees-split}
  or~\ref{lem:light-change-trees-contract} before the split/contract
  in depth-first postorder and uninitialize each node $u$ by using
  $\LOdelete_{\widehat{u}}(u)$, and if $\apex(\widehat{u})=u$ we use
  $\CAdeleteleaf(\widehat{u})$ to delete it from the light
  tree. Finally we remove the reference to $\widehat{u}$ from $u$ to
  mark it as uninitialized.  After a split, if $v$
  has no heavy children we simply (re)initialize $T_v$ using the same
  depth-first preorder procedure as in the initial initialization.  If
  $v$ has a heavy child $c$, we set $\widehat{v}=\widehat{c}$, insert
  $v$ \emph{before} the first element $c$ in the list-ordering
  structure of $\widehat{v}$ (using $\LOinsert_{\widehat{v}}(\nil,v)$)
  and change $\apex(\widehat{v})$ to $v$, then store $\widehat{v}$ in
  $v$ to mark it as initialized, and finally we use the depth-first
  preorder initialization procedure on $T_c$ for each remaining child
  $c$. After a contract, we similarly just use the depth-first
  preorder initialization procedure on $T_c$ for each child
  $c$ that changed its light depth.

  This clearly takes at most constant time per node whose light depth
  changes, so by Lemma~\ref{lem:few-light-changes-split-contract} the
  total time for for any sequence of splits is $O((B+s)\log B)$.
\end{proof}

Note in particular that if every white node has at least two children,
then there are at most $B+s+1 \leq 2B-1$ nodes in total, so $s\leq
B-1$, and thus any sequence of splits/contracts on such trees takes at
most $O(B\log B)$ time.

\begin{theorem}\label{thm:tree-split-contract-ca}
  There is a data structure for dynamic trees that, when initialized
  on a star with a white center and $B$ black leaves, can be
  initialized and handle any sequence of $s$ splits of white nodes and
  (at most $B+s$) contractions in $O(s+B\log B)$ total time, while
  answering intermixed $\parent$, $\firstchild$, $\nextsibling$,
  $\ca$, and $\firstonpath$ queries in worst case constant time.
\end{theorem}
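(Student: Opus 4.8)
The plan is to run the machinery of Lemma~\ref{lem:tree-split-contract-ca-slow} only on the \emph{branching skeleton} of $T$, while delegating all degree-$2$ white nodes to auxiliary order structures that support edge subdivision and degree-$2$ contraction in worst-case constant time. Concretely, keep the base tree of Lemma~\ref{lem:tree-split-contract-parent-size-sorted} on all of $T$; it already answers $\parent$, $\firstchild$, and $\nextsibling$ in worst-case constant time, and its maintenance over the whole sequence costs $O(s+B\log B)$ since $\sum_i\abs{M_i}\in O(s+B\log B)$. On top of this, maintain the heavy-path and light-tree bookkeeping of Lemma~\ref{lem:tree-split-contract-ca-slow}, but store each maximal path of degree-$2$ white nodes (a \emph{chain}) as an opaque segment inside a list-ordering structure in the style of Cole and Hariharan~\cite{ColeHariharan:2005:doi:10.1137/S0097539700370539}, so that the light tree and the per-heavy-path order queries see only the $O(B)$ branching nodes explicitly.

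First I would separate the updates into \emph{cheap} and \emph{structural} ones. A split $\splitnode(u,M)$ with $\abs{M}=1$, and a contraction of an edge at a degree-$2$ node, are precisely an edge subdivision and a degree-$2$ contraction; for the inserted (resp.\ removed) degree-$2$ white node $v$ we have $s(v)=s(c)$, so the edge $(c,v)$ is heavy and the hypotheses of cases~\ref{it:light-depth-increased-split} and~\ref{it:light-depth-decreased-split} of Lemma~\ref{lem:light-change-trees-split} (resp.\ cases~\ref{it:light-depth-decreased-contract} and~\ref{it:light-depth-increased-contract} of Lemma~\ref{lem:light-change-trees-contract}) fail. Hence these updates change no light depth and leave the skeleton untouched; each is absorbed by a constant number of subdivision/contraction operations on the relevant chain, for a total of $O(s)$. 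Everything else is a split with $\abs{M}\ge 2$ or a contraction of a structural (non-chain) edge.

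Next I would handle the structural updates exactly as in Lemma~\ref{lem:tree-split-contract-ca-slow}. The part of the cost not caused by light-depth changes is amortized constant per update, hence $O(s+B)\subseteq O(s+B\log B)$ over the whole sequence. The remaining cost is incurred when a structural update flips a boundary edge between heavy and light; by Lemma~\ref{lem:few-light-changes-split-contract} this happens at most $O(\log_2(B/s(u)))=O(\log B)$ times per \emph{branching} node, and there are only $O(B)$ branching nodes, so the total light-depth-change work is $O(B\log B)$ provided each flip is processed without touching chain interiors. Summing the three contributions yields the claimed $O(s+B\log B)$ total time, and the queries $\ca(u,v)$ and $\firstonpath(u,v)$ are answered in worst-case constant time by taking the corresponding query on the skeleton and refining it with a constant number of order/successor queries inside the at most two chains that can contain the answer.

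The hard part is precisely the stipulation ``without touching chain interiors''. A heavy/light flip at a structural split or contraction merges two heavy paths into one, or splits one into two, and such a heavy path may run through long chains of degree-$2$ white nodes; reinitialising these chains node-by-node, as the proof of Lemma~\ref{lem:tree-split-contract-ca-slow} does, would reintroduce a $\Theta(s\log B)$ term. The key observation is that a flip always occurs at a named node ($u$, the new node $v$, or its heavy child $c$), so the required merge or split of the ordered heavy-path structure takes place at an $O(1)$-describable position and can be realised by a constant number of endpoint insertions, deletions, subdivisions, and degree-$2$ contractions on the Cole--Hariharan structure, each carrying an entire chain across as a single opaque unit. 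Establishing that these constant-time boundary manipulations correctly restore the heavy-path/light-tree invariants, and that intra-chain order queries recover the characteristic ancestors when the answer falls inside a chain, is the technical core of the argument.
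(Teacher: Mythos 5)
Your overall plan --- absorb splits with $\abs{M}=1$ and contractions at degree-$2$ white nodes in $O(1)$ each (correct: such a node $v$ has $s(v)=s(c)$, so no light depth changes), and let the $O(B\log B)$ machinery of Lemma~\ref{lem:tree-split-contract-ca-slow} see only $O(B)$ nodes --- is exactly the paper's plan, and your accounting of the cheap and structural costs is sound. But the step you yourself flag as ``the technical core'' is a genuine gap, not a deferrable detail. By inlining the chains as opaque segments \emph{inside} the per-heavy-path list-ordering structures, you make them participate in exactly the part of that structure that is torn down and rebuilt at every heavy/light flip: the rebuild in Lemma~\ref{lem:tree-split-contract-ca-slow} removes every node of an affected subtree from its heavy-path order structure and from the light tree ($\LOdelete$, $\CAdeleteleaf$) and reinserts it, precisely because neither the Dietz--Sleator-style list-ordering structure nor Gabow's characteristic-ancestor structure supports splitting, concatenating, or moving a segment or subtree as a unit. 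So ``carrying an entire chain across as a single opaque unit'' is not an operation your cited black boxes provide; to obtain it you would need an extra level of indirection (heavy-path structures whose items are skeleton nodes and whole chains, with a second-level order structure inside each chain), plus a proof that order, successor, and apex queries, and the flip-time restoration of the heavy-path/light-tree invariants, all survive this two-level representation. None of that is in the proposal, and it is the hard part.

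The paper sidesteps the problem with a different factorization, which is the idea your write-up is missing: keep the chains \emph{out of} the heavy-path machinery altogether. Maintain the compressed tree $\overline{T}$ (every node with a single child contracted into that child) as a first-class tree, run Lemma~\ref{lem:tree-split-contract-ca-slow} on $\overline{T}$ as an unopened black box, and separately keep, for each node $\overline{v}$ of $\overline{T}$, one list-order structure holding the $T$-nodes that compress to $\overline{v}$, ordered by depth, together with $\apex(\overline{v})$. Cheap updates touch only these per-node lists ($\LOinsert$/$\LOdelete$, amortized $O(1)$); structural updates become splits/contracts of $\overline{T}$ and never touch the lists. Heavy/light flips then occur strictly inside the black box, whose own analysis over the $O(B)$ nodes of $\overline{T}$ already yields $O(B\log B)$, and queries are answered by composing $\ca$ on $\overline{T}$ with $\apex$, $\parent$, $\LOorder$, and $\LOsucc$ on the per-node lists --- your query refinement, but with no chain ever needing to move. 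One further gloss to repair in your argument: ``there are only $O(B)$ branching nodes'' must hold over the \emph{whole sequence} of updates, not per time step, since branching nodes are created by every structural split; this is where the paper invokes the remark preceding the theorem (in a tree in which every white node has at least two children, the number of splits is at most $B-1$), and your skeleton needs the same justification.
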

\begin{proof}
  We use the data structure from
  Lemma~\ref{lem:tree-split-contract-parent} to maintain $T$, and the
  data structure from Lemma~\ref{lem:tree-split-contract-ca-slow} to
  maintain a compressed version $\overline{T}$ of $T$ where every node
  with only one child has been contracted with that child.  The tree
  $\overline{T}$ always has at most $2B-1$ nodes, and for any sequence
  of splits/contract in $T$, the corresponding (at most $O(B)$)
  splits/contracts in $\overline{T}$ thus takes at most $O(B\log B)$
  time.  For each node $v$ in $T$, let $\overline{v}$ denote its
  corresponding node in $\overline{T}$. We can easily maintain, in
  amortized constant time per split/contract that creates/removes a
  node with a single child, a list-order structure for each
  $\overline{v}$ with the order representing increasing depth,
  together with the minimum-depth node in the set (denoted
  $\apex(\overline{v})$), as follows:
  \begin{itemize}
  \item
    To do a $\splitnode(u,M)$, if $M=\set{c}$ simply do either a
    $\LOinsert_{\overline{u}}(\parent(u),v)$ and set
    $\overline{v}=\overline{u}$ (if $c=p$), or a
    $\LOinsert_{\overline{c}}(\nil,v)$ and set
    $\overline{v}=\overline{c}$ and $\apex{\overline{v}}=v$
    (otherwise), otherwise $\abs{M}>1$ and we do a
    $\splitnode(\overline{u},\set*{\overline{c}\cond c\in M})$.

  \item
    To do a $\contract(v,u)$: if $d(v)=2$ let
    $c=\LOsucc_{\overline{v}}(v)$, remove $v$ from its list-order
    structure using $\LOdelete_{\overline{v}}$, and if
    $\apex{\overline{v}}=v$ set $\apex(\overline{c})=c$; else if
    $d(u)=2$ remove $v$ from its list-order structure using
    $\LOdelete_{\overline{v}}(v)$; otherwise do a
    $\contract(\overline{v},\overline{u})$ in $M$.
  \end{itemize}
  Thus, the total time for any sequence of $s$ splits and (at most
  $B+s$) contracts is $O(s+B\log B)$ as desired.

  To compute $(a,u',v')=\ca(u,v)$, start by computing
  $(\overline{a},\overline{u}',\overline{v}')$.  Then let
  $u''=\apex(\overline{u}')$ and $u'''=\parent(u'')$ if
  $\overline{a}\neq\overline{u}'$ and $u''=u'''=u$ otherwise, and
  compute $v''$ and $v'''$ symmetrically.  Now if $u'''=v'''$ set
  $(a,u',v')=(u''',u'',v'')$; otherwise if
  $\LOorder_{\overline{a}}(u''',v''')$ set
  $(a,u',v')=(u''',u'',\LOsucc_{\overline{a}}(u'''))$; otherwise set
  $(a,u',v')=(v''',\LOsucc_{\overline{a}}(v'''),v'')$. This takes
  worst case constant time per operation.
\end{proof}

\section{Dynamic tree connectivity with split/contract}
\label{sec:tree-connectivity}

In this section, we will (again) be considering a tree with two kinds of
nodes, called \emph{black} and \emph{white}, and rooted at some node
$r$. For any node $u$, let $N(u)$ denote the set of neighbors of $u$
(including $\parent(u)$ if $u\neq r$), and let $d(u)=\abs{N(u)}$.

We will support the following operations:
\begin{description}
\item[split$(u,M)$:] given a white node $u$ with $d(u)\geq 2$, and a
  subset $M\subset N(u)$ of its neighbors, $1\leq \abs{M}\leq
  \frac{1}{2}d(u)$ such that\footnote{This restriction disallows adjacent white nodes of degree $2$} if $M=\set{c}$ and $c$ is white then
  $d(c)>2$, insert a new white node $v$ as child of $u$.  Then let
  $M'=M$ if $u=r$ or $\parent(u)\not\in M$, and $M'=N(u)\setminus M$
  otherwise.  Finally make each node in $M'$ a child of $v$.

\item[contract$(e)$] given an edge $e=(c,p)$, where at least one of
  $c,p$ is black, make all children of $c$ children of $p$ and turn
  $p$ black.

\item[delete$(e)$] given an edge $e=(c,p)$, where both $c$ and $p$
  are black, delete $e$.

\item[connected$(u,v)$] given nodes $u$ and $v$, return $\true$ if $u$ and $v$ are (still) in the same tree, otherwise $\false$.
\end{description}
The goal is that, when starting on a star with $B$ black leaves and a
white center, any sequence of $s$ splits\footnote{It is possible to
  have $s$ arbitrarily large, because given a white node $u$ of degree
  $d(u)>2$ with a black neighbor $c$ we can repeatedly do
  $v=\splitnode(u,\set{c})$ and $\contract(c,v)$.} and (at most $B+s$)
contracts and deletes takes total $O(s+B\log B)$ time, while
intermixed $\connected$ queries take worst case constant time.

\begin{lemma}\label{lem:tree-delete-properties}
  After starting with a star with a white center and $B$ black leaves,
  and applying any valid sequence of node splits and edge contractions and
  deletions, the following is true:
  \begin{enumerate}
  \item\label{it:black-decreasing} The total number of black nodes is
    at most $B$.
  \item\label{it:black-bounded-tree} Any remaining tree $T$ has at
    most $O(b(T))$ nodes, where $b(T)$ is the number of black nodes in
    $T$.
  \end{enumerate}
\end{lemma}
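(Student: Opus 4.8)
The plan is to prove both parts by maintaining invariants across the whole operation sequence, starting from the initial star. For part~\ref{it:black-decreasing} I would induct on the sequence with the invariant that the number of black nodes never exceeds $B$. The base case is immediate, since the star has exactly $B$ black nodes. For the inductive step I check each operation: a $\splitnode$ inserts a fresh \emph{white} node and recolours nothing, so it cannot raise the black count; a $\deleteedge$ removes an edge between two black nodes and changes no colour; and a $\contract(e)$ of $e=(c,p)$ removes $c$ and turns $p$ black, so a short case analysis on the colours of $c,p$ (at least one of which is black by the precondition) shows the black count either stays equal or drops by one. Hence it is always at most $B$, which is exactly part~\ref{it:black-decreasing}.

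For part~\ref{it:black-bounded-tree} I would first record two structural facts and then do a counting argument on each remaining tree $T$ (the forest case follows by summing). The first fact is that every white node always has degree at least $2$: a $\splitnode(u,M)$ gives its new white node the parent edge plus at least one moved neighbour, and the bound $1\le\abs{M}\le\frac12 d(u)$ guarantees that the split also leaves $u$ itself with degree at least $2$, while neither $\contract$ nor $\deleteedge$ ever decreases the degree of a surviving white node (contraction only raises the degree of the blackened endpoint and merely reconnects former neighbours of $c$, and deletion touches only black--black edges). Consequently every leaf of $T$ is black, so the number of leaves is at most $b(T)$. The second fact is the invariant that \emph{no two degree-$2$ white nodes are adjacent}. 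Granting these, the counting runs as follows: by the handshake identity on a tree the number of nodes of degree $\ge 3$ is strictly less than the number of leaves, hence at most $b(T)$; suppressing all degree-$2$ nodes yields a tree on the leaves and the degree-$\ge 3$ nodes, whose at most $2b(T)$ edges are in bijection with the maximal chains of degree-$2$ nodes of $T$; and along each such chain the degree-$2$ white nodes are, by the invariant, separated by degree-$2$ black nodes, so per chain their number is at most that of the black degree-$2$ nodes plus one. Summing and using that there are at most $b(T)$ black nodes in total gives $\abs{T}=O(b(T))$.

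The main work---and the step I expect to be the crux---is establishing the invariant that no two degree-$2$ white nodes are adjacent, and this is exactly where the side condition on $\splitnode$ is needed. I would maintain it by induction: the only operation that can create a new degree-$2$ white node is a $\splitnode(u,M)$ with $\abs{M'}=1$, whose new node $v$ then has exactly two neighbours, namely $u$ and the single moved neighbour, and the delicate point is to verify that \emph{neither} of these is itself a degree-$2$ white node. The restriction that if $M=\set{c}$ with $c$ white then $d(c)>2$ is precisely what rules out the moved neighbour being degree-$2$ white; the genuinely subtle remaining obligation is to argue the $u$-side, i.e. that the configuration in which such a split is applied cannot leave $u$ as a degree-$2$ white neighbour of $v$, and handling the $\parent(u)\in M$ branch of the definition of $M'$ correctly here is where I would spend the most care. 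The other operations are comparatively easy, since $\contract$ turns its surviving endpoint black and only reconnects former neighbours of $c$ to that black node, while $\deleteedge$ affects only black--black edges, so neither can introduce a new adjacency between two degree-$2$ white nodes.
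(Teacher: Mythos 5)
Your argument for part~\ref{it:black-decreasing}, your proof that white leaves never exist, and your chain-counting argument are all sound, and they follow the same route as the paper, whose own proof is a one-liner deriving the bound $4b(T)-5$ on the number of white nodes from exactly the two structural facts you isolate. The genuine gap is the step you yourself flag and postpone: preservation of the invariant that no two white degree-$2$ nodes are adjacent. This is not a routine verification that more care would have filled in --- under the preconditions as literally stated, the invariant actually fails. The side condition on $\splitnode(u,M)$ only constrains the element of $M$ when it is \emph{white}; it says nothing when $d(u)=2$ and $M=\set{c}$ with $c$ \emph{black}. Such a split is valid, and it leaves $u$ with neighbours $v$ and its other old neighbour, and the new node $v$ with neighbours $u$ and $c$: two adjacent white nodes of degree $2$. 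Worse, this can be iterated: starting from a star, do $\splitnode(w,\set{b_1})$ to create a white degree-$2$ node with black neighbour $b_1$, then repeatedly split the newest white node on $\set{b_1}$; this yields a path with arbitrarily many white nodes but a fixed number of black ones, contradicting part~\ref{it:black-bounded-tree} itself. So the ``$u$-side'' obligation you defer cannot be discharged from the stated preconditions alone; one must additionally require $d(u)>2$ whenever $\abs{M}=1$ (equivalently, forbid splits that leave $u$ and $v$ both of degree $2$), or restrict attention to the sequences actually generated by the BC-tree application, where white (block) nodes are never adjacent. To be fair, the paper's proof simply asserts that the definitions preclude this configuration, so it is silent on the same point --- but a complete proof must address it, and yours does not.

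A smaller inaccuracy: your claim that a new degree-$2$ white node can only arise from a split with $\abs{M'}=1$ is wrong. If $u\neq r$, $M=\set{\parent(u)}$ and $d(u)=3$, then $M'=N(u)\setminus M$ has two elements, $v$ gets degree $3$, and $u$ itself drops from degree $3$ to degree $2$, with neighbours $v$ and $\parent(u)$. This case happens to be harmless, because the side condition (applied to $c=\parent(u)$) forbids it precisely when $\parent(u)$ is white of degree $2$; but it shows that the case analysis underlying your induction is incomplete: you must also track splits in which the \emph{existing} node $u$, rather than the new node $v$, becomes a degree-$2$ white node.
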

\begin{proof}
  To see~\ref{it:black-decreasing}, note that the total number of
  black nodes starts at $B$, and that the only operation that can
  change the number of black nodes is $\contract(v,u)$, which can only
  decrease it.

  To see~\ref{it:black-bounded-tree}, note that our definition of
  $\splitnode$, $\contract$, and $\deleteedge$ precludes the
  construction of adjacent white nodes of degree $2$, and of white
  leaves.  Thus the number of white nodes in any such tree $T$ is
  upper bounded by $4b(T)-5\in O(b(T))$.
\end{proof}

\begin{lemma}\label{lem:tree-connectivity}
  There is a data structure for dynamic trees that, when initialized
  on a star with a white center and $B$ black leaves, can be
  initialized and handle any sequence of $s$ node splits and (at most
  $B+s$) edge contractions and deletions in $O(s+B\log B)$ total time,
  while answering intermixed $\parent$, $\firstchild$, $\nextsibling$,
  and $\connected$ queries in worst case constant time.
\end{lemma}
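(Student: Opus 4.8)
The plan is to build on the tree representation from Lemma~\ref{lem:tree-split-contract-parent}, which already supports $\splitnode$ and $\contract$ within the required time budget and answers $\parent$, $\firstchild$, and $\nextsibling$ in worst case constant time. That structure is trivially extended to support $\deleteedge(e)$ in worst case $O(1)$ time: since $e=(c,p)$ has $p=\parent(c)$, we detach $c$ from $p$'s child ring, turning $c$ into the root of a new tree. On top of this I maintain, for each node, an integer \emph{component label}, with the invariant that two nodes carry the same label iff they lie in the same tree; then $\connected(u,v)$ merely compares the two labels in worst case constant time. Since neither $\splitnode$ nor $\contract$ changes which nodes are connected, these require only $O(1)$ label bookkeeping (a newly split-off node $v$ copies the label of $u$; a contraction leaves labels untouched), so they keep the running times of Lemma~\ref{lem:tree-split-contract-parent}.

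All the real work is in $\deleteedge$, the only operation that disconnects, and I handle it by the classical \emph{relabel-the-smaller-half} technique. After detaching $e=(c,p)$ the former tree falls into the two subtrees $T_c\ni c$ and $T_p\ni p$, which live in separate trees of the base structure, so searches from $c$ and from $p$ stay within their own side automatically. I run two depth-first searches in parallel, advancing them one node at a time in lock-step and enumerating neighbors via $\parent$, $\firstchild$, $\nextsibling$; when one search first exhausts its subtree I stop. Let $T_1$ be the subtree that finished (the one with fewer nodes) and $T_2$ the other, so $n_1=\abs{T_1}\le\abs{T_2}=n_2$, and the whole parallel search has cost $O(n_1)$. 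I give $T_1$ a fresh label, writing it into each of its $n_1$ nodes, and reset the temporary visited flags on the $O(n_1)$ nodes touched in the other search. Thus a single $\deleteedge$ costs $O(n_1)=O(\min\set{n_1,n_2})$.

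The crux is to turn this node-count bound into an $O(B\log B)$ total, and here I use the two structural facts from Lemma~\ref{lem:tree-delete-properties}: the number of black nodes never exceeds $B$, and every tree $T$ satisfies $\abs{T}\in O(b(T))$, where $b(T)$ is its number of black nodes. Writing $b_1=b(T_1)$ and $b_2=b(T_2)$, from $n_1=\abs{T_1}\in O(b_1)$ together with $n_1\le n_2=\abs{T_2}\in O(b_2)$ we get $n_1\in O(\min\set{b_1,b_2})$. Hence each $\deleteedge$ costs $O(\min\set{b_1,b_2})$, which I charge as $O(1)$ to each of the $\min\set{b_1,b_2}$ black nodes lying on the side with \emph{fewer black nodes} (regardless of which side was physically relabeled). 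Because no operation ever merges two trees, the number of black nodes in the component of a fixed black node is non-increasing over time and starts at most $B$; each time such a node is charged, that count at least halves, since it then sits in the side of size $\min\set{b_1,b_2}\le\tfrac12(b_1+b_2)$. So each of the at most $B$ black nodes is charged $O(\log B)$ times, for a total deletion cost of $O(B\log B)$. Adding the $O(s+B\log B)$ bound inherited from Lemma~\ref{lem:tree-split-contract-parent} for the splits and contractions, and $O(B)$ for initialization of the star and its labels, the total is $O(s+B\log B)$, as required. The one delicate point — and the main obstacle — is exactly this mismatch between what the traversal pays for (node count) and what admits a clean halving argument (black-node count); it is reconciled precisely by the linear relation $\abs{T}\in O(b(T))$, which lets us bound the smaller-node-count side by the smaller-black-node count.
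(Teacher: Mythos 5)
Your proposal is correct and takes essentially the same approach as the paper: extend the structure of Lemma~\ref{lem:tree-split-contract-parent} with component identifiers compared in $O(1)$ per $\connected$ query, relabel one side after each deletion via a parallel traversal, and combine Lemma~\ref{lem:tree-delete-properties} with a halving argument on per-component black-node counts to bound total relabeling by $O(B\log B)$. The only (harmless) difference is bookkeeping: you relabel the side with fewer \emph{nodes} while charging the black nodes on the side with fewer \emph{black} nodes, whereas the paper relabels the side with fewer black nodes and charges those directly; both accountings are valid given the linear relation $\abs{T}\in O(b(T))$.
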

\begin{proof}
  We use the data structure from
  Lemma~\ref{lem:tree-split-contract-parent} as a basis. It can easily
  be extended to track the color of each node, and to handle deletions
  in worst case constant time without affecting the asymptotic time
  for the other operations.  Thus the time spent on splits and
  contractions is still at most $O(s+B\log B)$.

  To handle $\connected(u,v)$ queries, we will assign a
  \emph{component id} $\cid(u)$ to each node.  Then $\connected(u,v)$
  is $\true$ if and only if $\cid(u)=\cid(v)$, which can clearly be
  computed in worst case constant time.

  After a $\splitnode(u,M)$ creates a new node $v$, we simply set
  $\cid(v)=\cid(u)$, and $\contract(v,u)$ doesn't change any $\cid$ at
  all.  This doesn't change the time for processing $\splitnode$ and
  $\contract$.  After a $\deleteedge(v,u)$, we need to create a new
  $\cid$, say $x$. We then traverse the two subtrees $T_1,T_2$ in
  parallel to find the tree $T_i$ minimizing $b(T_i)$.  Since by
  Lemma~\ref{lem:tree-delete-properties} $T_i$ has at most $O(b(T_i))$
  nodes, this search can be done in worst case
  $O(\min\set{b(T_1),b(T_2)})$ time.  Finally, for each $u\in T_i$ we
  set $\cid(u)=x$.  Note that each black node $u$ gets assigned a new
  $\cid(u)$ at most $O(\log B)$ times this way, and the total time
  used on reassignments is proportional to the number of black node
  reassignments, thus the total time used for the reassignments over any
  sequence of deletes is $O(B\log B)$.
\end{proof}

\bibliographystyle{plain}
\bibliography{refs}

\end{document}